\newcommand{\s}{\mathbf{s}}
\newcommand{\M}{\mathcal{M}}
\newcommand{\gd}[1]{d^{\text{greedy}}_{#1}}
\newcommand{\points}{\mathcal{P}}
\DeclarePairedDelimiter\ceil{\lceil}{\rceil}
\DeclareMathOperator*{\argmin}{arg\,min}
\newlength{\mytmpa}
\tikzset{node distance=2cm,
        vertex/.style={circle, draw=black, fill=white, thick, minimum size=10mm, font={\huge}},anchor=center,
        <->/.style={{Latex[length=2mm]}-{Latex[length=2mm]},semithick},
        ->/.style={-{Latex[length=2mm]},semithick},
        <-/.style={{Latex[length=2mm]}-,semithick}}
\begin{document}

\title{Strategic Network Creation for Enabling Greedy Routing}

\author{Julian Berger}
\email{julian.berger@student.hpi.de}
\affiliation{%
  \institution{Hasso Plattner Institute, University of Potsdam, Potsdam}
  \country{Germany}
}
\author{Tobias Friedrich}
\email{friedrich@hpi.de}
\affiliation{%
  \institution{Hasso Plattner Institute, University of Potsdam, Potsdam, \country{Germany}}
}
\author{Pascal Lenzner}
\email{pascal.lenzner@uni-a.de}
\affiliation{%
  \institution{Institute of Computer Science, University of Augsburg, Augsburg, \country{Germany}}
}
\author{Paraskevi Machaira}
\email{paraskevi.machaira@hpi.de}
\affiliation{%
  \institution{Hasso Plattner Institute, University of Potsdam, Potsdam, \country{Germany}}
}
\author{Janosch Ruff}
\email{janosch.ruff@hpi.de}
\affiliation{%
  \institution{Hasso Plattner Institute, University of Potsdam, Potsdam, \country{Germany}}
}

\newtheorem{remark}[theorem]{Remark}

\begin{abstract}

Today we rely on networks that are created and maintained by smart devices. For such networks, there is no governing central authority but instead  the network structure is shaped by the decisions of selfish intelligent agents.    
A key property of such communication networks is that they should be easy to navigate for routing data. For this, a common approach is greedy routing, where every device simply routes data to a neighbor that is closer to the respective destination.    

Networks of intelligent agents can be analyzed via a game-theoretic approach and in the last decades many variants of network creation games have been proposed and analyzed. 
In this paper we present the first game-theoretic network creation model that incorporates greedy routing, i.e., the strategic agents in our model are embedded in some metric space and strive for creating a network among themselves where all-pairs greedy routing is enabled. 
Besides this, the agents optimize their connection quality within the created network by aiming for greedy routing paths with low stretch.

For our model, we analyze the existence of (approximate)-equilibria and the computational hardness in different underlying metric spaces. E.g., we characterize the set of equilibria in 1-2-metrics and tree metrics and show that Nash equilibria always exist. For Euclidean space, the setting which is most relevant in practice, we prove that equilibria are not guaranteed to exist but that the well-known $\Theta$-graph construction yields networks having a low stretch that are game-theoretically almost stable. For general metric spaces, we show that approximate equilibria exist where the approximation factor depends on the cost of maintaining any link.
\end{abstract}

\begin{CCSXML}
<ccs2012>
 <concept>
  <concept_id>10010520.10010553.10010562</concept_id>
  <concept_desc>Computer systems organization~Embedded systems</concept_desc>
  <concept_significance>500</concept_significance>
 </concept>
 <concept>
  <concept_id>10010520.10010575.10010755</concept_id>
  <concept_desc>Computer systems organization~Redundancy</concept_desc>
  <concept_significance>300</concept_significance>
 </concept>
 <concept>
  <concept_id>10010520.10010553.10010554</concept_id>
  <concept_desc>Computer systems organization~Robotics</concept_desc>
  <concept_significance>100</concept_significance>
 </concept>
 <concept>
  <concept_id>10003033.10003083.10003095</concept_id>
  <concept_desc>Networks~Network reliability</concept_desc>
  <concept_significance>100</concept_significance>
 </concept>
</ccs2012>
\end{CCSXML}

\ccsdesc[500]{Computer systems organization~Embedded systems}
\ccsdesc[300]{Computer systems organization~Redundancy}
\ccsdesc{Computer systems organization~Robotics}
\ccsdesc[100]{Networks~Network reliability}

\keywords{Network Creation Games, Greedy Routing, Nash Equilibrium, Stretch, Theta-Graphs, Metric Space, Ad-Hoc Networks, Peer-to-Peer Networks}

\maketitle
\section{Introduction}
Many important real-world networks, like the Internet or peer-to-peer ad-hoc networks among smart devices, are created by the decentralized interaction of various agents that pursue their own goals, like maximizing their centrality, minimizing their latency, or maximizing the network throughput. These agents can thus be considered as acting selfishly and strategic, i.e., they try to optimize their costs for creating and using the network.
This observation sparked a whole research area devoted to game-theoretic network formation models~\cite{Papadimitriou01} and there is an abundance of interesting variants, typically called \emph{network creation games}, that capture how networks emerge in different domains, e.g., communication networks or social networks. 

Networks enable communication. To this end, it is commonly assumed that shortest paths are used for routing traffic within the created network. However, to make this work, all agents need to know the global structure of the network. Moreover, in case of dynamic changes to the network, this global view must be updated or otherwise shortest path routing would fail. In case of the Internet, shortest path routing is currently ensured by the usage of extensive routing tables that exactly specify which next-hop neighbor to use for which destination. All these routing tables must be maintained and updated accordingly, even for insignificant structural changes. 
This can be avoided by using  \emph{greedy routing}, where in each network node every incoming packet is simply routed to a neighbor that is closer to the packet's destination. For this, geographic information is needed, e.g., the nodes must be embedded in some underlying metric space and the positions in that space allow for deciding which next hop to use for routing. Exactly this has been proposed for the Internet~\cite{boguna2010sustaining}, i.e., to map the nodes of the Internet into a metric space such that greedy routing works. This is called a \emph{greedy embedding} and they can be computed efficiently~\cite{Blasius0KK20}.  

However, centrally computing a greedy embedding of a network does not account for the selfish strategic behavior of the network participants. The agents do not only want to ensure that greedy routing works, but at the same time they want to maximize their connection quality within the created network, e.g., they want to minimize their average stretch. 
Thus, strategic agents would "rewire" a given greedy embedding, if this reduces their stretch. However, this might endanger that greedy routing is guaranteed to succeed.

In this paper, we set out to investigate this tension between the stability of the network with respect to local structural changes by the agents and maintaining greedy routing.

\subsection{Model and Preliminaries}
We consider $n$ agents that correspond to points $\points = \{p_1,\dots,p_n\}$ in some metric space $\M = (\points,d_\M)$, where $d_\M(u,v)$ denotes the distance of $u\in \points$ and $v\in\points$ in the metric space. Besides arbitrary metric spaces where $d_\M$ has to satisfy the triangle inequality, we will also consider \emph{1-2-metrics}, where $d_\M(u,v) \in \{1,2\}$ for any $u,v\in\points$, \emph{tree-metrics}, where the distances are determined by a given weighted spanning tree $T$, such that $d_\M(u,v) = d_T(u,v)$, i.e., the distance between points $u,v\in\points$ is the length of the unique path between $u$ and $v$ in $T$, and the \emph{Euclidean-metric}, where the points are located in Euclidean space and the distance  is the Euclidean distance. We will omit the reference to the metric space $\M$ if it is clear from the context.

The goal of the agents is to create a directed network among themselves and for this each agent strategically decides over its set of outgoing directed edges. The \emph{strategy of agent~$u$} is $S_u \subseteq V\setminus \{u\}$, i.e., agent $u$ can decide to create edges to any subset of the other agents. Moreover, let $\s = (S_1,\dots,S_n)$ denote the \emph{strategy-profile}, which is the vector of strategies of all agents. As shorthand notation, for any agent $u\in V$ let $\s = (S_u,\s_{-u})$, where $\s_{-u}$ is the vector of strategies of all agents except agent $u$.
Any strategy-profile $\s$ uniquely defines a directed weighted network $G(\s) = \left(\points,E(\s),\ell\right)$,  where the edge-set $E(\s)$ is defined as $E(\s) = \bigcup_{u\in V}\{(u,v) \mid v\in S_u\}$ and the length of any edge $(u,v) \in E(\s)$ is equal to the distance of the positions of its endpoints in $\M$, i.e., it is defined as $\ell(u,v) = d_\M(u,v)$. 

Given a weighted directed network $G = (\points,E,\ell)$, where the nodes in $\points$ are points in metric space $\mathcal{M}$, a \emph{greedy path from \(u\) to \(v\) in $G$} is a path \(x_1, x_2, ... , x_j\), with $x_i\in \points$, for $1\leq i \leq j$, where \(x_1 = u\), \(x_j = v\), and $(x_i,x_{i+1}) \in E$, for $1\leq i \leq j-1$, such that \(d_\M(x_i, v) > d_\M(x_{i+1}, v)\) holds for all $1\leq i\leq j-1$. Thus, such a path is a directed path from $u$ to $v$ in $G$, where along the path the nodes get strictly closer to the endpoint of the path in terms of their distance in the ground space $\M$.
For two nodes \(u,v \in \points\), we define \(\gd{G}(u,v)\) as the length of the shortest greedy path between \(u\) and \(v\) in \(G\), where the length of a path \(x_1, x_2, ... , x_j\) is \(\sum_{i = 1}^{j-1} \ell(x_{i}, x_{i+1}) = \sum_{i = 1}^{j-1} d_\M(x_{i}, x_{i+1}) \).  If no greedy path exists between $u$ and $v$ in $G$, then $\gd{G}(u,v) = \infty$. We will call $\gd{G}(u,v)$ the \emph{greedy-routing-distance}\footnote{Note that the greedy-routing distance only depends on the network $G$ and the metric space $\M$ but not on any concrete greedy routing protocol. This more abstract definition ensures robust bounds, since our distance measure always gives a lower bound on the distance achieved by any particular greedy routing protocol.} between $u$ and $v$ in network~$G$. We say that \emph{greedy routing is enabled} in $G$, if any pair of nodes in $G$ has finite greedy-routing-distance. See \Cref{fig:greedy-stretch} for an example.
\begin{figure}[t]
\centering
\includegraphics[width=75mm,scale=0.5]{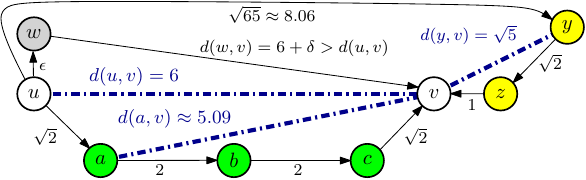}
\caption{Greedy paths in the Euclidean plane. Two such paths from $u$ to $v$ exist: $u,a,b,c,v$ and $u,y,z,v$. The latter shows, that even a single edge in a greedy path can be longer than $d(u,v)$.  
Path $u,w,v$ is not a greedy path, since $d(w, v) > d(u,v)$. The shortest greedy path is $u,a,b,c,v$, so we have $\gd{G}(u,v) =2\sqrt{2}+4 \approx 6.83$. This yields $\text{stretch}_{G}(u, v) = \gd{G}(u,v)/ d(u,v) =\frac{2\sqrt{2}+4}{6} \approx 1.14$. Note that for small enough $\varepsilon$ and $\delta$ the path $u,w,v$ has length less than $\gd{G}(u,v)$. Thus, the shortest $u$-$v$-path in the network may not be a greedy path.} 
\label{fig:greedy-stretch}
\end{figure}

For any two nodes $u$ and $v$ in network $G$, we will compare their greedy-routing-distance with their distance in the ground space $\M$. The ratio of these values is called the \emph{stretch}, i.e., we have 
\[\text{stretch}_{G}(u, v) =
\begin{cases}
    \frac{\gd{G}(u, v)}{d_\M(u, v)} &\text{, if a greedy path from $u$ to $v$ exists in $G$ },\\
    Z &\text{, otherwise,}
\end{cases}\]
where \(Z\) is some arbitrarily large number that serves as a penalty for not having a greedy path. Intuitively, the stretch measures the detour that the best greedy path has to take, compared to the shortest possible path, i.e., to having a direct edge to the target node.

Agents choose their strategy to minimize their \emph{cost} within the formed network. The cost of agent~\(u\) in network $G(\s)$ is defined as
\[c_u(\s) = \text{stretchcost}_u(\s) + \text{edgecost}_u(\s),\]
where \(\text{stretchcost}_u(\s) = \sum_{v \in \points\setminus \{u\}}  \text{stretch}_{G(\s)}(u, v)\) and \(\text{edgecost}_u(\s) = \alpha |S_u|\), for a given $\alpha > 0$. The latter is a global parameter and allows to adjust the comparison of edge costs versus stretch costs.
The \emph{social cost} of a network $G(\s)$ is defined as $c(\s) = \sum_{u\in \points}c_u(\s)$. For any set of points $\points$, the network $G(\s^*) = (\points,E(\s^*), \ell)$ minimizing the social cost is called the \emph{social optimum network} for~$\points$.

An \emph{improving response} for an agent \(u\) for a strategy-profile \(\s = (S_u,\s_{-u})\) is a strategy \(S'_u\) such that \(c_u((S'_u,\s_{-u})) < c_u((S_u,\s_{-u}))\), i.e., by employing strategy $S'_u$, agent $u$ has strictly lower cost compared to using strategy $S_u$.
A strategy $S^*_u$ is called a \emph{best response} for agent~\(u\) for strategy-profile $\s = (S_u,\s_{-u})$, if $c_u(S^*_u,\s_{-u}) \leq c_u(S'_u,\s_{-u})$ for any other strategy $S'_u \subseteq V\setminus\{u\}$, i.e., a best response strategy minimizes agent $u$'s cost, given that the strategies of the other agents are fixed. 

A strategy-profile \(\s\) is in \emph{pure Nash Equilibrium} (NE) if no agent has an improving move for $\s$, i.e., in $\s$ every agent already employs a best response. Since we have bijection between strategy-profiles $\s$ and the corresponding networks $G(\s)$, we will say that network $G(\s)$ is in NE, if $\s$ is in NE. A network~\(G(\s)\) is in \emph{Greedy Equilibrium} (GE) \cite{lenznerGreedy2012} if no agent has an improving response that differs from its current strategy in adding, swapping or deleting a single incident outgoing edge, where a swap is a combination of deleting an incident outgoing edge and adding another one.  
Notice that any network in NE is also in GE. 
A network \(G(\s)\) is in \(\beta\text{-approximate}\) NE ($\beta$-NE) if no agent \(u\) can change its strategy such that its cost decreases below \(\frac{1}{\beta} c_u(\s)\), i.e., no agent has an improving response that reduces its cost by at least a factor of $\beta$. 

An \emph{improving (best) response path} is a sequence of strategy-profiles $\s_0,\s_1,\s_2,\dots,\s_k$, such that $\s_i$ results from some agent changing to an improving (best) response in $\s_{i-1}$, for $1\leq i\leq k$. An \emph{improving (best) response cycle} (IRC or BRC, respectively) is a cyclic improving response path, i.e., where $\s_0 = \s_k$ holds.
By definition, every best response path (cycle) is also an improving response path (cycle). The non-existence of improving response cycles, i.e., if every improvement path has finite length, is equivalent to the existence of an ordinal potential function~\cite{monderer1996potential}. The latter implies that NE can be found via natural sequential improvement dynamics. A strategic game is called \emph{weakly acyclic (weakly acyclic under best response)} if from every strategy vector \(\s\) there exists a finite improving (best) response path that starts in $\s$ and ends in a NE. 

\subsection{Discussion of the Model Assumptions} 
Our main goal is to consider a simple but general network creation model that features agents that use geographic information of their neighbors for routing.

\textbf{Considering Different Metrics:} For capturing as many realistic settings as possible, we consider different underlying metrics, starting from the simplest non-trivial metric, 1-2-metrics, that capture settings where pairwise communication can only have low quality (length 2) or high quality (length 1), e.g., due to hardware constraints. Such metrics have often been studied, e.g., for the TSP \cite{karpReducibility1972, adamaszekNew2018}.  
We also consider tree-metrics that are commonly used for approximating other metrics~\cite{FakcharoenpholRT04} and that naturally emerge if overlay networks with an underlying tree topology are considered. The setting of Euclidean metrics is natural for ad-hoc networks~\cite{RamanathanR02} but also for classical wireline networks.

\textbf{Using Shortest Greedy Routing Paths:} Given that an abundance of different greedy routing protocols exist~\cite{MauveWH01}, we present a more general analysis by abstracting away from any concrete greedy routing protocol. To achieve this, our definition of the greedy-routing distance only depends on the network $G$ and the metric space $\M$ but not on any concrete assumption of which neighbor that is closer to the target is used as next hop in routing. This more abstract definition ensures robust bounds, since our distance measure always gives a lower bound on the distance achieved by any particular greedy routing protocol. E.g., always using the neighbor that is closest to the target can yield greedy routing paths that are arbitrarily longer, compared to the shortest greedy routing path.
Thus, our greedy routing can be understood as the result of iteratively improving the employed greedy routing paths over time\footnote{Such a process cannot get stuck in local non-greedy routing since we assume all-pairs communication, i.e., at any time every network node ensures to have a greedy routing path to all other nodes. Thus, a local non-greedy routing would directly urge the corresponding node of a routing dead-end to change its edge-set.}.
Note that the shortest greedy routing path can be found efficiently via Dijkstra's algorithm.

\textbf{Considering the Stretch:} From Moscibroda, Schmid, and Wattenhofer~\cite{moscibrodaTopologies2006} we adopt measuring the connection quality between agents via their stretch. This assumes that by using the network, agents can infer their network distance to any other node, e.g., by measuring their latency. We note that with slight modifications of the respective proofs all our results carry over if we would directly use the distance in the network, as in~\cite{biloGeometric2019}, without dividing by the respective distance in the underlying metric space.

\textbf{The Role of the Parameter $\alpha$:} As is common also in other variants of network creation games~\cite{fabrikantNetwork2003,moscibrodaTopologies2006,biloGeometric2019}, the parameter $\alpha$ captures the trade-off between the costs for establishing a link in the network and the costs for communication between two nodes, e.g., the observed latency. We can assume a high alpha for settings where the edge costs dominate the stretch costs and a low alpha for the inverse setting. This allows to model a wide range of applications, e.g., the parameter $\alpha$ could encode the basic set-up costs for establishing a cryptographically secure connection, or for purchasing hardware, e.g., directed antennae or cables. Given these examples and for the sake of simplicity, we assume that the same parameter $\alpha$ applies to all agents.

\subsection{Related Work}
Network Creation Games (NCGs) were first introduced by~\citet{fabrikantNetwork2003}. In their model, agents that correspond to nodes of a network strategically create incident undirected edges with the goal of minimizing their sum of hop-distances to all other agents. 
They show that NE always exist, that the social optimum network is either a clique or a star,
and that computing a best response is NP-hard.  
NCGs admit IRCs~\cite{KL13} and any network in GE is in $3$-NE~\cite{lenznerGreedy2012}.   
Moreover, many variations of NCGs have been investigated, e.g., a bilateral version

~\cite{CP05,FGLZ23}, variants with locality~\cite{BiloGLP16,CL15}, with robustness~\cite{MeiromMO15,CLMM16,EFLM20}, with budget constraints~\cite{LaoutarisPRST08,EhsaniFMSSSS11}, with non-uniform edge prices~\cite{MeiromMO14,CLMM17,BiloFLLM21}, with heterogeneous agents~\cite{BullingerLM22}, or with temporal edges~\cite{tempNCG}.

Closer to our model are NCGs that involve geometry. In the wireline strong connectivity game~\cite{eidenbenzEquilibria2006} agents that correspond to points in the plane create undirected edges to ensure that they can reach all other agents. 
There, NE exist and can be found efficiently. Geometric spanner games~\cite{abamGeometric2019} are similar, but the agents want to ensure a given maximum stretch.   
\citet{biloGeometric2019} defined a similar model with undirected edges, where the agents want to minimize the sum of their shortest path distances to all other agents. 
 
They show that NE exist in 1-2-metrics and tree-metrics and that every GE is in 3-NE for any metric. Also, even on 1-2-metrics computing a best response is NP-hard. 
Later, it was shown that $(\alpha+1)$-NE always exist~\cite{friedemannEfficiency2021}. 

Closest to our model is the work by \citet{moscibrodaTopologies2006}. Their model involves agents that form directed edges and that have the same cost function as in our model. However, classical shortest paths are used for defining the stretch. For Euclidean metrics they provide an instance that does not admit a NE and they show that deciding NE existence is NP-hard. Note that networks in NE might not be in NE in our model an vice versa. Thus, none of their results can be directly transferred to our model.

For background on (Algorithmic) Game Theory and Multiagent Systems, we refer to the textbooks by \citet{AGT-book} and \citet{MAS_book}.

{
\begin{table*}[t]
    \centering
    \caption{Result overview. Abbreviations used: IRC, BR(C) (Improving/Best Response (Cycle)), BR-WA (Weakly Acyclic under Best Response), NE-ver (NE-verification), \{NE,GE\}-dec (\{NE,GE\}-decision).} \resizebox{\textwidth}{!}{
    \begin{tabular}{lrccc}\toprule
        \textbf{} & \textbf{1-2 Metrics} & \textbf{Tree Metrics} & \textbf{Euclidean Metrics} & \textbf{General Metrics} \\ \midrule
        \textbf{NE Existence} & NE always exist & NE always exist & no existence & no existence \\
        & NE characterization & {GE \& NE unique} \\
        \midrule
        \textbf{Dynamics} & IRCs exist, no BRCs & BR-WA & IRCs+BRCs & IRCs+BRCs
                \\
        \midrule
        \textbf{Complexity} & $\alpha > 1/2$: BR NP-hard   & BR NP-hard & BR NP-hard & BR NP-hard \\
          & NE-ver NP-hard &&&\\
        &$\alpha \leq 1/2$: BR in P  & NE-dec in P &  & NE-dec NP-hard\\
        &  NE-ver in P && & GE-dec NP-hard\\
        \midrule
       \textbf{Approx-NE} & $\mathcal{O}(\log n)$& $1$ & $5$ & $\alpha +1$ \\ \bottomrule
    \end{tabular}}
    \label{tab:overview}
\end{table*}
} 
\subsection{Our Contribution}
We propose and analyze the first network creation game where intelligent selfish agents aim for optimizing their communication quality while at the same time maintaining that all-pairs greedy routing is enabled. Given the favorable properties of greedy routing for dynamically changing networks among smart devices, this is highly relevant for current and future multi-agent network formation scenarios. 

As our main technical contribution, we uncover the influence on the underlying metric space on the existence of equilibria, the game dynamics, and the computational complexity of computing best response strategies and on deciding equilibrium existence. See \Cref{tab:overview} for an overview. 

We show that NE always exist on 1-2 metrics and tree metrics but on Euclidean metrics there are instances that do not admit NE. However, we show that in all considered metrics approximate-NE with low approximation factor do exist and can be constructed efficiently. 
Regarding the dynamics, we show that improving response cycles exist for almost all variants but that on 1-2 metrics and tree metrics convergence to an NE is guaranteed if only best responses are played. On the complexity side, we show that computing such best responses is NP-hard in most cases. Also deciding NE existence is NP-hard for some cases.   

In comparison to \cite{moscibrodaTopologies2006}, where shortest path routing instead of greedy routing is used, our results on Euclidean metrics are similar but this is not obvious at all. Adding the constraint of greedy routing forces the agents to create certain edges which yields entirely different behavior\footnote{For example, in a metric space where all the nodes have the same distance to each other and $\alpha=1$, in the model of \cite{moscibrodaTopologies2006} a star is a NE while in our model the only NE is the complete graph.}. Thus, their results do not carry over to our model. Moreover, we show the even stronger result that not even GE are guaranteed to exist. 

We emphasize, that the efficient construction of approximate equilibria for Euclidean and general metrics directly carries over to the model from~\citet{moscibrodaTopologies2006}. This is true, since our agents use greedy routing paths instead of shortest paths, so the path choice is more restricted in our model. Thus, the stretch can only decrease when considering shortest paths. Hence, we present the first positive results on equilibrium existence for the model in~\cite{moscibrodaTopologies2006}. Also, our constant approximate NE for Euclidean metrics via $\Theta$-graphs are highly relevant in practice. $\Theta$-graphs are well-known geometric spanners with many applications.

\section{1-2 Metrics}\label{sec:12metrics}
We consider metrics $\M$ where for every pair of nodes \(u, v\) we have \(d_\M(u,v) \in \{1,2\}\). We denote edges of weight 1 and 2 as \emph{1-edges} and \emph{2-edges}, respectively. 1-2-metrics is the simplest class of non-trivial metric spaces and they have often been studied, e.g., for the TSP \cite{karpReducibility1972, adamaszekNew2018}.

In spaces where all edges have the same length, direct edges would be the only possible greedy paths and as such the agents' strategies would be independent of each other. We start by giving some general statements about greedy paths and agent strategies in 1-2-metrics.

\begin{restatable}{lemma}{lemmastretches}
    \label{lem:12-stretches}
    For any network in a 1-2-metric, all greedy paths have stretch  \(1\) or \(\frac{3}{2}\) and consist of at most two edges.
\end{restatable}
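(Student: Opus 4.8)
The plan is to argue directly from the definition of a greedy path, exploiting the fact that in a 1-2-metric the distance function takes only the values $0$ (between a node and itself), $1$, and $2$. I would fix an arbitrary greedy path $x_1, x_2, \dots, x_j$ from $u = x_1$ to $v = x_j$ in some network $G$ over a 1-2-metric, and focus on the sequence of distances to the target, namely $d_\M(x_1, v), d_\M(x_2, v), \dots, d_\M(x_j, v)$. By the definition of a greedy path this sequence is strictly decreasing; its last entry is $d_\M(v,v) = 0$; and every earlier entry lies in $\{1,2\}$ since $\M$ is a 1-2-metric. A strictly decreasing sequence whose final entry is $0$ and whose remaining entries come from $\{1,2\}$ has length at most $3$, the unique length-$3$ possibility being $(2,1,0)$. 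Hence $j \le 3$, i.e., every greedy path consists of at most two edges.

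It then remains to compute the stretch in the two surviving cases. If $j = 2$, the path is the single edge $(u,v)$, whose length is $\ell(u,v) = d_\M(u,v)$, so $\text{stretch}_G(u,v) = 1$. If $j = 3$, write the path as $u, w, v$; the greedy condition yields $d_\M(u,v) > d_\M(w,v) > 0$, which forces $d_\M(u,v) = 2$ and $d_\M(w,v) = 1$. The length of the path is therefore $d_\M(u,w) + d_\M(w,v) = d_\M(u,w) + 1$, and since $d_\M(u,w) \in \{1,2\}$ this length equals $2$ or $3$, giving stretch $\tfrac{2}{2} = 1$ or $\tfrac{3}{2}$. Combining the two cases establishes the claim. (As a by-product one sees that whenever $d_\M(u,v) = 1$ the only greedy path is the direct edge, and a two-edge greedy path can only arise when $d_\M(u,v) = 2$.)

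I do not expect a genuine obstacle here: the statement reduces to the elementary observation that a strictly decreasing chain in $\{0,1,2\}$ has at most three links, together with a short case analysis. The only points requiring care are the degenerate cases — in particular remembering that a single $2$-edge $(u,v)$ is itself a valid greedy path, contributing stretch exactly $1$ rather than $\tfrac12$, and that $j$ could already equal $2$ with $d_\M(u,v)\in\{1,2\}$ — so that the final bound ``stretch $\in \{1, \tfrac32\}$'' is tight and not accidentally narrowed or widened.
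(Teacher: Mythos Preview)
Your proof is correct and essentially identical to the paper's: both exploit that the strictly decreasing sequence of distances-to-target lies in $\{0,1,2\}$, forcing at most two edges, and then compute the possible stretches by a short case split. The only cosmetic difference is that the paper organizes the cases by whether $d_\M(u,v)=1$ or $d_\M(u,v)=2$, whereas you organize them by the number of edges $j-1$.
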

\begin{proof}
    Let \(u, v \in \mathcal{P}\) and \(u \neq v\). If \(d(u,v) = 1\), there can be no other node closer to \(v\) than \(u\). Thus, the only possible greedy path is \(u,v\) with a stretch of \(1\) that uses one edge. Otherwise, \(d(u,v) = 2\), and only nodes \(x\) with \(d(x,v) = 1\) are closer to \(v\) than \(u\) and no other node is closer to \(v\) than any such \(x\). Thus, the only possible greedy paths are \(u,v\) and \(u,x,v\) for any such \(x\). These paths use one or two edges, respectively, and have stretches of \(1\) and either \(1\) or \(\frac{3}{2}\) depending on \(d(u,x)\).
\end{proof}

When examining these possible greedy paths, we also directly get the following statement.

\begin{remark}
    \label{lem:12-2edges-first}
    For any network in a 1-2-metric, \(2\)-edges can only be the first edge of any greedy path.
\end{remark}
Also, we find that 1-edges are crucial for enabling greedy routing.
\begin{restatable}{lemma}{lemmaedges}
    \label{lem:12-1edges}
    For any network in a 1-2-metric, if greedy routing is enabled then all \(1\)-edges are built in both directions.
\end{restatable}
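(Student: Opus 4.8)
The plan is to derive the statement almost immediately from \Cref{lem:12-stretches} (more precisely, from the case analysis inside its proof). Fix an arbitrary pair of nodes $u,v \in \points$ with $d_\M(u,v) = 1$; such a pair is exactly the potential location of a $1$-edge. Since greedy routing is enabled in $G$, there is a finite greedy path from $u$ to $v$, say $x_1,\dots,x_j$ with $x_1 = u$ and $x_j = v$, and likewise a finite greedy path from $v$ to $u$.

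The key observation is that in a $1$-2-metric no two distinct points are at distance less than $1$, so for every node $x \neq v$ we have $d_\M(x,v) \geq 1 = d_\M(u,v)$. Hence no node is strictly closer to $v$ than $u$ is, which forces $j = 2$: the greedy path from $u$ to $v$ must be the single-edge path $u,v$, and therefore the directed edge $(u,v)$ lies in $E(\s)$. Running the symmetric argument on the greedy path from $v$ to $u$ shows $(v,u) \in E(\s)$. As the pair $u,v$ was an arbitrary $1$-edge, all $1$-edges are built in both directions.

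There is essentially no obstacle here: the whole content is the remark that a $1$-2-metric leaves no room for an intermediate hop toward a target at distance $1$, which is already implicit in \Cref{lem:12-stretches}. If one prefers to quote that lemma rather than re-derive this, it suffices to note that any greedy path from $u$ to $v$ with $d_\M(u,v) = 1$ has stretch $1$ and at most two edges by \Cref{lem:12-stretches}; a two-edge path $u,x,v$ would have length $d_\M(u,x) + d_\M(x,v) \geq 2 > 1 = d_\M(u,v)$, contradicting stretch $1$, so the path must be the direct edge, and we conclude as above.
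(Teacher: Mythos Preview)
Your proof is correct and follows essentially the same approach as the paper: both argue that when $d_\M(u,v)=1$, no other node can be strictly closer to $v$ than $u$, so the only possible greedy path from $u$ to $v$ is the direct edge. The paper phrases this as a one-line contradiction, while you give a direct argument (plus a redundant alternative via \Cref{lem:12-stretches}), but the content is identical.
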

\begin{proof}
	For the sake of contradiction, let \(u, v \in \mathcal{P}\) be such that \(d(u,v) = 1\) but there is no edge between \(u\) and \(v\) that \(u\) can use as greedy path, i.e., agent~\(u\) does not build an edge to \(v\). However, without such an edge no greedy path from \(u\) to \(v\) can exist because no other node can be closer to \(v\) than \(u\) itself, a contradiction.
\end{proof}

\subsubsection*{\textbf{Equilibrium Existence}} We start by showing the equivalence of NE and social optima.
\begin{restatable}{theorem}{thmtwoone}
    \label{the:12-directed-pne-opt}
    In a 1-2-metric, every NE is a social optimum and every social optimum is a NE.
\end{restatable}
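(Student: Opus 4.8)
The plan is to establish both implications by first pinning down the structure of any NE and of any social optimum, showing they coincide. The key structural fact, from \Cref{lem:12-1edges} and \Cref{lem:12-stretches}, is that in any network where greedy routing is enabled, every $1$-edge must be present in both directions, and every stretch is either $1$ or $\tfrac32$. So for a fixed point set $\points$, the "mandatory" edge set $B$ consisting of both directions of every $1$-pair is forced in any network with finite cost. The only freedom an agent $u$ has is: (i) whether to also build a direct $2$-edge to each $2$-neighbor $v$ (paying $\alpha$, getting stretch $1$ instead of $\tfrac32$), and (ii) building $2$-edges is otherwise useless, since by \Cref{lem:12-2edges-first} a $2$-edge is never an internal hop, so the only benefit of a $2$-edge $(u,v)$ is to improve $u$'s own stretch to $v$ from $\tfrac32$ to $1$ — it never helps route to any third node and never helps any other agent. (One must double-check that a $2$-neighbor $v$ of $u$ always has \emph{some} common $1$-neighbor $x$ so that the $\tfrac32$-stretch path $u,x,v$ exists once $B$ is built; this follows because $d_\M$ is a metric: if $v$ had no $1$-neighbor at all it would be at distance $2$ from everyone including via the triangle inequality there is no obstruction — actually the cleaner argument is that since greedy routing is enabled in \emph{any} finite-cost network and the only internal hops are $1$-edges ending at a $1$-neighbor of the target, every target $v$ reachable from $u$ at distance $2$ has a $1$-neighbor that $u$ can reach; but I will need to argue $u$ can reach such an $x$ directly, which again holds because a greedy path from $u$ to $v$ has the form $u,x,v$ with $(u,x)\in E$.)

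Granting this, the cost decomposes per agent and per ordered $2$-pair: agent $u$'s cost is $\sum_{v : d(u,v)=1}\alpha \;+\; \sum_{v:d(u,v)=2} \big(\mathbf{1}[(u,v)\in E]\cdot(\alpha + 1) + \mathbf{1}[(u,v)\notin E]\cdot \tfrac32\big)$, i.e., each $2$-neighbor $v$ contributes independently, and $u$ prefers to build the edge iff $\alpha + 1 < \tfrac32$, i.e., $\alpha < \tfrac12$, is indifferent iff $\alpha = \tfrac12$, and prefers not to build it iff $\alpha > \tfrac12$. Crucially, this decision is \emph{completely local to $u$}: whether $u$ builds $(u,v)$ affects nobody else's cost (a $2$-edge is never reused by \Cref{lem:12-2edges-first}, and removing it does not destroy greedy routing for anyone since $B$ alone enables it). Therefore the social cost is just the sum over agents of these independent per-agent costs, and a strategy profile minimizes social cost if and only if every agent independently minimizes its own cost, which is exactly the NE condition. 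This yields both directions simultaneously: every NE is a social optimum and vice versa (in the knife-edge case $\alpha = \tfrac12$ every choice is both a best response and socially optimal, so both classes coincide and are non-unique).

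The main obstacle I anticipate is the careful verification of the claim that building/removing a $2$-edge has no externality and never destroys reachability for third parties — in particular arguing that the mandatory bidirected $1$-edge set $B$ already enables greedy routing between \emph{every} pair (so that no profile with finite social cost can do better than $B$ plus some private $2$-edges, and no agent's deletion of a $2$-edge can make some other pair unreachable). This needs: for any $u,v$ with $d(u,v)=2$, there is a vertex $x$ with $d(u,x)=d(x,v)=1$; equivalently, no vertex is at distance $2$ from all of its would-be intermediaries. I would prove this by contradiction using that the \emph{starting} network (which has finite cost, since we only care about point sets where NE/optima are finite-cost) has a greedy $u$--$v$ path, whose penultimate vertex $x$ satisfies $d(x,v)=1$ and $d(u,x)\le d_\M(u,x)$; since $x\ne v$ and the path is greedy, in fact $(u,x)\in E$ and by \Cref{lem:12-stretches} the path has $\le 2$ edges so $x$ is adjacent to $u$, giving $d(u,x)\in\{1,2\}$ — and if $d(u,x)=2$ we instead use $v$ itself via a common $1$-neighbor of $v$, iterating; the finitely many vertices force termination. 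Once this lemma is in hand, the decomposition and the equivalence follow routinely.
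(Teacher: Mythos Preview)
Your high-level endpoint---that once all 1-edges are forced, each agent's remaining choices have no externality, so social-cost minimization decomposes into per-agent best responses---is exactly the paper's argument. But your route there contains a genuine error that would derail the proof as written.

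The false step is the claim that ``the only benefit of a 2-edge $(u,v)$ is to improve $u$'s own stretch to $v$ from $\tfrac32$ to $1$---it never helps route to any third node.'' This is wrong: \Cref{lem:12-2edges-first} says a 2-edge can only be the \emph{first} edge of a greedy path, not that it can only appear on a length-one path. The edge $(u,v)$ can be the first hop of $u,v,w$ for any $w$ with $d(v,w)=1$, giving $u$ a stretch-$\tfrac32$ path to $w$. In particular, if $u$ and $w$ have no common $1$-neighbor, this may be the \emph{only} greedy $u$--$w$ path. So your per-pair cost formula $\mathbf{1}[(u,v)\notin E]\cdot\tfrac32$ is incorrect (the stretch without the direct edge could be $1$, $\tfrac32$, or $Z$, and one 2-edge can serve many targets), and the ``independent per-ordered-2-pair'' decomposition collapses. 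Relatedly, your anticipated obstacle---that the mandatory set $B$ of all 1-edges already enables greedy routing---is both false (take a metric where every pair is at distance~2) and unnecessary.

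The correct no-externality argument is simpler and does not need any of this. You only need that $u$'s choice of 2-edges does not affect $c_v$ for $v\ne u$. This holds because any greedy path from $v$ starts with an edge out of $v$, and thereafter, by \Cref{lem:12-2edges-first}, uses only 1-edges; hence it never traverses $(u,\cdot)$ with $d(u,\cdot)=2$. Combined with \Cref{lem:12-1edges} (all 1-edges are mandatory in any finite-cost profile), this gives: in both a NE and a social optimum all 1-edges are present, and given that, $c_v$ depends only on $S_v$. Then $c(\s)=\sum_v c_v(S_v)$ separates, and a profile is a social optimum iff each $S_v$ is a best response, i.e., iff it is a NE. That is the paper's proof; you should drop the per-pair decomposition and the $B$-enables-routing lemma and argue the externality point directly.
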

\begin{proof}
    First, by \Cref{lem:12-1edges}, every outgoing 1-edge incident to an agent \(u \in \mathcal{P}\) needs to be part of all of its best responses. Second, by \Cref{lem:12-2edges-first}, no 2-edge that any agent builds can be part of a greedy path that is used by any other agent. Thus, the strategy of an agent $u \in \mathcal{P}$ does not influence the cost function of any other agent $v \in \mathcal{P}\setminus\{u\}$ beyond the 1-edges that need to be part of every strategy with costs less than \(Z\). Using this, we show by contradiction that every NE is a social optimum and vice versa.
    
    If there was a NE that is not a social optimum, the latter would have lower social cost than the NE and thus, some agent~\(u\) must have lower cost in the social optimum compared to its cost in the NE. Hence, it could reduce its cost by deviating to its strategy in the social optimum network.
    
    In the same vein, if there was a social optimum network that is not in NE, an agent \(u\) would have an improving move that would also improve the social cost.
\end{proof}

This implies that NE always exist. In the following we characterize all NEs. For this, we introduce \emph{Domination Set Graphs}, which are based on the notion of a \emph{directed dominating set}~\cite{fu1968dominating} of a directed graph $G = (V,E)$. For any node $u\in V$, let  $W_1(u), W_{1\to1}(u), W_2(u)$ denote the sets of nodes that node~$u$ can reach by a 1-edge, by a path of two 1-edges, and by a 2-edge, respectively.  We denote by $W_2^-(u)$ the set of nodes that $u$ builds a 2-edge to and that $u$ cannot reach via a path of two 1-edges, i.e. $W_2^-(u)=W_2(u)\setminus W_{1\to1}(u)$, and let $W_{2}^+(u)$ be the set of nodes from $W_2(u)$ to which $u$ also has a path of two 1-edges, i.e., $W_2^+(u) = W_2(u) \cap W_{1\to1}$. Note that $W_2(u) = W_2^-(u) \cup W_2^+(u)$ and $W_2^-(u) \cap W_2^+(u) = \emptyset$.
The \emph{out-neighborhood} of $u$ is the set $N(u)=W_1(u)\cup W_2(u)$.

Subset $D \subseteq V$ is \emph{dominating}, if for every node $v \in V\setminus D$ there exists a node $u \in D$ such that $(u,v) \in E$. For this we say that \emph{$u$ dominates $v$} and that $u$ dominates its out-neighborhood.
\begin{definition} Let $G_{-u}^1 = \left(\points\setminus\{u\},E^1_{-u}\right)$ be the network without node~$u$ that only contains all the 1-edges, i.e., we have that $E^1_{-u} = \left\{(v,w) \in\left(\points \setminus \{u\}\right)^2\ \mid d(v,w) = 1\right\}$.

	A \emph{Domination Set Graph} (DSG) on a 1-2-metric \((\mathcal{P}, d)\) is a network \(G = (\mathcal{P}, E)\)  with: 
	\begin{enumerate}	
		\item[(i)] \(E \supseteq \{(v,w) \in \mathcal{P}^2\ |\ d(v,w) = 1\}\),
		\item[(ii)] for every node \(u\), its out-neighborhood must be dominating in the network $G_{-u}^1$, and
		\item[(iii)] \( |W_{2}^+(u)| \) cannot be decreased by a single edge deletion or swap without increasing $|N(u)|$.
	\end{enumerate}
\end{definition}

By (i), a DSG includes all 1-edges, by (ii), for every node \(u\), its out-neighborhood is dominating in the network without $u$ consisting of only the 1-edges. Condition (iii)  states that in a DSG an edge to a node $v\in W_2^+(u)$ cannot be removed or swapped with another one to a node $w\notin W_{1\to 1}(u)$ without resulting in the out-neighborhood of $u$ no longer being dominating in the network $G_{-u}^1$.

We now define more specific DSGs:
\begin{definition}
	A \emph{Minimum Domination Set Graph} (MinDSG) is a DSG in which for every node $u$ its out-neighborhood contains all nodes reachable via 1-edges and the minimum number of nodes reachable via 2-edges so that it is dominating in $G_{-u}^1$. Thus, all out-neighbor\-hoods are minimum size dominating sets given that they need to include all nodes reachable via 1-edges.
\end{definition}
\begin{definition}
	A \emph{Maximum Domination Set Graph} (MaxDSG) is a DSG that contains all possible edges, except 2-edges $(u,w)$, where $(u,v)$ and $(v,w)$ both are 1-edges.
\end{definition}
Note that MaxDSGs are unique and the stretch between any pair of nodes in a MaxDSG is \(1\). 
\begin{definition}
\emph{A Balanced Domination Set Graph for $\alpha > 0$ (BDSG$(\alpha)$}), is a subclass of Domination Set Graphs, where for any node $u$, the quantity \( \left( \alpha-\frac{1}{2}\right)|W_{2}(u)|+\frac{1}{2}|W_{2}^+(u)| \) is minimized, achieving a balance between the number of 2-edges that $u$ builds in total, and the number of 2-edges to nodes in $W_{1\to1}(u)$. Note that for $\alpha<\frac{1}{2}$, the BDSG corresponds to the MaxDSG.
\end{definition}
See \Cref{fig:DSG} for examples. Next, we show that DSGs are useful for our analysis.
\begin{figure}[h]
\includegraphics[width=\textwidth]{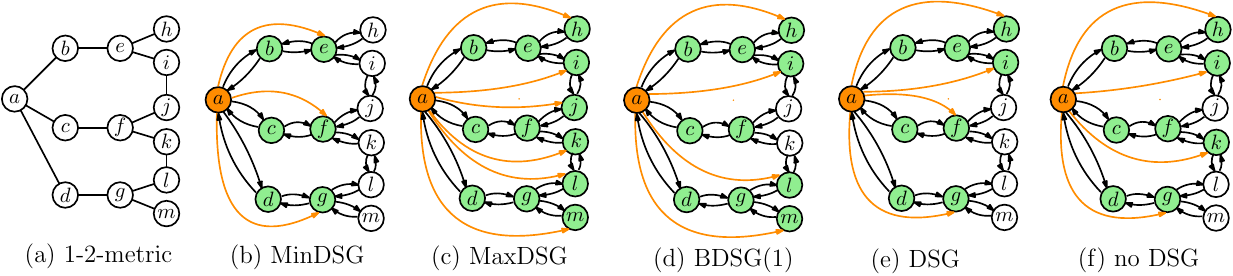}
\caption{Examples of DSGs ((b)-(e)) where only 1-edges and outgoing edges of node $a$ (colored orange) are shown, agent $a$ has stretch 1 to green colored nodes. (a) Sample 1-2-metric where only 1-edges are shown; (b) MinDSG for~(a), 1-edges are colored black and 2-edges are colored orange; (c) the unique MaxDSG for~(a); (d) BDSG(1) for~(a); (e) some other DSG for~(a); (f) a network that is not a DSG, since condition (iii) is violated: $|W_2^+(d)|$ can be decreased without increasing $|N(d)|$, by swapping the edge $(a,g)$ to $(a,m)$. Note that for node $a$ the set $\{b,c,d,e,f ,g\}$ is a minimum size dominating set for $G_{-a}^1$, but this set cannot be agent~$a$'s strategy in a BDSG since $0.5|W_2(a)|+ 0.5|W_2^+(a)|$ is not minimal.}
\label{fig:DSG}
\end{figure}
\begin{lemma}
    \label{lem:12-DSG}
    In a 1-2-metric any network in GE is a DSG.
\end{lemma}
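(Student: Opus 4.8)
The plan is to verify the three defining properties (i)--(iii) of a DSG in turn. The starting observation is that any network in GE lets every agent reach every other agent by a greedy path: if agent~$u$ had no greedy path to some $v$, it would pay the penalty $Z$ for the ordered pair $(u,v)$, but adding the single edge $(u,v)$ creates the trivial one-hop greedy path $u,v$ of stretch~$1$ and, since adding an edge can only create greedy paths and never destroy them, does not increase any other stretch term of $u$; hence $u$'s cost drops by at least $Z-1-\alpha>0$ (for $Z$ sufficiently large, as we may assume), contradicting that $u$ already plays a best response among single-edge changes. Once greedy routing is enabled, \Cref{lem:12-1edges} gives property~(i) directly: every $1$-edge is present, in both directions.

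For property~(ii) I would use the structure of greedy paths from \Cref{lem:12-stretches} and \Cref{lem:12-2edges-first}. Fix $u$ and a target $z\in\points\setminus\{u\}$ with $z\notin N(u)$. Since greedy routing is enabled there is a greedy $u$--$z$ path, which by \Cref{lem:12-stretches} has at most two edges; it cannot be the single edge $u,z$, since that would put $z$ into $N(u)$, so it is $u,x,z$. Its last edge $(x,z)$ is not a $2$-edge by \Cref{lem:12-2edges-first}, hence $d(x,z)=1$, and since $x\in N(u)$ with $x\neq u\neq z$ we get $(x,z)\in E^1_{-u}$; thus $x\in N(u)$ dominates $z$ in $G^1_{-u}$. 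As $z$ was arbitrary, $N(u)$ is dominating in $G^1_{-u}$.

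The heart of the argument is property~(iii), and it rests on writing $u$'s cost as an explicit function of its strategy. Property~(i) forces all $1$-edges to be present, so $W_1(u)$ and $W_{1\to1}(u)$, which depend only on the $1$-edges, are the same in every GE. Using \Cref{lem:12-stretches} one checks that, as long as all of $u$'s stretch terms are finite, the stretch from $u$ to $v$ equals $1$ exactly when $v\in W_1(u)\cup W_2(u)\cup W_{1\to1}(u)$ and equals $\frac32$ otherwise; counting the sizes of these sets and using $|S_u|=|N(u)|=|W_1(u)|+|W_2(u)|$ yields
\[
c_u(\s)\;=\;\mathrm{const}\;+\;\Bigl(\alpha-\tfrac12\Bigr)\,\bigl|W_2(u)\bigr|\;+\;\tfrac12\,\bigl|W_2^+(u)\bigr|,
\]
where the constant depends only on $n$, $|W_1(u)|$ and $|W_1(u)\cup W_{1\to1}(u)|$ (and $c_u\ge Z$ when some stretch term is infinite). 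Next I would note that a single-edge deletion or swap never increases $|N(u)|$, and that any such move which lowers $|W_2^+(u)|$ while keeping all of $u$'s stretch terms finite must --- since removing a $1$-edge from $u$, or swapping one away, instantly incurs the penalty $Z$ --- delete a $2$-edge to a node of $W_2^+(u)$ and possibly re-add a $2$-edge to some $w\notin W_{1\to1}(u)$. By the displayed cost formula any such move strictly decreases $c_u$ (by $\alpha$ for a pure deletion, by $\frac12$ for the swap). Since $\s$ is in GE, no such move can keep $u$'s stretch terms finite; given property~(i) this is the same as saying $N(u)$ stops being dominating in $G^1_{-u}$, which is exactly property~(iii).

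The main obstacle I anticipate is the bookkeeping in the last step: one must check carefully that deleting a $2$-edge to $v\in W_2^+(u)$ really leaves the stretch to $v$ at $1$ (it does, via the surviving two-hop $1$-edge route) and leaves every other stretch term of $u$ unchanged as long as reachability survives, so that the only thing such a move can cost is the catastrophic loss of a greedy path and the claimed cost drop is genuine. Making the "const" truly independent of $u$'s strategy, and ruling out the degenerate deletions/swaps that touch $1$-edges, are the fiddly points to pin down.
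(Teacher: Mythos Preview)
Your proof is correct and follows essentially the paper's route: property~(i) via \Cref{lem:12-1edges}, property~(ii) via the two-edge structure of greedy paths from \Cref{lem:12-stretches}, and property~(iii) by showing that any single deletion or swap that decreases $|W_2^+(u)|$ while keeping $N(u)$ dominating would be an improving move. The only differences are cosmetic---you make the ``greedy routing is enabled in a GE'' step explicit (the paper uses it implicitly) and package the argument for~(iii) in a cost formula, whereas the paper argues the deletion and swap subcases directly.
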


\begin{proof}
    By \Cref{lem:12-1edges}, all 1-edges need to be built in a GE network and thus condition (i) of a DSG is met.
    Next, assume for the sake of contradiction, that condition (ii) of a DSG is violated and that the set \(N(u)\) of nodes that some agent \(u\) builds edges to is not dominating in $G_{-u}^1$. In other words, there is some node~\(v\) such that neither it nor any of its in-neighbors are in \(N(u)\). Thus, the shortest path from \(u\) to \(v\) has at least \(3\) edges. By \Cref{lem:12-stretches}, no such path can be a greedy path and thus agent~\(u\) could improve its cost by building an edge to \(v\).
    Finally, we consider condition (iii) that states that for any agent $u$, the size of \(|W_{2}^+(u)| \) cannot be decreased by a single operation without increasing $|N(u)|$, provided that $N(u)$ is dominating in $G_{-u}^1$. For the sake of contradiction, let us assume that this condition is violated in a GE network. There are two cases that this can happen. In the first case, 
    some agent~$u$ removes an edge to a node $v\in W_2^+(u)$, while agent~$u$'s out-neighborhood is still dominating in $G_{-u}^1$. Thus, agent~$u$ could reduce its edgecosts by removing this edge, while its stretchcosts are not increased since the stretch to $v$ is already 1 (there is a path of two 1-edges from $u$ to $v$) and the stretches to every other node remain the same. In the second case, some agent $u$ swaps an edge to a node $v\in W_2^+(u)$ with one to a node $w\notin (W_2(u)\cup W_{1\to1}(u))$, while agent~$u$'s out-neighborhood is still dominating in $G_{-u}^1$. Thus, agent $u$ could reduce its stretchcost, since the stretch to $w$ would become 1 instead of $\frac{3}{2}$, while its edgecosts and the stretches to every other node remain the same. Thus, we have that every GE is a DSG.
\end{proof}

\begin{lemma}
    \label{lem:12-DSG-greedy}
    In a 1-2-metric, in any DSG greedy routing is enabled.
\end{lemma}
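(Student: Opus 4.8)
The plan is to prove the statement directly from conditions (i) and (ii) in the definition of a DSG, together with \Cref{lem:12-stretches}: I will exhibit, for every ordered pair of distinct nodes $u,v\in\mathcal{P}$, a greedy path from $u$ to $v$ consisting of at most two edges.

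First I would handle the case $d(u,v)=1$. Here condition (i) guarantees the edge $(u,v)$ is built, and $d(u,v)=1>0=d(v,v)$ makes the one-edge path $u,v$ greedy. All remaining work is the case $d(u,v)=2$. If the edge $(u,v)$ happens to be present, the path $u,v$ is again greedy and we are done, so assume $(u,v)\notin E$, i.e.\ $v\notin N(u)$. Since $v\in\mathcal{P}\setminus\{u\}$ is a vertex of $G_{-u}^1$ not contained in $N(u)$, condition (ii) — that $N(u)$ is dominating in $G_{-u}^1$ — yields a node $x\in N(u)$ with $(x,v)\in E^1_{-u}$, hence with $d(x,v)=1$. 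Then $(u,x)\in E$ because $x\in N(u)$, and $(x,v)\in E$ because $d(x,v)=1$ and condition (i) includes every $1$-edge; moreover $x\neq u$ (as $N(u)\subseteq\mathcal{P}\setminus\{u\}$) and $x\neq v$ (as $d(x,v)=1$), so $u,x,v$ is a genuine two-edge path. The inequalities $d(u,v)=2>1=d(x,v)>0=d(v,v)$ certify that it is greedy. Thus every pair of nodes has finite greedy-routing-distance, i.e.\ greedy routing is enabled.

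I do not expect a genuine obstacle here; in particular condition (iii) and the refined subclasses (MinDSG, MaxDSG, BDSG) are irrelevant for this lemma, and \Cref{lem:12-stretches} is only used to reassure us that no longer greedy paths could be needed — the argument above in fact constructs a short one outright. The only point requiring a little care is verifying that the intermediate vertex $x$ is distinct from both $u$ and $v$, so that $u,x,v$ is a valid simple path and the strict-descent inequalities are non-degenerate; both facts fall out immediately from $N(u)\subseteq\mathcal{P}\setminus\{u\}$ and from $d(x,v)=1$.
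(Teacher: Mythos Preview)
Your proof is correct and follows essentially the same approach as the paper: both split on whether $u$ already has a direct edge to $v$ and, when not, invoke condition~(ii) to find an intermediate $x$ with $d(x,v)=1$, then use condition~(i) to ensure the edge $(x,v)$ is present. Your version is slightly more detailed (explicitly treating the $d(u,v)=1$ case and checking $x\notin\{u,v\}$), but the argument is the same.
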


\begin{proof}
    Every agent \(u\) trivially has a greedy path to any node that it builds an edge to. For any node \(v\) that \(u\) does not build an edge to, there is still a node \(x\) in distance \(1\) from \(v\) that \(u\) builds an edge to because the out-neighborhood of $u$ is by definition a dominating set in $G_{-u}^1$. Thus, the path \(u,x,v\) exists in the DSG and it is a greedy path because \(d(u,v) = 2 > 1 = d(x,v)\).
\end{proof}

We now show that DSGs can be used to characterize NE.
\begin{restatable}{theorem}{thmtwofive2}
    \label{lem:12-dir-NE-BDSG}
    In a 1-2-metric, for any $\alpha >0$, all BDSG$(\alpha)$ are the only NE.
\end{restatable}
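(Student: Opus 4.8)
The plan is to reduce the whole statement to a single-agent cost computation. By \Cref{lem:12-DSG} every NE, being a GE, is a DSG, by \Cref{lem:12-DSG-greedy} greedy routing is enabled in every DSG, and, exactly as in the proof of \Cref{the:12-directed-pne-opt}, once all $1$-edges are present an agent's cost depends only on its own outgoing edges. So the first step is to establish, for any agent $u$ in a network with greedy routing enabled and all $1$-edges present, the closed formula
\[
  c_u(\s) \;=\; \underbrace{(1+\alpha)\,|W_1(u)| + \tfrac{3}{2}\,|D_2| - \tfrac{1}{2}\,|A|}_{\text{depends only on }\M} \;+\; \Bigl(\alpha-\tfrac{1}{2}\Bigr)\,|W_2(u)| + \tfrac{1}{2}\,|W_2^+(u)|,
\]
where $D_2=\{v\neq u : d(u,v)=2\}$ and $A=D_2\cap W_{1\to1}(u)$. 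The last two terms are precisely the quantity the BDSG$(\alpha)$ definition asks each agent to minimize.

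To prove the formula, fix $u$; since all $1$-edges are present, $W_1(u)$, $W_{1\to1}(u)$, and hence $A$ are determined by $\M$ alone. By \Cref{lem:12-stretches} and \Cref{lem:12-2edges-first}, every greedy path out of $u$ has at most two edges and a two-hop greedy path to a distance-$2$ target has a $1$-edge as its second hop. Hence each $v$ with $d(u,v)=1$ has stretch $1$ and forces an edge cost $\alpha$; each $v\in D_2$ with $v\in W_2(u)\cup W_{1\to1}(u)$ has stretch $1$; and each remaining $v\in D_2$ has stretch $\tfrac{3}{2}$, because $N(u)$ dominates $v$ in $G_{-u}^1$, giving a two-hop greedy path $u,x,v$ whose first edge must be a $2$-edge (otherwise $v\in W_{1\to1}(u)$), so of length $3$, and no greedy path of length $2$ exists. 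Summing these stretches, adding the edge cost $\alpha(|W_1(u)|+|W_2(u)|)$, using $|(W_2(u)\cup W_{1\to1}(u))\cap D_2| = |W_2(u)|+|A|-|W_2^+(u)|$, and collecting terms yields the displayed identity.

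Both inclusions then follow. For the forward direction (every BDSG$(\alpha)$ is a NE): any deviation of an agent $u$ that drops a $1$-edge or makes $N(u)$ non-dominating in $G_{-u}^1$ destroys a greedy path (by \Cref{lem:12-stretches}) and therefore costs at least $Z$, which exceeds the finite cost $u$ incurs in a DSG; every other deviation keeps all $1$-edges and a dominating out-neighborhood (all $1$-edges not incident to $u$ are present since we started from a DSG), so the formula applies to it, and such a strategy can be turned into a valid DSG strategy \emph{without increasing the objective} by repeatedly deleting or swapping an edge into $W_2^+(u)$ — each such step strictly lowers $|W_2^+(u)|$ and the objective while preserving domination, which is exactly the move DSG condition (iii) forbids. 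Hence every deviation has objective at least that of the objective-minimal BDSG strategy, so no agent can improve. For the reverse direction (every NE is a BDSG$(\alpha)$): a NE is a DSG by \Cref{lem:12-DSG}, so the formula applies; if some agent's strategy were not objective-minimal among DSG strategies, switching to an objective-minimal DSG strategy — which keeps greedy routing enabled, so the formula still applies — would strictly decrease $c_u$, contradicting the NE property. Therefore the set of NE equals the set of BDSG$(\alpha)$.

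The step I expect to be the main obstacle is pinning down the cost formula exactly: classifying each target of $u$ as having stretch $1$ or $\tfrac{3}{2}$ purely in terms of $W_2(u)$ and the metric-determined sets $W_1(u),W_{1\to1}(u)$, using crucially that $2$-edges appear only as first edges of greedy paths; and, hand in hand with this, checking that minimizing $c_u$ over all strategies that keep greedy routing enabled has the same optimum as minimizing over DSG strategies — which is precisely what condition (iii), together with the $\tfrac{1}{2}|W_2^+(u)|$ penalty in the BDSG objective, encodes.
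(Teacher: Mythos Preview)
Your approach is essentially the same as the paper's: both proofs derive a closed-form cost expression for an agent $u$ (assuming all $1$-edges are present and $N(u)$ is dominating) and observe that, after subtracting the metric-determined constants, what remains is exactly the BDSG$(\alpha)$ objective $(\alpha-\tfrac12)|W_2(u)|+\tfrac12|W_2^+(u)|$; your displayed formula is algebraically identical to the paper's equation~(\ref{eq:cost_u}) after using $|W_2(u)|=|W_2^+(u)|+|W_2^-(u)|$ and $|V_2^-(u)|=|D_2|-|A|$. The only substantive addition you make is the explicit reduction of an arbitrary dominating strategy to a DSG strategy via repeated condition-(iii) moves, which cleanly justifies that minimizing the objective over DSG strategies coincides with minimizing it over all strategies that keep greedy routing enabled---the paper leaves this implicit by reading ``minimized'' in the BDSG definition as ``over all strategies''.
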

\begin{proof}
Let $V_1(u) = \{v \in \points\mid d_\M(u,v) = 1\}$ and $V_2(u) = \{v\in \points\mid d_\M(u,v) = 2\}$ be the sets of nodes in distance~1 and in distance~2 from agent~$u$ in the ground space. Also, let $V_2^+(u) = V_2(u) \cap W_{1\to1}(u)$ denote the set of nodes in distance~2 from $u$ that can be reached by a path of two 1-edges, and $V_2^-(u) = V_2(u) \setminus V_2^+(u)$ the set of nodes in distance~2 from agent $u$ that $u$ cannot reach via a path of two 1-edges.
Thus, we have $V_2(u)=V_2^+(u)\cup V_2^-(u)$. The cost of any strategy-profile $\s$ for agent~$u$ is equal to
$$c_u(\s)=|V_1(u)|+|V_{2}^+(u)|+|W_2^-(u)|+\frac{3}{2}\left[ |V_2^-(u)|-|W_2^-(u)|\right] +\alpha \left(|W_2^+(u)|+|W_2^-(u)|+|V_1(u)|\right).$$
By rearranging terms, we get:
\begin{equation}
c_u(\s) = (\alpha+1)|V_1(u)|+|V_{2}^+(u)|+\left(\alpha-\frac{1}{2}\right)|W_2^-(u)|+\frac{3}{2}|V_2^-(u)|+\alpha |W_2^+(u)|. \label{eq:cost_u}
\end{equation}

Now, let $\overline{\s}$ denote the strategy-profile of a BDSG$(\alpha)$ network $G(\overline{\s})$. Assume towards a contradiction that there is an agent $u\in \points$ that could decrease its cost in $G(\overline{\s}) = G((\overline{S_u},\overline{\s}_{-u}))$ by deviating to some other strategy $S_u \neq \overline{S_u}$. Let $\s = (S_u,\overline{\s}_{-u})$ denote the strategy-profile after the deviation of agent~$u$. Note that, by \Cref{lem:12-1edges}, also in $\s$ all 1-edges have to be built, in particular, all outgoing 1-edges of agent~$u$, since otherwise greedy routing would not be enabled and thus agent $u$ could not decrease its cost.
Let $V_1(u), V_2^+(u), V_2^-(u), W_2^+(u)$ and $W_2^-(u)$ denote the sets from equation~(\ref{eq:cost_u}) with respect to strategy profile $\s$ and let $\overline{V_1(u)}, \overline{V_2^+(u)}, \overline{V_2^-(u)}, \overline{W_2^+(u)}$ and $\overline{W_2^-(u)}$ be the sets from equation~(\ref{eq:cost_u}) for strategy-profile $\overline{\s}$. Also, remember that $W_2(u) = W_2^+(u) \cup W_2^-(u)$ and let $\overline{W_2(u)} = \overline{W_2^+(u)} \cup \overline{W_2^-(u)}$. Now,
observe that $V_1(u) = \overline{V_1(u)}$, $V_2^+(u) = \overline{V_2^+(u)}$, and $V_2^-(u) = \overline{V_2^-(u)}$, since these sets only depend on the 1-edges built by agent $u$, which are the same in $\s$ and $\overline{\s}$.
Thus, equation~(\ref{eq:cost_u}) gives agent $u$'s cost in strategy-profile $\s$. For strategy-profile $\overline{\s}$, equation~(\ref{eq:cost_u}) yields
\begin{equation}
c_u(\overline{\s}) = (\alpha+1)|V_1(u)|+|V_{2}^+(u)|+\left(\alpha-\frac{1}{2}\right)|\overline{W_2^-(u)}|+\frac{3}{2}|V_2^-(u)|+\alpha |\overline{W_2^+(u)}|, \label{eq:cost_u_minDSG}
\end{equation}
Thus, using equation~(\ref{eq:cost_u}) and equation~(\ref{eq:cost_u_minDSG}), that $|W_2(u)| = |W_2^+(u)| + |W_2^-(u)|$, that $|\overline{W_2(u)}| = |\overline{W_2^+(u)}| + |\overline{W_2^-(u)}|$, and that agent~$u$ has strictly lower cost in $\s$ compared to $\overline{\s}$, we have
\begin{align*}
 c_u(\s)&< c_u(\overline{\s})\\  
 \left(\alpha-\frac{1}{2}\right)|W_2^-(u)|+\alpha |W_2^+(u)|&< \left(\alpha-\frac{1}{2}\right)|\overline{W_2^-(u)}|+\alpha |\overline{W_2^+(u)}|\\
 \left(\alpha-\frac{1}{2}\right) \left(|W_2(u)|-|W_2^+(u)|\right)+\alpha |W_2^+(u)|&<  \left(\alpha-\frac{1}{2}\right) \left(|\overline{W_2(u)}|-|\overline{W_2^+(u)}|\right)+\alpha |\overline{W_2^+(u)}|\\
  \left(\alpha-\frac{1}{2}\right)|W_2(u)|-\left(\alpha-\frac{1}{2}\right)|W_2^+(u)|+\alpha |W_2^+(u)|&<  \left(\alpha-\frac{1}{2}\right)|\overline{W_2(u)}|-\left(\alpha-\frac{1}{2}\right)|\overline{W_2^+(u)}|+\alpha |\overline{W_2^+(u)}|\\
  \left(\alpha-\frac{1}{2}\right)|W_2(u)|+\frac{1}{2}|W_2^+(u)|&< \left(\alpha-\frac{1}{2}\right)|\overline{W_2(u)}|+\frac{1}{2}|\overline{W_2^+(u)}|.\\
\end{align*}
This is a contradiction, because according to the definition of a BDSG$(\alpha)$, in $G(\overline{\s})$ no agent $u$  can decrease the quantity $b = (\alpha-0.5)|\overline{W_2(u)}|+0.5|\overline{W_2^+(u)}|$. However, strategy $S_u$ of agent $u$ strictly decreases $b$.
Therefore, a BDSG$(\alpha)$ is a NE for any $a>0$, and it is the only one.
\end{proof}
\begin{restatable}{theorem}{thmtwofive}
    \label{lem:12-dir-sum-const}
    In a 1-2-metric we have that (i) for \(\alpha < \frac{1}{2}\), the MaxDSG is the only NE, (ii) for \(\alpha = \frac{1}{2}\), all DSGs are the only GE, but only the DSGs where no node builds edges to nodes reachable by two 1-edges are the only NE, (iii) for \(\alpha > \frac{1}{2}\), all MinDSG are GE. 
\end{restatable}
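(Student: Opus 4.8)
The plan is to combine the cost formula in equation~(\ref{eq:cost_u}) with three facts already available: the BDSG$(\alpha)$ are exactly the NE (\Cref{lem:12-dir-NE-BDSG}), every GE is a DSG (\Cref{lem:12-DSG}), and every DSG enables greedy routing (\Cref{lem:12-DSG-greedy}). The first step is to isolate, via equation~(\ref{eq:cost_u}), the part of an agent's cost that its own choice of $2$-edges controls: once all $1$-edges of agent~$u$ are present — which is forced whenever greedy routing is enabled, by \Cref{lem:12-1edges} — the only term of $c_u$ depending on $u$'s $2$-edges is $f_u := \left(\alpha-\frac{1}{2}\right)|W_2^-(u)| + \alpha\,|W_2^+(u)|$, and the BDSG$(\alpha)$-objective rearranges to $b_u := \left(\alpha-\frac{1}{2}\right)|W_2(u)| + \frac{1}{2}|W_2^+(u)| = \alpha\,|W_2^+(u)| + \left(\alpha-\frac{1}{2}\right)|W_2^-(u)| = f_u$. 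I would also record one feasibility fact used throughout: an agent can always achieve $W_2^+(u)=\emptyset$ while keeping its out-neighborhood dominating in $G^1_{-u}$, since the nodes reachable from $u$ via two $1$-edges are already dominated by $W_1(u)$, and every remaining node at distance~$2$ becomes dominated once $u$ builds a direct $2$-edge to it; doing this for all $u$ yields precisely the (unique) MaxDSG.

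For part (i), I would observe that when $\alpha<\frac{1}{2}$ we have $\alpha>0>\alpha-\frac{1}{2}$, so $b_u = \alpha|W_2^+(u)| + \left(\alpha-\frac{1}{2}\right)|W_2^-(u)|$ is \emph{uniquely} minimized by taking $W_2^+(u)=\emptyset$ and $|W_2^-(u)|$ maximal; by the feasibility fact this is a valid DSG and it is exactly the MaxDSG, so MaxDSG is the unique BDSG$(\alpha)$ and hence, by \Cref{lem:12-dir-NE-BDSG}, the unique NE. For the NE half of part (ii), at $\alpha=\frac{1}{2}$ we have $b_u = \frac{1}{2}|W_2^+(u)|$, minimized precisely when $W_2^+(u)=\emptyset$ for every $u$ (again feasible); thus the BDSG$(\frac{1}{2})$ are exactly the DSGs in which no agent builds a $2$-edge to a node it can already reach via two $1$-edges, and by \Cref{lem:12-dir-NE-BDSG} these are exactly the NE.

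For the GE claims it remains, by \Cref{lem:12-DSG} (GE $\subseteq$ DSG), to show that every DSG is a GE when $\alpha=\frac{1}{2}$ and that every MinDSG is a GE when $\alpha>\frac{1}{2}$. I would fix such a network, an agent~$u$, and a single edge addition/deletion/swap, and argue it is never strictly improving. A move that removes or swaps away a $1$-edge destroys greedy routing (\Cref{lem:12-1edges}), so only $2$-edge moves matter; for those I compare $f_u$ when greedy routing stays enabled and otherwise note that $c_u$ jumps by (roughly) the penalty $Z$. Adding a $2$-edge always raises $f_u$ — by $\alpha$ to a node reachable via two $1$-edges, by $\alpha-\frac{1}{2}\ge 0$ (strictly if $\alpha>\frac12$) otherwise — hence is not improving. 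For deletions and swaps I would invoke DSG condition~(iii): any such move that lowers $|W_2^+(u)|$ does not increase $|N(u)|$, so by~(iii) it destroys domination in $G^1_{-u}$ and thus greedy routing; it therefore suffices to consider moves that leave $|W_2^+(u)|$ unchanged. When $\alpha=\frac{1}{2}$ the coefficient of $|W_2^-(u)|$ in $f_u$ is $0$, so every such move is cost-neutral if it keeps greedy routing enabled and catastrophic otherwise — never strictly improving, giving the GE half of (ii). When $\alpha>\frac{1}{2}$ and the network is a MinDSG, a swap keeps $|W_2(u)|=|W_2^-(u)|+|W_2^+(u)|$ fixed, so $f_u = \left(\alpha-\frac{1}{2}\right)|W_2(u)| + \frac{1}{2}|W_2^+(u)|$ cannot decrease without decreasing $|W_2^+(u)|$ (excluded above), while any $2$-edge deletion leaves fewer nodes reachable via $2$-edges than the minimum the MinDSG definition requires for the out-neighborhood to be dominating, hence destroys greedy routing; so no strictly improving single-edge move exists, proving (iii).

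The hard part will be the exhaustive (though routine) case analysis of swap moves in the last paragraph, and, beneath it, reading DSG condition~(iii) in exactly the right way — as ``no single deletion or swap can both lower $|W_2^+(u)|$ and keep the out-neighborhood dominating without increasing $|N(u)|$'' — since this is precisely what rules out the otherwise ``free'' removal of a $2$-edge to a node of $W_2^+(u)$ when $\alpha\le\frac{1}{2}$. A second delicate point is the boundary $\alpha=\frac{1}{2}$, where many moves are exactly cost-neutral: one must check that such moves are never \emph{strictly} improving (so that non-minimal DSGs are still GE), even though those same cost-neutral redundant $2$-edges to distance-$2$ nodes are exactly what can be eliminated by a multi-edge deviation, which is why a non-minimal DSG fails to be a NE.
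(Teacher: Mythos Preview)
Your proposal is correct and follows essentially the same approach as the paper: both derive parts (i) and the NE half of (ii) directly from the BDSG$(\alpha)$ characterization (\Cref{lem:12-dir-NE-BDSG}), and both handle the GE claims by a single-edge case analysis that hinges on DSG condition~(iii) to rule out the otherwise free deletion/swap of a $W_2^+$-edge. The only cosmetic difference is that you package the GE analysis uniformly through the variable-cost term $f_u$ and organize cases around whether $|W_2^+(u)|$ drops, whereas the paper mixes direct stretch bookkeeping with an equivalent stretchcost formula organized around $|W_2^-(u)|$; one small imprecision is that after excluding moves that lower $|W_2^+(u)|$ you write ``leave $|W_2^+(u)|$ unchanged'' where you really mean ``do not lower it'', but since swaps that raise $|W_2^+(u)|$ strictly increase $f_u$ this is harmless.
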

\begin{proof}
First, let \(\alpha < \frac{1}{2}\). From \Cref{lem:12-dir-NE-BDSG} we have that $BDSG(\alpha)$ are the only NE. When $\alpha < \frac{1}{2}$, the network $BDSG(\alpha)$ corresponds to the MaxDSG. Thus,
the MaxDSG is the only NE.

Next, let \(\alpha = \frac{1}{2}\). Then, the network $BDSG(\alpha)$ corresponds to DSGs where for every node $u$ the quantity $|W_2^+(u)|$ is minimized, i.e. no node builds edges to nodes that are reachable by two 1-edges. Thus, this category of DSGs are the only NE.

Regarding GE, every  DSG is a GE because removing any edge would either remove all greedy paths to some node or increase the stretch to the endpoint of the edge by at least \(\frac{1}{2} = \alpha\). Also adding any edge would decrease stretches by at most \(\frac{1}{2} = \alpha\). Swapping an edge also cannot decrease stretches. The number of edges an agent $u$ builds remains the same after a swap. Consequently, agent $u$ can reduce its cost only by reducing its stretchcosts. The stretchcost of agent~$u$ is equal to
\begin{align}
    \frac{3}{2}\left(|V|-1\right)-\frac{1}{2}\left(|W_{1}(u)|+|W_{1\to1}(u)|+|W_{2}^-(u)|\right)
    \label{1-2swap}
\end{align}
since, by Lemma \ref{lem:12-stretches}, all greedy paths have stretches $1$ or $\frac{3}{2}$.
By Lemma \ref{lem:12-1edges}, all 1-edges have to be built to enable greedy routing, thus agent $u$ can only swap 2-edges.  

If after one swap the size of $|W_{2}^-(u)|$:
\begin{itemize}
    \item decreases, then the stretchcost of agent~$u$ increases because of (\ref{1-2swap});
    \item remains the same, then the stretchcost of agent~$u$ will remain the same; 
    \item increases, that means that $|W_{2}^+(u)|$ will be decreased. By the definition of a DSG, the value $|W_{2}^+(u)|$ cannot be decreased without increasing the size of the out-neighborhood of $u$. Since we are considering swaps, the number of edges that $u$ builds cannot change, a contradiction.
\end{itemize}
Therefore, swapping an edge is not an improving move for agent $u$.
Since, by \Cref{lem:12-DSG}, all GE are DSG, there can be no other GE.

    Next, let \(\alpha > \frac{1}{2}\). Every MinDSG is a GE because removing an edge would also remove all greedy paths to some node and adding an edge would decrease stretches by at most \(\frac{1}{2} < \alpha\). As we showed before, swapping an edge is not an improving move either.
\end{proof}

\subsubsection*{\textbf{Dynamic Properties}}
We explore if NE networks can be found by allowing agents iteratively to select improving  or best responses. 
First, we show that improving response cycles (IRCs) exist. Then we contrast this with the positive result that best response cycles (BRCs) cannot exist. The proof of the former relies on the absence of greedy paths between certain agents, resulting in costs higher than \(Z\). This is necessary, since no IRCs can exist if the costs of all agents are less than \(Z\), i.e., if the network already enables greedy routing. This can be proven with the same arguments used below to show that best response cycles cannot exist. 

\begin{restatable}{theorem}{thmtwosix}
\label{the:12-dir-irc}
    In a 1-2-metric, IRCs can exist for any $\alpha >0$.
\end{restatable}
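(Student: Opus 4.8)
The plan is to fix one concrete $1$-$2$-metric instance and to write down an explicit closed improving-response path $\s_0\to\s_1\to\dots\to\s_k=\s_0$, each transition being an improving response of a single agent.

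I would start from the observation behind the sentence preceding the theorem: such a cycle must pass through profiles that do \emph{not} enable greedy routing. Indeed, by \Cref{lem:12-1edges} every agent of finite cost builds all of its $1$-edges, and by \Cref{lem:12-2edges-first} an agent's $2$-edges never lie on a greedy path used by another agent; hence once all $1$-edges are present the cost of every agent depends only on its own strategy, the game decomposes into independent single-agent minimizations, and no improvement cycle can occur. So the whole cycle must be powered by the penalty $Z$, which, being arbitrarily large, dominates every other term of the cost \eqref{eq:cost_u} (the remaining stretches lie in $\{1,\tfrac32\}$ by \Cref{lem:12-stretches}, and the edge cost is at most $\alpha|\points|$). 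Therefore a move of an agent $u$ is improving as soon as it strictly decreases the number of nodes that are not greedily reachable from $u$, for \emph{every} $\alpha>0$; all the cost checks will reduce to this one fact, together with the trivial facts that adding a missing $1$-edge is improving and that deleting an edge which changes no stretch is improving.

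The engine of the cycle is a ``reshuffle'' move that a best-responding agent would never make --- which is exactly why IRCs will exist although, as shown next, BRCs cannot. Suppose $u$ builds all its $1$-edges, reaches a node $x$ only via its own $1$-edge $(u,x)$, and currently has at least two distance-$2$ nodes $s_1,s_2$ that are unreachable from it. Then $u$ strictly improves by simultaneously deleting $(u,x)$ and adding the direct $2$-edges $(u,s_1),(u,s_2)$: it loses one reachable node ($x$) but regains two ($s_1,s_2$). Deleting $(u,x)$ breaks the relay $w,u,x$ of every agent $w$ that reached $x$ through $u$, so $w$ in turn gains an unreachable distance-$2$ node; conversely, re-adding $(u,x)$ repairs those relays, after which the agents that had covered $x$ by a direct $2$-edge can delete that now-redundant edge. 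The instance I would look for is a cyclically-ordered family of such active agents, each equipped with a private deletable $1$-edge and with relay dependencies on its neighbours, arranged so that once an active agent has accumulated two unreachable nodes it performs its own reshuffle, damaging the next agent, and the wave of reshuffles (followed by a wave of $1$-edge re-additions and then a wave of redundant-$2$-edge deletions) travels exactly once around the cycle and lands back on $\s_0$.

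The bulk of the work --- and where I expect the real difficulty to lie --- is engineering the instance so that this cascade is \emph{balanced and self-closing}: the $1$-edge deletions, $1$-edge re-additions and $2$-edge additions/deletions of all active agents must cancel over one revolution; each active agent must hold exactly two unreachable nodes precisely at its turn, which forces careful bookkeeping of who relays through whom and through which deleted $1$-edge; and the $1$-/$2$-distances must be chosen so that no unintended greedy path ever short-circuits a step (in particular so that $u$'s only route to $x$ is $(u,x)$, $w$'s only route to $x$ is $w,u,x$, and the $2$-edges scheduled for deletion really are redundant when their turn comes). Once such an instance and the explicit sequence $\s_0,\dots,\s_k$ are on the table, every transition is verified by the single criterion above, which completes the proof.
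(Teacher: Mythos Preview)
Your diagnosis of the mechanism is exactly right: any IRC here must live entirely in profiles where greedy routing fails, and each step is certified simply by a net gain of one reachable node, which for arbitrarily large $Z$ dominates every $\alpha$-dependent term. That is precisely the engine the paper uses, and your remark that such a step is never a \emph{best} response (foreshadowing \Cref{the:12-brc}) is also on target.

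Where you diverge is in the architecture of the cycle. You propose a ring of active agents and a three-phase wave---each agent deletes a private $1$-edge while adding two $2$-edges, later re-adds the $1$-edge, and later still drops the now-redundant $2$-edges---and you correctly flag that making this cascade balanced and self-closing is where the real work lies. The paper bypasses all of that bookkeeping with a far smaller gadget: only \emph{two} active agents $u$ and $v$, a \emph{four}-step cycle, and \emph{constant edge count throughout}. In each step one of $u,v$ replaces its entire current three-edge strategy by a different three-edge strategy (in both cases consisting solely of $1$-edges), arranged so that the agent loses greedy paths to three or four nodes but gains greedy paths to four or five, for a net gain of exactly one reachable node and hence a stretchcost drop of $Z-1$ with \emph{zero} change in edgecost. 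No $2$-edges are ever built by the active agents, there is no re-addition phase and no redundancy-deletion phase; the only interaction needed is that $u$ and $v$ sometimes serve as relays for one another, so that one agent's swap silently destroys a relay the other was counting on, setting up the next step.

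So the gap in your proposal is concrete rather than conceptual: for an existence theorem you must actually exhibit the instance, and the one you sketch is both considerably more intricate than necessary and not written down. The simplification you are missing is that an improving response can be a wholesale swap of one size-$k$ strategy for another size-$k$ strategy over a different subset of $1$-neighbours, rather than the local ``delete a $1$-edge, add two $2$-edges'' edit you build around; once you allow that, two agents with suitably overlapping $1$-neighbourhoods already suffice.
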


\begin{proof}
 We show that there is an IRC in the metric shown in \Cref{fig:12-irc}a regardless the value of \(\alpha\). In particular, in  each step of the IRC, the  stretchcosts of the agents that changes its strategy improves by exactly \(Z - 1\). By definition of \(Z\), this will always outweigh any changes in the edgecosts, since agents always build the same number of edges of the same length.

Consider the network shown in \Cref{fig:12-irc}b. If agent \(u\) changes its strategy from $\{c,d,i\}$ to $\{a,b,v\}$ we get the network shown in \Cref{fig:12-irc}c. With this agent $u$ only loses greedy paths of length 1 to the three nodes \(c,d\) and \(i\) and of length 2 (and stretch 1) to node~\(j\) but it gains greedy paths of length 1 to the three nodes \(a,b\) and \(v\) and of length 2 (and stretch 1) to the two nodes \(e\) and \(f\). Thus, agent \(u\)'s stretchcost improves by \(Z - 1\).
\begin{figure}[h]
\includegraphics[width=\textwidth]{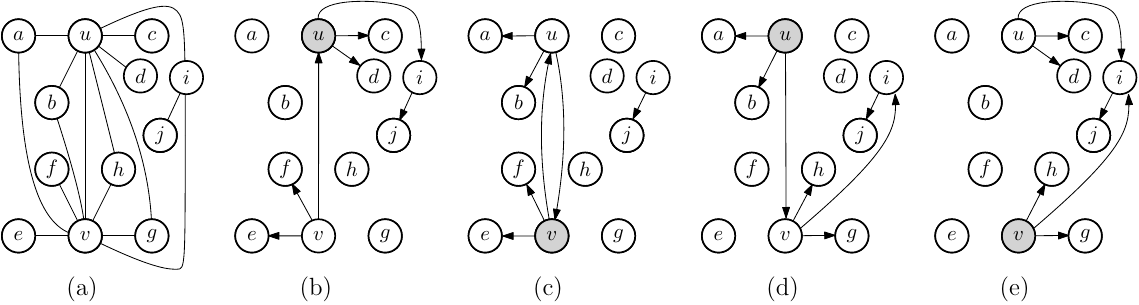}
\caption{1-2 metric with improving response cycle. (a) sample 1-2-metric, only 1-edges are shown; (b) - (e): steps of the improving response cycle, shaded agents change their strategy in the respective step.}
\label{fig:12-irc}
\end{figure}

In the network depicted in \Cref{fig:12-irc}c, if agent~\(v\) changes its strategy from $\{e,f,u\}$ to $\{h,g,i\}$ we get the network shown in~\Cref{fig:12-irc}d. With this agent $v$ only loses greedy paths of length 1 to the three nodes \(e,f\) and \(u\) (note that the paths $v,u,a$ and $v,u,b$ are not greedy paths) but at the same time it establishes greedy paths of length 1 to the three nodes \(g, h\) and \(i\) and of length 2 (and stretch~1) to \(j\). Thus, agent~\(v\)'s stretchcosts improve by \(Z - 1\).

In the network shown in \Cref{fig:12-irc}d, if agent~\(u\) changes its strategy $\{a,b,v\}$ to $\{c,d,i\}$ we get the network shown in~\Cref{fig:12-irc}e. With this, agent $u$ loses greedy paths of length 1 to the three nodes \(a,b\) and \(v\) but it gains greedy paths of length 1 to the three nodes \(c,d\) and \(i\) and of length 2 (and stretch 1) to \(j\). Thus, agent~\(u\)'s stretchcosts improve by \(Z - 1\).

Finally, if in the network shown in~\Cref{fig:12-irc}e agent~\(v\) changes its strategy $\{h,g,i\}$ to $\{e,f,u\}$ we get the network shown in~\Cref{fig:12-irc}b. With this strategy change, agent $v$ only loses greedy paths of length 1 to the three nodes \(g,h\) and \(i\) and of length 2 (and stretch 1) to \(j\) but it gains greedy paths of length 1 to the three nodes \(e,f\) and \(u\) and of length 2 (and stretch 1) to the two nodes \(c\) and \(d\). Thus, agent~\(v\)'s stretchcost improves by \(Z - 1\). As this is the same strategy-profile that we started with, we have found an improving response cycle.
\end{proof}

Interestingly, if the agents iteratively select best response strategies, then no such cyclic behavior can occur. Thus, best response dynamics starting from any initial network are guaranteed to converge to a NE network. However, we will show later that computing a best response is hard. 
\begin{restatable}{theorem}{thmtwoseven}
	\label{the:12-brc}
    In a 1-2-metric, best response cycles cannot exist.
\end{restatable}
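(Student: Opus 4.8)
The plan is to exploit a structural feature of $1$-$2$-metrics: the agents influence one another \emph{only} through their $1$-edges, and these $1$-edges are essentially frozen along any best-response sequence. A best-response cycle would then force some agent to strictly improve away from a cost that is already optimal for it, which is absurd.

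First I would record that a best response never deletes a $1$-edge. If an agent $u$ omits an outgoing $1$-edge to a node $v$, then by the argument of \Cref{lem:12-1edges} there is no greedy path from $u$ to $v$, so $u$'s cost contains the penalty $Z$; since $u$ can always guarantee cost at most $(1+\alpha)(n-1)$ by simply building all of its edges, and $Z$ is chosen large enough to dominate any such finite cost, every best response of $u$ contains all of $u$'s outgoing $1$-edges. Hence along any best-response path the set of $1$-edges is monotone non-decreasing, and the mover holds all of its outgoing $1$-edges in the profile immediately after it moves.

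Next, consider a purported best-response cycle $\s_0,\dots,\s_k=\s_0$ and let $A$ be the set of agents that move in it. Each agent in $A$ moves at least twice: if an agent moved only once, its pre-move strategy and its different post-move strategy would both have to equal its strategy in $\s_k=\s_0$. For $u\in A$, between consecutive moves its strategy is fixed, and just after each of its moves it holds all of its outgoing $1$-edges; together with $\s_k=\s_0$ this forces $u$ to hold all of its outgoing $1$-edges in \emph{every} profile of the cycle. Agents outside $A$ keep a fixed strategy throughout. Therefore, for each agent $u$, the set of $1$-edges present among the \emph{other} agents is identical in all of $\s_0,\dots,\s_k$. By \Cref{lem:12-2edges-first} no greedy path starting at $u$ ever uses a $2$-edge of another agent, so $c_u(S_u,\s_{-u})$ is a function of $S_u$ and of those (now fixed) foreign $1$-edges only — exactly the dependence visible in equation~(\ref{eq:cost_u}). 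Consequently $u$'s best-response cost $\beta_u := \min_{S} c_u(S,\s_{-u})$ is the same constant in every profile of the cycle.

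To finish, fix $u\in A$ and let $i<j$ be two consecutive steps at which $u$ moves. After step $i$ we have $c_u(\s_i)=\beta_u$, and between steps $i$ and $j$ neither $u$'s strategy nor the $1$-edges of the other agents change, so $c_u(\s_{j-1})=\beta_u$ as well; but the move at step $j$ is a best-response move and hence must strictly decrease $u$'s cost below $c_u(\s_{j-1})=\beta_u$, whereas no strategy attains cost below $\beta_u$ against the fixed environment — a contradiction. The only place that genuinely needs care is the middle step: a priori this is a real game and an agent's cost could oscillate as others move, so the heart of the argument is verifying that the agents' mutual influence is confined to $1$-edges and that these are frozen over a cycle (which is where \Cref{lem:12-1edges}, \Cref{lem:12-2edges-first}, and the choice of $Z$ enter); the remaining steps are then immediate.
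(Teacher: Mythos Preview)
Your proof is correct and follows essentially the same approach as the paper: both arguments hinge on the two facts that best responses always contain all outgoing $1$-edges (\Cref{lem:12-1edges}) and that an agent's $2$-edges never lie on another agent's greedy paths (\Cref{lem:12-2edges-first}), so that along a best-response cycle the $1$-edge structure is frozen and each agent's cost depends only on its own strategy. Your write-up is considerably more careful than the paper's three-sentence proof---in particular you explicitly argue that every moving agent moves at least twice and that its $1$-edges are present in \emph{every} profile of the cycle (not just after its moves), points the paper glosses over---but the underlying idea is the same.
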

\begin{proof}
	By \Cref{lem:12-1edges}, all best responses always build all 1-edges and thus, a best response cycle can never include changes of the 1-edges. By \Cref{lem:12-2edges-first}, no 2-edge that any agent builds can be part of a greedy path that is used by any other agent and thus, any best response cycle cannot consist solely of changes of 2-edges either. Thus, best response cycles cannot exist.	
\end{proof}
\subsubsection*{\textbf{Computational Complexity}}
Here, we investigate the computational complexity of computing a best response and of deciding if a given network is in NE. We use our characterization theorem and the tight connection to the \textsc{Minimum Dominating Set} problem, which asks for minimum size dominating set for a given network $G$. It is well-known that \textsc{Minimum Dominating Set} is NP-hard~\cite{garey1979computers}. We get the following dichotomy results:
\begin{restatable}{theorem}{thmtwoeight}
    \label{the:12-best-response-np}
    In a 1-2-metric, computing a best response and deciding if a given network is in NE is NP-hard for $\alpha > \frac12$ and polynomial time computable for $\alpha \leq \frac12$. 
\end{restatable}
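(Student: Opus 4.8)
The statement is a dichotomy at the threshold $\alpha=\tfrac12$, and all four claims reduce to the complexity of the \textsc{Minimum Dominating Set} problem through the closed-form cost expression~(\ref{eq:cost_u}) together with the characterization of NE as $\mathrm{BDSG}(\alpha)$ networks (\Cref{lem:12-dir-NE-BDSG}) and the finer descriptions in \Cref{lem:12-dir-sum-const}. The guiding intuition is that, by~(\ref{eq:cost_u}), an agent's cost depends on its outgoing $2$-edges only through $|W_2^-(u)|$ and $|W_2^+(u)|$, so --- up to the sign of $\alpha-\tfrac12$ --- an agent's optimal set of $2$-edges is a minimum (for $\alpha>\tfrac12$) or maximum (for $\alpha<\tfrac12$) dominating set in the $1$-edge graph $G_{-u}^1$.

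For the tractable side $\alpha\le\tfrac12$ I would first exhibit an explicit best response against any fixed $\s_{-u}$: it must contain all of $u$'s $1$-edges (as in \Cref{lem:12-1edges}), and by \Cref{lem:12-stretches} the only ways for $u$ to reach a distance-$2$ target $v$ are a direct $2$-edge, a length-$2$ greedy path $u,x,v$ with $d(u,x)=1$ (stretch $1$, no extra edge cost, available when the intermediate $1$-edge $(x,v)$ is already present), or a length-$2$ path through a purchased ``hub'' at distance $2$ (stretch $\tfrac32$, extra cost $\alpha$). A short accounting shows that for $\alpha\le\tfrac12$ buying a hub never strictly helps, so ``all $1$-edges plus a direct $2$-edge to each distance-$2$ target not already reached by such a length-$2$ path'' is a best response, computable in polynomial time. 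For NE-decision, \Cref{lem:12-dir-sum-const}(i)--(ii) pins the NE down to the explicitly described MaxDSG when $\alpha<\tfrac12$, and to the DSGs in which no agent builds a $2$-edge to a node reachable by two $1$-edges when $\alpha=\tfrac12$; both families are recognized in polynomial time (for $\alpha<\tfrac12$ by comparison with the unique MaxDSG; for $\alpha=\tfrac12$ by checking presence of all $1$-edges, emptiness of all $W_2^+(\cdot)$, and the domination condition~(ii), condition~(iii) then being vacuous).

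For the hard side $\alpha>\tfrac12$, the best-response claim follows from a reduction from \textsc{Minimum Dominating Set}. Given a graph $G=(\{v_1,\dots,v_n\},E)$, build the $1$-$2$-metric on $\points=\{u\}\cup\{a_1,\dots,a_n\}$ with $d_\M(u,a_i)=2$ for all $i$ and $d_\M(a_i,a_j)=1$ exactly when $v_iv_j\in E$ (every symmetric $\{1,2\}$-valued map is automatically a metric), and let $\s_{-u}$ be the profile in which each $a_i$ builds all its $1$-edges and a $2$-edge to $u$. Then $V_1(u)=\emptyset$ and $V_2(u)=V_2^-(u)=\{a_1,\dots,a_n\}$, so by~(\ref{eq:cost_u}) any response of $u$ with all stretches finite costs exactly $(\alpha-\tfrac12)|W_2(u)|+\tfrac32 n$; and since (by \Cref{lem:12-stretches}) the only greedy paths from $u$ to $a_i$ are the direct edge and $u,a_j,a_i$ with $v_iv_j\in E$, finiteness forces $\{a_j:a_j\in W_2(u)\}$ to be a dominating set of $G$. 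Hence for $\alpha>\tfrac12$ a best response of $u$ builds $2$-edges precisely to a minimum dominating set of $G$, and reading off $|W_2(u)|$ returns the domination number; so computing a best response is NP-hard.

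For the NE-decision claim I would reuse the same metric with a candidate network $\s_0$ in which $u$ builds $2$-edges to a prescribed dominating set $D$ of $G$ while every other agent plays its forced strategy (all $1$-edges plus the $2$-edge to $u$); by the $\mathrm{BDSG}(\alpha)$ characterization and the cost accounting above, $\s_0$ is in NE iff $D$ is a minimum dominating set of $G$. The step I expect to be the main obstacle is making the non-$u$ agents provably best-responding in $\s_0$, so that the NE-status of $\s_0$ hinges solely on $u$'s choice: an agent $a_i$ must reach every other $a_j$ at distance $2$ ``for free'', which forces the $1$-edge graph on $\{a_1,\dots,a_n\}$ to have diameter at most $2$, so the reduction should start from \textsc{Minimum Dominating Set} on suitably restricted (e.g.\ bounded-diameter) graphs and pad $G$ without altering its domination number. (One should also check whether the reduction certifies NP-hardness or only co-NP-hardness, since the negation of ``$\s_0$ in NE'' has an obvious short certificate.) In every case, the combinatorial heart is immediate from~(\ref{eq:cost_u}): an agent's optimal set of $2$-edges is, up to the sign of $\alpha-\tfrac12$, a minimum or maximum dominating set.
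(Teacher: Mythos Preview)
Your proposal is correct and follows essentially the same route as the paper. For the best-response hardness you use the identical reduction: a fresh agent $u$ at distance $2$ from every vertex of the input graph $G$, with $1$-edges encoding $E(G)$, so that (via the cost formula~(\ref{eq:cost_u})) a best response for $u$ is exactly a minimum dominating set of $G$. The only cosmetic difference is that the paper's $\s_{-u}$ omits the $2$-edges from the $a_i$ back to $u$, which is immaterial for $u$'s best response. For $\alpha\le\tfrac12$ the paper likewise appeals to \Cref{lem:12-dir-sum-const}: the MaxDSG strategy (all $1$-edges plus a direct $2$-edge to every distance-$2$ target not already reached by two $1$-edges) is a best response and is computable in polynomial time, and NE-recognition reduces to checking the explicit MaxDSG/DSG description.

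For the NE-decision hardness with $\alpha>\tfrac12$ the paper is in fact terser than you are: it simply invokes \Cref{lem:12-dir-NE-BDSG} to say that testing NE amounts to checking, for each agent, whether its $2$-edge set minimizes $(\alpha-\tfrac12)|W_2(u)|+\tfrac12|W_2^+(u)|$, which for $W_2^+(u)=\emptyset$ is minimum domination in $G_{-u}^1$, and notes that $G_{-u}^1$ ``can be arbitrary''. The obstacle you single out---that in a concrete candidate network the non-$u$ agents must themselves be best-responding, which in your gadget forces a diameter-$2$ constraint on the $1$-edge graph---and the question of whether the reduction yields NP-hardness or co-NP-hardness are not spelled out in the paper's proof either; your discussion is more careful on both points, and your suggested fix (restricting to a subclass where domination remains hard, or padding) is the natural way to close the gap.
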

\begin{proof}
     We first show that computing a best response strategy is NP-hard with $\alpha > \frac12$. For this, we reduce from the \textsc{Minimum Dominating Set} problem. 
     Given any graph \(G = (V, E)\) for which we want to compute a minimum size dominating set. We embed the nodes of $G$ as follows into a 1-2-metric: Let \(\mathcal{M} = (V \cup \{u\}, d)\) be a 1-2-metric, with $u\notin V$ being a new node, where for all \(x,y \in V \cup \{u\}\) we have
     \[d(x,y) = \begin{cases}
     1	&\text{, if }(x,y) \in E,\\
     2	&\text{, otherwise.}
     \end{cases}\]
     Furthermore, let \(\s\) be a strategy-profile where in $G(\s) = (V\cup \{u\},E(\s))$ all the edges in \(G\) are built in both directions but no other edges. We now consider the best response of agent $u$ and let \(D\) be the set of nodes that agent~\(u\) builds edges to in its best response in \(G(\s)\). Since we have $d(u,x) = 2$ for every node $x\in V$, it follows that node $u$ has a stretch of $1$ 
     to all nodes in $D$ and a stretch of~$\frac32$ to all nodes $w\in V\setminus D$ such that there exists a node $v\in D$ with $(v,w) \in E(\s)$, i.e, the edge $(v,w)$ is a 1-edge in $G(\s)$. Let $C$ be the set of these nodes, i.e., $C = \{w\in V\setminus D \mid \exists v\in D \wedge (v,w)\in E(\s)\}$. Note that for any node $z \in V \setminus (C\cup D\cup\{u\})$, there cannot exist a greedy path from $u$ to $z$, since, by \Cref{lem:12-stretches}, any such path can have at most two edges. Thus, since for any such node $z$ agent $u$ would incur stretchcost of $Z > \alpha + \frac32$, no such node $z$ can exist if $D$ is agent $u$'s best response. In this case, building an edge to $z$ would strictly decrease agent $u$'s cost, i.e., we have that $C\cup D = V$. Thus, the set $D$ must be a dominating set in $G$. Since every edge costs $\alpha > \frac12$ and building an edge to a node $x\in V$ can decrease its stretchcost at most by $\frac12$, agent $u$'s best response should buy as few edges as possible such that greedy routing is enabled, i.e., the set $D$ must be a minimum size dominating set in $G$. Thus, computing a best response strategy for agent $u$ is NP-hard.
     
     Next, if we have $\alpha \leq \frac12$ then the proof of \Cref{lem:12-dir-sum-const} implies for every agent $u$ that building all the edges that are incident outgoing edges of $u$ in the unique MaxDSG is a best response. This can be computed in polynomial time, since these are all outgoing edges, except the 2-edges $(u,w)$, where $(u,v)$ and $(v,w)$ both are 1-edges. 
    
     The NP-hardness of deciding if a given network $G$ is in NE in a 1-2-metric follows from \Cref{lem:12-dir-NE-BDSG}. There it is shown that for $\alpha > \frac{1}{2}$  all NE must be BDSG($
    \alpha$), i.e., the strategy of every agent $u$ must minimize the quantity $(\alpha-\frac{1}{2})|W_2(u)|+\frac{1}{2}|W_2^+(u)|$. When $|W_2^+(u)|=0$, this corresponds to a minimum  dominating set for the network $G_{-u}^1$ and this network can be arbitrary. 

     Finally, deciding if a given network is in NE with  $\alpha \leq \frac12$ can be done in polynomial time by \Cref{lem:12-dir-sum-const}.
\end{proof}

The characterization in \Cref{lem:12-dir-sum-const} and the above results directly imply that our results also hold for the problem of computing a NE network. This is true since MaxDSGs can be computed in polynomial time, whereas computing a MinDSG is NP-hard.
\begin{corollary}
In a 1-2-metric computing a NE can be done in polynomial time if $\alpha \leq \frac12$ and it is NP-hard if $\alpha> \frac12$.
\end{corollary}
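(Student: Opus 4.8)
The plan is to treat the two regimes $\alpha \le \tfrac12$ and $\alpha > \tfrac12$ separately, reducing each to facts already established about DSGs, and in particular to the characterization in \Cref{lem:12-dir-sum-const}, \Cref{lem:12-dir-NE-BDSG}, and the hardness construction behind \Cref{the:12-best-response-np}.

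For the easy regime $\alpha \le \tfrac12$, I would exhibit the (unique) MaxDSG as an explicitly computable NE. By \Cref{lem:12-dir-sum-const}(i) the MaxDSG is the only NE when $\alpha<\tfrac12$, and for $\alpha=\tfrac12$ it meets the characterization in \Cref{lem:12-dir-sum-const}(ii): in the MaxDSG every agent $u$ builds a 2-edge exactly to those nodes at distance $2$ that are \emph{not} reachable from $u$ by a path of two 1-edges, so $W_2^+(u)=\emptyset$ and hence no agent builds an edge to a node in $W_{1\to1}(u)$. It then only remains to observe that the MaxDSG is computable in polynomial time directly from the distance matrix: include every 1-edge, and for every ordered pair $(u,w)$ with $d(u,w)=2$ include the 2-edge $(u,w)$ iff there is no node $x$ with $d(u,x)=d(x,w)=1$; this is an $O(n^3)$ procedure.

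For $\alpha>\tfrac12$, I would give a Turing reduction from \textsc{Minimum Dominating Set}, reusing the embedding from the proof of \Cref{the:12-best-response-np}. Given a graph $G=(V,E)$, build the 1-2-metric $\M=(V\cup\{u\},d)$ with $d(x,y)=1$ iff $(x,y)\in E$ for $x,y\in V$, and $d(u,\cdot)=2$; this is a valid 1-2-metric whose 1-edge graph on $V$ is exactly $G$, i.e.\ $G^1_{-u}=G$. Suppose some algorithm computes a NE of $\M$ in polynomial time. By \Cref{lem:12-dir-NE-BDSG} that NE is a BDSG$(\alpha)$ (in particular a DSG), so agent $u$'s out-neighborhood is dominating in $G^1_{-u}$ and $u$'s strategy minimizes $(\alpha-\tfrac12)|W_2(u)|+\tfrac12|W_2^+(u)|$. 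Since $u$ has no incident 1-edge, $W_{1\to1}(u)=\emptyset$, hence $W_2^+(u)=\emptyset$ and $N(u)=W_2(u)$; the objective collapses to $(\alpha-\tfrac12)|W_2(u)|$ with $\alpha-\tfrac12>0$, so $W_2(u)$ is a minimum-size dominating set of $G$. Reading it off from the computed NE solves \textsc{Minimum Dominating Set} in polynomial time, which is impossible unless $\mathrm{P}=\mathrm{NP}$. (That $\M$ admits a NE, so ``compute a NE'' is well posed, follows from \Cref{the:12-directed-pne-opt} or \Cref{lem:12-dir-NE-BDSG}.) Finally, the hardness of NE-computation immediately carries over to, and is implied by, the NE-decision hardness already shown.

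The main obstacle is the $\alpha>\tfrac12$ direction, and specifically making the reduction tight: one must argue that in \emph{every} NE of $\M$ agent $u$'s out-edges form a \emph{minimum} dominating set, not merely a dominating set. The subtle point is that the BDSG objective in general trades $|W_2(u)|$ against $|W_2^+(u)|$, so a BDSG need not coincide with a MinDSG; the reduction works only because the embedding deliberately destroys all two-hop 1-edge paths out of $u$, forcing $W_2^+(u)=\emptyset$ and collapsing the BDSG objective to pure cardinality minimization. I would state this collapse carefully, and also note that the strategies of the remaining nodes are irrelevant for the extraction, since in a DSG one agent's strategy does not affect another agent's cost beyond the mandatory 1-edges (as used in the proof of \Cref{lem:12-DSG}).
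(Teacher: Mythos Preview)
Your proposal is correct and follows essentially the same approach as the paper, which derives the corollary in one sentence from \Cref{lem:12-dir-sum-const} and \Cref{the:12-best-response-np} by noting that MaxDSGs are polynomial-time computable while computing a MinDSG is NP-hard. Your version is in fact more careful than the paper's: you explicitly verify that the MaxDSG satisfies the $\alpha=\tfrac12$ characterization (via $W_2^+(u)=\emptyset$), and you spell out why the BDSG$(\alpha)$ objective collapses to pure cardinality minimization in the hardness instance (because the extra node $u$ has no incident 1-edges, forcing $W_{1\to1}(u)=\emptyset$), a point the paper leaves implicit in its one-line justification.
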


\subsubsection*{\textbf{Greedy Equilibria}}
Since computing a best response is NP-hard if $\alpha > \frac{1}{2}$, it is natural to consider simpler strategy changes, where an agent can only add, delete or swap a single edge to decrease its cost. Networks where no such strategy changes can improve the agents' cost are exactly the set of Greedy Equilibria (GE)~\cite{lenznerGreedy2012}. For other related variants of network creation games, it has been shown that GE are good approximations of NE, i.e., that every GE is a 3-NE~\cite{lenznerGreedy2012,biloGeometric2019}. We now show, that enforcing greedy routing changes this picture completely.
\begin{restatable}{theorem}{thmtwoten}
	\label{the:12-ge}
    In a 1-2-metric there are GE, that are not in \(\Omega(\frac{\alpha n}{\alpha + n})\)-NE. 
\end{restatable}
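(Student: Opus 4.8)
The plan is to construct, for any $\alpha > \tfrac12$ (for $\alpha \le \tfrac12$ every GE is already the unique NE by \Cref{lem:12-dir-sum-const}, so there is nothing to show), a single $1$-$2$-metric on $n$ points together with one network on it that is a GE yet in which one agent pays $\Omega\!\left(\tfrac{\alpha n}{\alpha+n}\right)$ times its best-response cost. Take the point set $\{u,c,v_1,\dots,v_m\}$ with $m=n-2$, declare $d(c,v_i)=1$ for every $i$, and let all remaining distances equal $2$; so the $1$-edge graph is a star with centre $c$ and leaves $v_1,\dots,v_m$, while $u$ sits at distance $2$ from everyone. Let $G$ be the network containing all $1$-edges (in both directions, as forced by \Cref{lem:12-1edges}), the $2$-edges $(c,u)$ and $(v_i,u)$ for every $i$, and---the crucial part---the $m$ $2$-edges $(u,v_1),\dots,(u,v_m)$, but \emph{not} the edge $(u,c)$.

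First I would verify that $G$ is a GE. Agents $c$ and each $v_i$ are in fact playing best responses: their $1$-edges are forced, $u$ is isolated in $G_{-c}^1$ and in every $G_{-v_i}^1$ so a direct $2$-edge to $u$ is unavoidable, and from there every other node is reached on a $2$-$1$-path of stretch $\tfrac32$, so deleting, adding or swapping an edge cannot help. For $u$ the three local moves are: adding $(u,c)$ reduces only the stretch to $c$, from $\tfrac32$ to $1$, a gain of $\tfrac12<\alpha$, hence no improvement; deleting some $(u,v_i)$ leaves no greedy path to $v_i$ at all, since $v_i$'s sole $1$-neighbour $c$ is not in $u$'s out-neighbourhood---so the cost jumps to $Z$; and the only non-trivial swap, exchanging some $(u,v_i)$ for $(u,c)$, keeps the edge count at $m$ and merely re-routes $v_i$ through $c$ at stretch $\tfrac32$ while making $c$ stretch $1$, i.e. the cost is exactly unchanged. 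Thus $u$ has no improving add/delete/swap and $G$ is a GE.

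Then I compute $u$'s two costs. In $G$, agent $u$ pays $\alpha m$ in edges, has stretch $1$ to each $v_i$ and stretch $\tfrac32$ to $c$, so $c_u(G)=(\alpha+1)m+\tfrac32$. For a best response, $u$'s out-neighbourhood must dominate the star $G_{-u}^1$, whose unique minimum dominating set is $\{c\}$; buying any additional leaf costs $\alpha$ but saves only $\tfrac12$ of stretch, and the alternative of buying all leaves is far worse, so the best response is $\{c\}$ with cost $\alpha+1+\tfrac32 m$. Hence the cost ratio of $u$ is $\frac{(\alpha+1)m+3/2}{(\alpha+1)+\tfrac32 m}\ \ge\ \frac{(\alpha+1)m}{(\alpha+1)+\tfrac32 m}\ =\ \Theta(\min(\alpha,m))\ =\ \Omega\!\left(\tfrac{\alpha n}{\alpha+n}\right)$, using $m=n-2$ and $\min(a,b)\ge \tfrac{ab}{a+b}$. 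So $G$ is a GE but $u$ can decrease its cost below $\tfrac1\beta c_u(G)$ for some $\beta=\Omega\!\left(\tfrac{\alpha n}{\alpha+n}\right)$, i.e. $G$ is not in $\Omega\!\left(\tfrac{\alpha n}{\alpha+n}\right)$-NE.

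The cost bookkeeping is routine; the step that needs care is the GE check for $u$, and inside it the swap case, where one must see that re-routing a leaf through $c$ is exactly cost-neutral (it trades one stretch-$1$ node for another and one stretch-$\tfrac32$ node for another), so no single-edge move escapes the inefficient configuration---which is precisely why $G$ stays a GE while being $\Omega\!\left(\tfrac{\alpha n}{\alpha+n}\right)$ away from optimal for $u$.
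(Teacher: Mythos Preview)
Your construction and the paper's share the same core idea: an agent $u$ is buying direct $2$-edges to many ``leaves'' when a single $2$-edge to a hub that $1$-dominates all of them would suffice, and no single add/delete/swap lets $u$ escape. Your star $\{u,c,v_1,\dots,v_m\}$ is in fact a slightly cleaner gadget than the paper's (which uses an extra helper node $w$), and your GE verification and cost bookkeeping are correct.

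The one substantive difference is the range of $\alpha$. In the paper's construction, $u$'s stretch to the hub is already $1$ (via the helper $w$), so adding the hub edge is never beneficial and the network is a GE for \emph{every} $\alpha>0$. In your construction, $u$'s stretch to $c$ is $\tfrac32$, so adding $(u,c)$ is a strict improvement whenever $\alpha<\tfrac12$, and your network is only a GE for $\alpha\ge\tfrac12$. Your stated justification for dismissing $\alpha\le\tfrac12$---that ``every GE is already the unique NE''---is not quite right (for $\alpha=\tfrac12$ there are many DSGs that are GE but not NE, and for $\alpha<\tfrac12$ the paper does not claim GE and NE coincide). The honest reason the small-$\alpha$ regime is harmless is simply that $\tfrac{\alpha n}{\alpha+n}<\tfrac12$ there, so the asserted $\Omega$-bound is vacuous as an approximation ratio. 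With that correction, your argument is complete for $\alpha\ge\tfrac12$ and matches the paper's bound; if you want a construction that is literally a GE for all $\alpha$, add one more node $w$ with $d(u,w)=d(w,c)=1$ (and all other $w$-distances $2$) so that $u$ already reaches $c$ at stretch $1$---this is exactly what the paper's extra node buys.
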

\begin{proof}
We begin by showing that there are networks in GE that are not in \(\frac{\alpha n - 2 \alpha + n-1}{2\alpha + n}\)-approximate NE. For this, we examine the network $G$ in \Cref{fig:Greedy}.
\begin{figure}[h]
\includegraphics[width=0.5\textwidth]{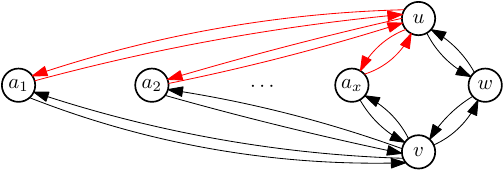}
\caption{A network in GE. Black edges represent 1-edges; red edges are 2-edges; all node pairs without edge inherit distance 2.}
\label{fig:Greedy}
\end{figure}

First, we show that network $G$ is in GE by proving that no agent can decrease its cost by adding, removing, or swapping a single incident outgoing edge. 

Let $\s$ be the strategy-profile such that $G = G(\s)$.
Agent \(u\) cannot remove its edges to any node \(a_i\) or \(w\) as that would remove the greedy path to that node. The stretch to node~\(v\) is already \(1\), so adding or swapping an edge would not be beneficial either.
No agent~\(a_i\) can remove its edge to nodes \(u\) or \(v\) as that would remove the greedy paths to them. The stretch to all other nodes~\(a_i\) and \(w\) is already \(1\), so adding or swapping an edge would not be beneficial either.
Agent~\(v\) cannot remove any edge to a node~\(a_i\) or \(w\) because that would remove the greedy paths to them. The stretch to node~\(u\) is already \(1\) so adding or swapping an edge would not be beneficial either.
Finally, agent~\(w\) cannot remove any edge to nodes~\(u\) or \(v\) because otherwise there would be no greedy path to that node. The stretches to all nodes~\(a_i\) are already \(1\), so adding an edge or swapping one would not be beneficial either. Thus, the network $G$ is in GE. 
Let $S_u = \{a_1,\cdots, a_x, w\}$, i.e., agent $u$'s strategy in $G(\s) = G(S_u,\s_{-u})$. The cost of agent $u$ is then given by
\[c_u(S_u,\s_{-u}) = \text{stretchcost}_u(S_u,\s_{-u}) + \text{edgecost}_u(S_u,\s_{-u}) = \alpha(x + 1) + n - 1 = n(\alpha + 1) - 2\alpha - 1,\]
since we have that $x = n - 3$.

Now, if in contrast agent~\(u\) would remove its edges to all nodes~\(a_i\) and instead build one edge to node~\(v\), i.e., the strategy $S_u' = \{v,w\}$, this yields a cost of
\[c_u(S_u',\s_{-u}) = \text{stretchcost}_u(S_u',\s_{-u}) + \text{edgecost}_u(S_u',\s_{-u}) = 2\alpha +\frac{3}{2}x + 2 = 2\alpha + \frac{3}{2}n - \frac{5}{2}.\]

Thus, the network is not in \(\frac{c_u(S_u,\s_{-u})}{c_u(S_u',\s_{-u})} = \frac{n(\alpha + 1) - 2\alpha - 1}{2\alpha + 3/2 n - 5/2} \in \Omega(\frac{\alpha n}{\alpha + n})\)-approximate NE.
\end{proof}

\subsubsection*{\textbf{Approximate Equilibria}} For $\alpha > \frac{1}{2}$, we know that NE exist, but even deciding if a given network is in NE is NP-hard. Thus, aiming for approximate equilibria seems an appropriate solution. However, we have seen that simply using GE for this is also an option. However, since approximation algorithms exist for the \textsc{Dominating Set} problem, there is a different approach. 

\begin{restatable}{theorem}{thmtwoeleven}
In 1-2-metrics, a $\mathcal{O}(\log n)$-approximate NE always exists.
\end{restatable}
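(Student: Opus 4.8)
The plan is to split on the edge price $\alpha$. For $\alpha \le \tfrac12$ there is nothing to do: by \Cref{lem:12-dir-sum-const} and \Cref{the:12-best-response-np} an exact NE (the MaxDSG) can be computed in polynomial time, and an exact NE is a fortiori a $1$-approximate, hence $\mathcal{O}(\log n)$-approximate, NE. So assume $\alpha > \tfrac12$. By \Cref{lem:12-dir-NE-BDSG} the NE here are exactly the BDSG$(\alpha)$: each agent $u$ builds all its $1$-edges (\Cref{lem:12-1edges}) and, in addition, a set of $2$-edges that makes $N(u)$ dominating in $G_{-u}^1$ while minimizing $(\alpha-\tfrac12)|W_2(u)|+\tfrac12|W_2^+(u)|$. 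Computing such a minimizer is NP-hard, but it is a \emph{weighted set cover} instance, and replacing its exact solution by the greedy $H_n$-approximation is exactly what we do.

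Concretely, I would construct $\tilde{G}$ with strategy-profile $\tilde{\s}$ as follows. Every agent builds all $1$-edges. For the $2$-edges of agent $u$: recall from the proof of \Cref{the:12-directed-pne-opt} that, once all $1$-edges are present, agent $u$'s situation is independent of the other agents' $2$-edges, and that its $1$-neighbors already dominate in $G_{-u}^1$ every node except those in $V_2^-(u)$ (the distance-$2$ nodes not reachable from $u$ by two $1$-edges). So set up the set-cover instance with universe $V_2^-(u)$ and one candidate set per node $x \in V_2(u)$, namely $S_x = (\{x\}\cap V_2^-(u)) \cup \{v \in V_2^-(u): d(x,v)=1\}$, with weight $\alpha-\tfrac12$ if $x\in V_2^-(u)$ and weight $\alpha$ if $x\in V_2^+(u)$; these weights are precisely the marginal increase of $(\alpha-\tfrac12)|W_2(u)|+\tfrac12|W_2^+(u)|$ caused by adding the $2$-edge $(u,x)$. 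Run greedy weighted set cover and let agent $u$ build a $2$-edge to each node picked. This runs in polynomial time, $\tilde{G}$ satisfies DSG-conditions (i) and (ii), so greedy routing is enabled (the proof of \Cref{lem:12-DSG-greedy} uses only these two conditions) and the cost formula (\ref{eq:cost_u}) applies to every agent of $\tilde{G}$.

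Finally, bound every agent's incentive to deviate. For fixed $u$ write its cost in $\tilde{G}$ as $c_u(\tilde{\s}) = A_u + b_u$, where the constant $A_u = (\alpha+1)|V_1(u)| + |V_2^+(u)| + \tfrac32|V_2^-(u)| \ge n-1$ depends only on the metric (any deviation of $u$ that drops a $1$-edge incurs cost $\ge Z$, so $u$'s best response keeps all $1$-edges, which leaves $A_u$ unchanged), and $b_u = (\alpha-\tfrac12)|W_2^-(u)| + \alpha|W_2^+(u)|$ equals the weight, in the above instance, of the set cover given by $u$'s $2$-edges. Hence the best response of $u$ against $\tilde{\s}_{-u}$ costs $A_u + b^\ast_u$ where $b^\ast_u$ is the set-cover optimum, while greedy weighted set cover guarantees $b_u \le H_{|V_2^-(u)|}\, b^\ast_u \le H_n\, b^\ast_u$. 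Since $A_u \ge 0$ and $H_n \ge 1$,
\[
\frac{c_u(\tilde{\s})}{\min_{S_u'} c_u(S_u',\tilde{\s}_{-u})} \;=\; \frac{A_u + b_u}{A_u + b^\ast_u} \;\le\; \frac{A_u + H_n\,b^\ast_u}{A_u + b^\ast_u} \;\le\; H_n \;=\; \mathcal{O}(\log n)
\]
(and the ratio is $1$ when $b^\ast_u = 0$, i.e.\ when $V_2^-(u)=\emptyset$). Thus no agent can push its cost below $\tfrac{1}{H_n}c_u(\tilde{\s})$, i.e.\ $\tilde{G}$ is an $\mathcal{O}(\log n)$-approximate NE; since the construction is polynomial time, this also yields the poly-time claim advertised in the introduction.

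The step needing care --- the ``main obstacle'' --- is the weighted-set-cover modeling: one must check that the marginal weights $\alpha-\tfrac12$ and $\alpha$ are exactly right and that the $W_2^+$-candidates really have to be offered (for $\alpha$ large the true best response may genuinely use $2$-edges into $W_{1\to1}(u)$), that the universe is precisely $V_2^-(u)$ (which relies on \Cref{lem:12-1edges} and on $W_{1\to1}(u)$ being metric-determined once all $1$-edges are present), and that the greedy guarantee survives the additive term $A_u$ --- the latter being just the elementary inequality $\tfrac{A+Hx}{A+x}\le H$ used above. Everything else is the textbook $H_n$-analysis of greedy set cover.
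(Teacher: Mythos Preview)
Your proposal is correct and follows essentially the same route as the paper: build all $1$-edges, then for each agent replace the NP-hard choice of $2$-edges by the greedy weighted-set-cover approximation with weights $\alpha-\tfrac12$ on $V_2^-(u)$ and $\alpha$ on $V_2^+(u)$, and invoke the $H_n$ guarantee.

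The only noteworthy difference is the final accounting. The paper bounds edgecost and stretchcost separately (stretchcosts are within a factor $\tfrac32$ of optimal, edgecosts within $\mathcal{O}(\log n)$) and combines. You instead use the exact decomposition $c_u = A_u + b_u$ from equation~(\ref{eq:cost_u}), observe that $A_u$ is strategy-independent once all $1$-edges are fixed, identify $b_u$ with the weighted set-cover value, and finish with the elementary inequality $\frac{A_u+H_n b_u^\ast}{A_u+b_u^\ast}\le H_n$. This is tighter and cleaner: it sidesteps the issue that the paper's ``edgecost within $\mathcal{O}(\log n)$'' claim is not literally what weighted set cover guarantees (the guarantee is on $b_u$, not on $\alpha|S_u|$, and the two differ by a factor that blows up as $\alpha\to\tfrac12^+$). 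Your explicit case split on $\alpha\le\tfrac12$ versus $\alpha>\tfrac12$ also avoids the negative-weight awkwardness that the paper's uniform ``for $\alpha>0$'' phrasing glosses over.
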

\begin{proof}
By \Cref{lem:12-dir-NE-BDSG} we have that for $\alpha>0$ all $BDSG(\alpha)$ are the only NE. That means that for every node $u$ the quantity $b = \left(\alpha-\frac{1}{2}\right)|W_2(u)|+\frac{1}{2}|W_2^+(u)|$ should be minimum. Using the fact that $|W_2(u)|=|W_2^+(u)|+|W_2^-(u)|$, we get that $b = \left(\alpha-\frac{1}{2}\right)|W_2^-(u)|+\alpha|W_2^+(u)|$.

Then, we construct a network that is in $\mathcal{O}(\log n)$-approximate NE as follows. To ensure that greedy routing is enabled, we first construct a MaxDSG network $G$. Now, we consider every agent~$u$ sequentially. We know that agent $u$'s strategy must be a dominating set in the network $G^1_{-u}$. Since the \textsc{Dominating Set} problem is a special case of the \textsc{Set Cover} problem, we can use the standard greedy approximation algorithm for \textsc{Weighted Set Cover}~\cite{Chvatal79} to compute a dominating set that is at most a factor of $\mathcal{O}(\log n)$ larger than the minimum size dominating set, where every node $v\in W_2^+(u)$ has weight $\alpha$, and every node $x\in W_2^-(u)$ has weight $(\alpha-\frac{1}{2})$. We now replace agent $u$'s strategy in $G$ by the computed approximate dominating set. 

With this, we ensure that greedy routing is still enabled, and we have that no agent can decrease its edgecost by more than a factor of $\mathcal{O}(\log n)$ by any strategy change. Moreover, observe that, since greedy routing is enabled, we have, by \Cref{lem:12-stretches}, that all pairwise stretches are at either $1$ or $\frac32$. Thus, in any strategy that ensures that greedy routing works for some agent $u$, the stretchcosts of $u$ can be at most a factor of $\frac32 \in \mathcal{O}(\log n)$ higher that its best possible stretchcost. Thus, in the constructed network, any agent $u$ can improve its cost at most by a factor of $\mathcal{O}(\log n)$ by performing a strategy change.  
\end{proof}

\section{Tree Metrics}\label{sec:treemetrics}
As the next step we examine networks that are created with an underlying tree metric. In a tree metric, a positively weighted undirected spanning tree \(T\) is given, such that for all nodes \(u,v \in \points\), we have \(d(u,v) = d_T(u,v)\). 
In the following, we always use \(T\) to denote the given tree. 

Let $\s^T$ denote the strategy-profile, where every edge of~$T$ is created in both directions and let $G^T = G(\s^T)$ be the corresponding network.
For our analysis, we will consider the network $G^T = (\points,E^T,\ell)$ \emph{rooted at a node $r\in \points$}, denoted as $G^T_r$. This is defined analogous to rooting the tree~$T$ at node $r$: node \emph{$v$ is the child of node $u$ in $G^T_r$}, if $(u,v)\in E^T$ and if $d(u,r) < d(v,r)$, i.e., if $u$ is closer to $r$ than $v$. Moreover, in $G^T_r$ node $w$ is a \emph{descendant} of node $u$, if a path $u=x_1,x_2,\dots,x_k=w$ exists, such that $x_{i+1}$ is the child of $x_1$, for $1\leq i \leq k-1$. Also, $u$ is a descendant of itself.

Since for all our purposes the network $G^T_r$ behaves exactly like the tree $T$ rooted at $r$, we will from now on use the terminology from trees, when working with $G^T$ or $G^T_r$. For example, for $G^T_r$ we let \(subtree(u)\) denote the subtree of $G^T_r$ rooted at node $u$ that includes all descendants of \(u\) (including $u$). 
Furthermore, let \(below(u)\) refer to the set of subtrees \(\{subtree(v)\ |\ v\text{ is a child of }u\}\).
Using the above definitions, we get the following useful statements:
\begin{restatable}{lemma}{lemmathreeone}
	\label{lem:tree-subtree-paths}
    In a tree metric, a greedy path from node~\(u\) to a node~\(v\) can only consist of nodes that are in the same subtree from \(below(u)\) in $T$ rooted at $u$ that contains node~\(v\).
\end{restatable}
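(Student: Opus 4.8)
The plan is to prove this by considering the structure of $T$ rooted at $u$ and showing that any greedy path must stay inside the subtree containing $v$. First I would set up notation: let $T_u$ denote $T$ rooted at $u$, and let $S$ be the subtree from $below(u)$ that contains $v$. Note that $v \neq u$, so such a subtree exists and is unique. Let $c$ be the child of $u$ that is the root of $S$.

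The key observation is the following distance fact in tree metrics: for any node $x$ lying outside $S$ (so either $x = u$ itself, or $x$ is in a different subtree from $below(u)$), the unique path in $T$ from $x$ to $v$ passes through $u$ and then through $c$; hence $d(x,v) = d(x,u) + d(u,v)$, which is strictly greater than $d(u,v)$. In particular, among all nodes of $\points$, node $u$ has the property that every node outside $S$ is strictly farther from $v$ than $u$ is — wait, that is not quite what I want. What I actually want is: every node outside $S \cup \{u\}$ is at distance $\geq d(u,v) + (\text{something positive})$ from $v$, and more importantly, a greedy path from $u$ to $v$ has strictly decreasing distance to $v$, so after the first hop the distance to $v$ is strictly less than $d(u,v)$, which already rules out the path ever visiting $u$ again or any node $x$ outside $S$ with $d(x,v) \geq d(u,v)$. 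Since every node outside $S$ (other than $u$) satisfies $d(x,v) = d(x,u) + d(u,v) > d(u,v)$, no such node can appear after the first hop. The first hop itself leaves $u$; its target $x_2$ must satisfy $d(x_2,v) < d(u,v)$, so $x_2 \in S$. Then I induct / iterate: once the path is at a node in $S$, I claim it stays in $S$. This follows because $S$ is itself a subtree of $T$, and for any node $y \in S$ and any node $z \notin S$, the path from $y$ to $v$ in $T$ stays within $S$ (since $v \in S$ and $S$ is connected in $T$ and "closed" in the rooted structure), whereas the path from $z$ to $v$ leaves and re-enters through $u,c$; so $d(z,v) > d(y', v)$ for the relevant comparison — more cleanly, any node $z \notin S$ has $d(z,v) \geq d(c,v) + d(u,c) > d(c,v) \geq$ [distance of any node in $S$ that is an ancestor of $v$ within $S$], but I should just argue directly: if the greedy path is currently at $x_i \in S$ and $x_{i+1} \notin S$, then $x_{i+1}$'s path to $v$ goes through $u$, so $d(x_{i+1}, v) = d(x_{i+1}, u) + d(u,v) \geq d(u,v) > d(x_2, v) \geq d(x_{i+1}, v)$, using that distances are strictly decreasing along the path and $i+1 \geq 3 > 2$; contradiction.

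Let me streamline: the cleanest single argument is that the sequence $d(x_1,v), d(x_2,v), \dots, d(x_j,v)$ is strictly decreasing, so for every $i \geq 2$ we have $d(x_i, v) < d(x_1,v) = d(u,v)$. Any node $x$ with $d(x,v) < d(u,v)$ must lie in $S$: if it were $u$ itself, $d(u,v) < d(u,v)$ is absurd; if it were in another subtree $S'$ of $below(u)$ or equal to a node not descending from $u$ — but everything is a descendant since $T$ rooted at $u$ spans all of $\points$ — so $x$ is in some subtree of $below(u)$, and if that subtree is not $S$, the $x$-to-$v$ path goes through $u$ giving $d(x,v) = d(x,u)+d(u,v) > d(u,v)$, contradiction. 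Hence all of $x_2, \dots, x_j$ lie in $S$, and $x_1 = u$; so every node of the path is in $S \cup \{u\}$, which is exactly the claim (the statement allows $u$ itself as the start).

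The main obstacle — really a minor one — is just being careful that $T$ rooted at $u$ has all of $\points$ as descendants (true since $T$ is a spanning tree, hence connected), so that "outside $S$ but not $u$" genuinely means "in some other subtree of $below(u)$," and then invoking the exact additivity $d(x,v) = d(x,u) + d(u,v)$ for such $x$, which holds because in a tree the unique $x$-$v$ path must pass through the cut vertex $u$ separating the subtrees. I would state this additivity as a one-line sub-claim and then the rest is immediate from strict monotonicity of distances along a greedy path. I should also note that the edges used by the path must exist in $G$, but since $G$'s edge weights equal the metric distances, the distance argument is unaffected; the lemma is about which nodes can appear, and the edge-existence constraint only further restricts the path, so it does not interfere.
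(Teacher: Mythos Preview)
Your proposal is correct and takes essentially the same approach as the paper: both argue that any node $x$ not in the subtree $S$ containing $v$ has its unique tree path to $v$ passing through $u$, hence $d(x,v) = d(x,u) + d(u,v) > d(u,v)$, which is incompatible with the strictly decreasing distances along a greedy path starting at $u$. Your streamlined version at the end is exactly the paper's argument; the paper simply states it more tersely without the exploratory detours.
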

\begin{proof}
	Let \(u \in \mathcal{P}\) and let \(v \in \mathcal{P} \setminus \{u\}\). Now, consider the set of subtrees \(below(u)\) of $T$ rooted at $u$. Note that the subtrees in $below(u)$ include all agents except $u$. Hence, there exists a subtree~$T' \in below(u)$, such that $v \in T'$. To show our statement it thus suffices to show that, indeed, a path from $u$ to $v$ can only be a greedy path, if the path consists exclusively of nodes in $T'$. Towards a contradiction, consider any agent \(x \not\in T'\) and assume that $x$ is part of a greedy path from $u$ to $v$. Observe that agent~$x$ has a path to $v$ only via $u$, since $T$ is rooted at $u$ and $x$ and $v$ are nodes of two different subtrees of $u$. It follows that \(d_T(x,v) > d_T(u,v)\), which contradicts that the considered path is greedy path from $u$ to $v$ via $x$.
\end{proof}

\begin{restatable}{lemma}{lemmathreetwo}
\label{lem:tree-subtree-edges}
    In any strategy profile $\s$ that enables greedy routing with a tree metric, for any node $u$, in $G(\s)$ agent~\(u\) needs to have an edge to some node in every subtree in \(below(u)\) of $T$ rooted at $u$.
\end{restatable}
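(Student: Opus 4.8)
The plan is to adapt the argument from \Cref{lem:12-1edges} to the tree setting, with \Cref{lem:tree-subtree-paths} doing the main work. Fix an arbitrary node $u$ and consider $T$ rooted at $u$. Let $T'$ be an arbitrary subtree in $below(u)$, rooted at some child $c$ of $u$ (so $c$ is a neighbor of $u$ in $T$). Since greedy routing is enabled in $G(\s)$ and $c \neq u$, there exists a greedy path $u = x_1, x_2, \dots, x_j = c$ in $G(\s)$; moreover $j \geq 2$, since $x_1 = x_j$ would force $u = c$. In particular the edge $(u, x_2)$ is present in $G(\s)$.

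The key step is to show $x_2 \in T'$, i.e., that the first hop of this greedy path already lands inside $T'$. This is immediate from \Cref{lem:tree-subtree-paths}: the target node $c$ lies only in the subtree $T'$ among all subtrees of $below(u)$, so every node of the greedy path other than the source $u$ must lie in $T'$; hence $x_2 \in T'$. Equivalently, if $x_2$ lay in a different subtree $T'' \in below(u)$, then the unique path from $x_2$ to $c$ in $T$ would pass through $u$, giving $d(x_2, c) = d(x_2, u) + d(u, c) > d(u, c) = d(x_1, c)$, which contradicts the defining inequality $d(x_1, c) > d(x_2, c)$ of a greedy path. Therefore agent $u$ has an outgoing edge, namely $(u, x_2)$, to a node in $T'$. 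Since $T'$ was an arbitrary subtree in $below(u)$ and $u$ an arbitrary node, the lemma follows.

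I do not anticipate a genuine obstacle here: the statement is the tree-metric counterpart of \Cref{lem:12-1edges}, and once $T$ is rooted at $u$ the structure of greedy paths established in \Cref{lem:tree-subtree-paths} forces the first hop of any greedy path from $u$ into the subtree of $below(u)$ containing its target. The only two points requiring a little care are (i) ensuring the chosen greedy path has at least one edge, which holds because the chosen target $c$ lies in $T'$ and is therefore distinct from $u$, and (ii) reading \Cref{lem:tree-subtree-paths} as constraining all nodes of a greedy path from $u$, other than $u$ itself, to lie in the relevant subtree.
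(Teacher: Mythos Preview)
Your proof is correct and takes essentially the same approach as the paper: both rely on \Cref{lem:tree-subtree-paths} to argue that any greedy path from $u$ to a node of $T'$ must stay within $T'$, forcing $u$'s first hop into $T'$. The paper phrases this as a contradiction (assume no edge into $T'$, then no greedy path to any $w\in T'$), whereas you argue directly from the existence of a greedy path to the child $c$; the content is identical.
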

\begin{proof}
	Fix any agent \(u \in \mathcal{P}\) and consider the network $G(\s)$. For the sake of contradiction, let \(T' \in below(u)\) be a subtree of $T$, such that agent~\(u\) does not build an edge to any node~$v \in T'$ in $G(\s)$. 
    Then, fixing any $w \in T'$ and, by \Cref{lem:tree-subtree-paths}, all greedy paths from $u$ to $w$ can only traverse nodes in $T'$. Thus, by recalling that agent~$u$ has no edge to any vertex $v \in T'$, agent~\(u\) cannot have a greedy path to \(w\).
\end{proof}
\subsubsection*{\textbf{Equilibrium Existence}}
Here, we show the existence of equilibria and partially characterize them.

\begin{restatable}{theorem}{thmthreethree}
    \label{lem:tree-t}
    In a tree metric, the network $G^T$ is always a NE and a social optimum.
\end{restatable}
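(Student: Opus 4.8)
The plan is to show both directions by exploiting the very rigid structure of greedy paths in a tree, as captured by \Cref{lem:tree-subtree-paths} and \Cref{lem:tree-subtree-edges}. The key structural fact I would establish first is that in $G^T$, for every ordered pair $(u,v)$, the \emph{unique} greedy path from $u$ to $v$ is exactly the tree path between $u$ and $v$, and moreover this path is a shortest path, so $\gd{G^T}(u,v) = d_T(u,v)$ and $\mathrm{stretch}_{G^T}(u,v) = 1$. This follows because along the tree path from $u$ to $v$ each hop strictly decreases the tree-distance to $v$ (so it is a greedy path), and by \Cref{lem:tree-subtree-paths} any greedy path from $u$ to $v$ must stay inside the subtree of $below(u)$ containing $v$; iterating this argument hop by hop forces the path to be the tree path. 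Consequently every agent $u$ in $G^T$ has $\mathrm{stretchcost}_u = n-1$, the minimum possible (stretch is always at least $1$, and $Z$ is huge), and $\mathrm{edgecost}_u = \alpha\cdot\deg_T(u)$.

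For the equilibrium direction, I would argue that no agent $u$ can profitably deviate from its strategy $S_u = \{\,v : \{u,v\}\in T\,\}$. Since $u$'s stretchcost is already the minimum $n-1$, any improving deviation would have to strictly decrease edgecost, i.e. build strictly fewer than $\deg_T(u)$ edges, while keeping all stretches finite. But by \Cref{lem:tree-subtree-edges}, to keep greedy routing enabled (all stretches finite), $u$ must have an edge into every subtree of $below(u)$, and there are exactly $\deg_T(u)$ such subtrees; hence $u$ cannot drop below $\deg_T(u)$ outgoing edges without incurring cost $Z$ for some pair. So no deviation strictly decreases $c_u$, and $G^T$ is a NE.

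For the social optimum direction, I would use a per-agent lower bound: for any strategy profile $\s$ and any agent $u$, $c_u(\s) \ge (n-1) + \alpha\cdot(\text{number of subtrees in }below(u))= (n-1)+\alpha\deg_T(u)$, because stretchcost is at least $n-1$ whenever greedy routing is enabled (and at least $Z\ge n-1$ otherwise, after checking $Z$ is chosen large enough — or simply noting any profile not enabling greedy routing can only be worse), and by \Cref{lem:tree-subtree-edges} edgecost is at least $\alpha\deg_T(u)$ whenever greedy routing is enabled. Summing over all $u$ gives $c(\s)\ge n(n-1) + \alpha\sum_u \deg_T(u) = n(n-1) + 2\alpha(n-1) = c(\s^T)$, so $G^T$ attains the minimum social cost. (The factor $2$ comes from $\sum_u\deg_T(u) = 2|E(T)| = 2(n-1)$, which matches the $2(n-1)$ directed edges of $G^T$.)

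The main obstacle, and the step deserving the most care, is the claim that in $G^T$ the only greedy path between any two nodes is the tree path — in particular ruling out longer greedy paths that zig-zag within a subtree. This needs the hop-by-hop application of \Cref{lem:tree-subtree-paths}: once a greedy path from $u$ to $v$ takes its first hop to a node $u'$, that $u'$ must lie on the tree path to $v$ (it is in the $v$-containing subtree of $below(u)$ and is a neighbor of $u$ in $T$, hence is \emph{the} child of $u$ toward $v$), and then one recurses on $(u',v)$. A small additional check is that $G^T$ contains no edges other than the tree edges in both directions, so "neighbor of $u$ in $G^T$" and "neighbor of $u$ in $T$" coincide; this is immediate from the definition of $G^T$. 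Everything else is bookkeeping with the cost function.
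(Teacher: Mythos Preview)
Your proof is correct and follows essentially the same approach as the paper: both rely on \Cref{lem:tree-subtree-edges} to lower-bound each agent's outgoing edges by $\deg_T(u)$ and on the observation that all stretches in $G^T$ equal~$1$, so no deviation can help. One remark: the ``main obstacle'' you flag---uniqueness of greedy paths in $G^T$---is actually unnecessary; it suffices that the tree path \emph{is} a greedy path of length $d_T(u,v)$, which already gives $\gd{G^T}(u,v)\le d_T(u,v)=d_\M(u,v)$ and hence stretch~$1$, so you can drop the hop-by-hop uniqueness argument entirely.
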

\begin{proof}
    Consider an agent \(u \in \mathcal{P}\). First, observe that in \(G^T\) all of agent~\(u\)'s distances and stretches are already minimal. Hence, adding any edges cannot improve its stretchcosts. Moreover, by \Cref{lem:tree-subtree-edges}, agent \(u\) also cannot remove any edges. For the same reason, swapping an edge $(u,v)$ with another edge $(u,w)$ is only possible for \(T' \in below(u)\), such that $u,v \in V(T')$, i.e., if the two nodes belong to the same subtree.  
    However, since swapping edges does not change agent $u$'s edgecosts, such a swap cannot be an improving move  since agent $u$'s stretchcosts cannot be further improved. Hence, there are no improving moves for any agent and \(G^T\) is a NE. Because all distances and stretches are minimal and there is no cheaper set of edges to enable greedy routing, the network \(G^T\) must also be social optimum.
\end{proof}

In the following, we show that GE are unique. This completely characterizes all GE and NE.

\begin{theorem}
    \label{lem:tree-dir-sum-pne}
    In a tree metric, the network \(G^T\) is the only GE.
\end{theorem}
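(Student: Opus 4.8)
\emph{Proof plan.}
Since \Cref{lem:tree-t} already gives that $G^T$ is a NE and hence a GE, it suffices to show that \emph{every} GE network $G(\s)$ equals $G^T$. The very first step is to record that a GE enables greedy routing: if some agent $u$ had no greedy path to some $v$, then adding the single edge $(u,v)$ would create a direct (hence greedy) path of stretch $1$ to $v$, would never destroy any greedy path to another target, and would increase $u$'s edge cost by only $\alpha$; since the non-connectivity penalty $Z$ is arbitrarily large (take $Z>\alpha+1$), this is an improving single-edge move, contradicting GE. So from now on greedy routing is enabled in $G(\s)$.

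The core of the argument is the following claim, proved by induction on the size of a subtree. For a node $c$ and a connected component $T'$ of $T-c$, let $c'$ denote the unique neighbor of $c$ contained in $T'$ (the root of $T'$ when $T$ is rooted at $c$). \emph{Claim:} for every such pair $(c,T')$ we have $(c,c')\in E(\s)$ and $\mathrm{stretch}_{G(\s)}(c,t)=1$ for all $t\in V(T')$. The base case $|V(T')|=1$, i.e.\ $T'=\{c'\}$ with $c'$ a leaf, is forced by \Cref{lem:tree-subtree-paths}: any greedy path from $c$ to $c'$ consists only of nodes in $T'$ apart from $c$ itself, so its first hop already lands on $c'$, hence $(c,c')\in E(\s)$. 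For the inductive step, suppose first that $(c,c')\notin E(\s)$. By \Cref{lem:tree-subtree-edges}, $c$ has an edge to some $y_0\in V(T')$ (necessarily $y_0\neq c'$), and we consider the swap replacing $(c,y_0)$ by $(c,c')$, which keeps $c$'s edge count unchanged. By \Cref{lem:tree-subtree-paths}, every greedy path from $c$ to a target outside $T'$ lives entirely in another component of $T-c$, so it does not use any of $c$'s edges into $T'$ and is therefore unaffected by the swap. The stretch to $c'$ strictly drops from ${}>1$ (any former greedy path from $c$ to $c'$ had length $>d(c,c')$ since its first hop went to some $y\neq c'$ in $T'$) to $1$. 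Finally, for a target $t\in V(T')\setminus\{c'\}$, the node $t$ lies in a component of $T-c'$ of size at most $|V(T')|-1$, so the induction hypothesis provides a greedy path $Q_t$ from $c'$ to $t$ of length $d(c',t)$; by \Cref{lem:tree-subtree-paths} this path avoids $c$, hence survives the swap, and since $d(c',t)<d(c,t)$ the edge $(c,c')$ is a valid first hop, yielding a greedy path from $c$ to $t$ of length $d(c,t)$, i.e.\ stretch $1$. Thus the swap strictly decreases $c$'s cost, contradicting GE. Hence $(c,c')\in E(\s)$, giving $\mathrm{stretch}(c,c')=1$, and the same routing-via-$c'$ argument (now needing no swap) gives $\mathrm{stretch}(c,t)=1$ for all $t\in V(T')$, completing the induction.

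Applying the claim to every directed tree edge shows that $G(\s)$ contains every edge of $G^T$ and that all pairwise stretches in $G(\s)$ equal $1$. It remains to rule out extra edges. If $(u,x)\in E(\s)$ for some $x$ that is not a neighbor of $u$ in $T$, let $c$ be the neighbor of $u$ on the $u$–$x$ path in $T$. Then $(u,c)\in E(\s)$, and for $x$ and for every other node $t$ lying beyond $c$, the node $c$ has a stretch-$1$ greedy path to $t$ that avoids $u$ (again \Cref{lem:tree-subtree-paths}), so after deleting $(u,x)$ agent $u$ still reaches each such $t$ with stretch $1$ via $u\to c\to\cdots\to t$, while all other stretches of $u$ are untouched; the deletion saves $\alpha$ in edge cost, an improving move contradicting GE. Therefore $E(\s)$ consists of exactly the tree edges in both directions, i.e.\ $G(\s)=G^T$.

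\emph{Main obstacle.} The delicate point is the inductive step, specifically checking that the swap $(c,y_0)\mapsto(c,c')$ does not raise the stretch to any target inside $T'$. This hinges on using \Cref{lem:tree-subtree-paths} twice: to confine all interaction to $T'$ so that targets outside $T'$ are provably irrelevant, and to guarantee that the rerouted path supplied by the induction hypothesis from $c'$ never visits $c$, so that changing $c$'s strategy leaves it intact. Getting the induction order right — inducting on component size, so that the downstream node $c'$ already has stretch-$1$ access to all of $T'$ before $c$ is analyzed — is exactly what makes the rerouting available.
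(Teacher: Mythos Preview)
Your proof is correct and follows essentially the same approach as the paper: both arguments hinge on the size of the subtree below a missing tree edge and use the single swap to the tree-child as the improving move. The only presentational difference is that you package the argument as a direct induction on $|V(T')|$ establishing the stronger invariant ``$(c,c')\in E(\s)$ and $\mathrm{stretch}(c,t)=1$ for all $t\in T'$'', whereas the paper phrases it as a minimal-counterexample argument (pick the missing tree edge $(u,v)$ with smallest $f_u(v)$ and show the swap works because any node $w$ whose stretch could increase would witness an even smaller missing tree edge). Your induction hypothesis gives stretch~$1$ from $c'$ into its subtrees directly, which makes the ``no stretch increases'' step slightly cleaner than the paper's two-case analysis of $\mathrm{stretch}_G(v,w)$; conversely the paper only needs that greedy routing from $v$ is enabled, not that it has stretch~$1$, so its hypothesis is nominally weaker --- but since both are equivalent here, there is no real gain either way.
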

\begin{proof}
    We prove the statement in two steps while assuming, for the sake of contradiction, that there is a GE network \(G\) that differs from \(G^T\). First, we show that in this case, network $G^T$ cannot be a proper subgraph of $G$, i.e., network $G$ contains all the edges of $G^T$ and at least one other edge. Then, in a second step, we show that if $G^T$ is not a subgraph of $G$, i.e., if $G$ does not contain all the edges of $G^T$, then $G$ is not in GE. Hence, any GE network is identical to $G^T$.
    
    To show that $G^T = (\points,E^T,\ell)$ cannot be a proper subgraph of $G = (\points, E, \ell)$, we observe that if this was the case, then $E^T \subset E$ holds, i.e., $E\setminus E^T \neq \emptyset$. Thus, there exists at least one edge $e \in E \setminus E^T$. However, by definition of $G^T$, the total strechcosts and the distances are minimized in the network~$G^T$. Hence, removing the edge $e \in E \setminus E^T$ would be an improving move. A contradiction.

    Let us now consider the case that $G^T$ is not a subgraph of $G$, while $G$ is a GE. We define $f_a(b)$ to be the number of descendants of node $b$ in $G^T_a$. We consider a specific edge, for this, let $(u,v) = \argmin_{(a,b) \in E^T\setminus E}f_a(b)$, i.e., the tuple $(u,v)$ that is an edge in $G^T$ but not in $G$, that  minimizes the number of descendants of node~$v$ in $G^T_u$. Notice that such an edge always exists, since we assume that $G^T$ is not a subgraph of $G$ and thus, $E^T\setminus E \neq \emptyset$. 
    
    We now examine the greedy path from node~$u$ to node~$v$ in network~$G$. Since $(u,v) \not \in E$, in conjunction with the assumption that $G$ is a GE, there must exist another node~$x \neq v$, such that $(u,x) \in E$, that enables a greedy path from $u$ to $v$ in $G$, i.e., the edge $(u,x)$ is the first edge on this greedy path. By \Cref{lem:tree-subtree-paths}, node~$x$ must belong to the same subtree $T'$ as $v$ in $below(u)$ in $T$ rooted at $u$. Since $v$ is a child of $u$ in $T$ rooted at $u$, we have that $T' = subtree(v)$. 
    It follows that $(u, x) \notin E^T$, since in $G^T_r$ node $u$ only has $v$ as child in $T'$. 

    Next, recall that we assume that $G = G(\s) = G(S_u,\s_{-u})$ is a GE. It follows that for agent~$u$ it is not an improving move to swap the edge~$(u,x) \in E$ for the edge~$(u,v) \notin E$. Now, let $S_u' = S_u \setminus\{x\}$ and let $\s^v = (S_u' \cup \{v\},\s_{-u})$. For convenience, let $\s^x = \s = (S_u' \cup \{x\},\s_{-u})$.

    We first show, that after the swap, i.e., in $G(\s^v)$, agent~$u$ still has a greedy path to all nodes. Since in $G(\s^v)$ nothing changes for greedy paths that do not use $x$ as first hop, it suffices to focus on such paths. Thus, consider a greedy path $P$ from $u$ to $z\in \points$ in $G(\s^x)$ that uses $x$ as first hop. Since $G$ is a GE, agent~$v$ has a greedy path $Q$ to $x$ in $G$. Moreover, since $x$ is in $subtree(v)$, it follows that $d_T(v,x) < d_T(u,x)$. Thus, in $G(\s^v)$ agent~$u$ has a greedy path to $x$ via node $v$ and $Q$. This implies that also in $G(\s^v)$ agent $u$ has a greedy path to $z$, since the path via $v$, $Q$ and $P$ is a greedy path.  

    The swap from $(u,x)$ to $(u,v)$ decreases agent $u$'s stretch to $v$ but it does not change $u$'s edgecost. Since the swap does not reduce agent $u$'s cost, there must be a node $w\in \points$ to which agent $u$'s stretch increases after the swap, i.e., $\text{stretch}_{G(\s^x)}(u,w) < \text{stretch}_{G(\s^v)}(u,w)$. It follows, that in $G = G(\s^x)$, agent $u$ has a greedy path via $x$ to $w$. 
    
    By \Cref{lem:tree-subtree-paths}, it follows that both nodes $x$ and $w$ must belong to the same subtree $T'$ in $below(u)$ in $T$ rooted at $u$, which must be $T' = subtree(v)$. 

    Now, if $\text{stretch}_G(v,w) = 1$, then agent $u$'s stretch to $w$ must be optimal in $G(\s^v)$ and hence, the stretch cannot have increased compared to $G(\s^x)$. Hence, we can assume that $\text{stretch}_G(v,w) > 1$. In this case, there there must be a vertex $p$ along the path from $v$ to $w$ in $T$ that in $G$ does not create the edge to the next node $q$ on that path. Thus, the edge $(p,q)$ is in $E^T \setminus E$. Now, note that $f_p(q) < f_u(v)$, since $q$'s subtree in $G^T_p$ is completely contained in $v$'s subtree in $G^T_u$. This is a contradiction to the choice of edge $(u,v)$, i.e., to $(u,v)$ having the minimum $f$-value. 

    Thus, every edge of $G^T$ must be contained in $G$ and thus, $G^T$ is the only GE.   
\end{proof}
\subsubsection*{\textbf{Dynamic Properties}} We investigate if NE networks in tree metrics can be found by iteratively selecting best responses. The answer is affirmative.

\begin{restatable}{theorem}{thmthreefive}
	\label{the:tree-acyclic}
    In a tree metric our game is weakly acyclic under best responses.
\end{restatable}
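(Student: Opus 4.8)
The plan is to invoke the characterization that $G^T$ is the unique GE (\Cref{lem:tree-dir-sum-pne}) and hence, since every NE is a GE, the unique NE; it therefore suffices to show that from every strategy profile $\s$ there is a finite best-response path ending in $G^T$. I fix an arbitrary root $r$ and build such a path in two passes over the agents, a bottom-up pass followed by a top-down pass.

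The engine of the argument is a decomposition of an agent's best response over the subtrees hanging off it. Rooting $T$ at an agent $u$, a greedy path from $u$ into a subtree $T'\in below(u)$ stays inside $T'\cup\{u\}$ by \Cref{lem:tree-subtree-paths}, so $u$'s cost splits into one independent summand per $T'\in below(u)$, each depending only on $S_u\cap T'$, and a best response of $u$ minimizes every summand separately. Paired with this I use the following local fact: if, inside a subtree $T'=subtree(c)$, every node currently owns all of its edges to its own children (its ``downward'' tree edges), then $c$ reaches every node of $T'$ by a distance-monotone walk down the tree, so $u$ already reaches all of $T'$ with stretch $1$ by building the single edge to $c$; since $u$ must build at least one edge into $T'$ (\Cref{lem:tree-subtree-edges}) and stretch $1$ at cost $\alpha$ beats every alternative, the restriction of \emph{any} best response of $u$ to $T'$ is exactly $\{c\}$, and in particular it contains no other vertex of $T'$.

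In the first pass I process the agents in order of non-increasing depth (leaves first). An easy induction on this order shows that when agent $u$ moves, every agent in $subtree(c)$ for each child $c$ of $u$ has already moved and, by the local fact applied to it, already owns all its downward edges; hence the part of $u$'s best response inside each such subtree is forced to be $\{c\}$, while its behaviour in the one remaining (``upward'') subtree is arbitrary but plays no further role. Thus after the first pass every agent owns all of its downward tree edges. In the second pass I process the agents in order of non-decreasing depth (root first). When $u$ moves now, its strict ancestors have already been re-processed and hold exactly their $G^T$-neighbourhoods, while every agent still owns all downward tree edges (the first-pass property is preserved, since every second-pass move so far was a best response and hence, by the same reasoning, retained its downward edges). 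Consequently $u$ reaches its upward subtree with stretch $1$ by building the single edge to its parent --- walk up the ancestor chain, then down using the preserved downward edges --- and reaches each downward subtree with stretch $1$ by the single edge to the corresponding child; by the decomposition together with the local fact this pins $u$'s best response to exactly its $G^T$-neighbourhood. After the second pass $G=G^T$, which is a NE by \Cref{lem:tree-t}, and the path has at most $2n$ steps.

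The hard part is the two-directional dependency: an agent's best response is governed by its descendants (to route ``downward'') and by its ancestors (to route ``upward''), and within a single pass some of these are still unprocessed and hold arbitrary strategies. The two-pass structure is exactly what dissolves this: the bottom-up pass fixes all downward edges in a way that the still-unconstrained upward choices cannot destroy, and the top-down pass then exploits those fixed downward edges together with the already-fixed ancestor strategies to determine every agent's full best response uniquely. The only routine points to verify are that each invoked best response does enable greedy routing for the mover --- which holds because buying all $n-1$ edges always does and $Z$ is chosen larger than any routing-enabled cost, so that \Cref{lem:tree-subtree-edges} is applicable at each step --- and the two straightforward inductions on the processing order.
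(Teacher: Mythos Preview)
Your proof is correct and follows essentially the same approach as the paper's: exploit the subtree decomposition of an agent's cost (via \Cref{lem:tree-subtree-paths} and \Cref{lem:tree-subtree-edges}) and process agents in tree order so that each best response is forced to pick up its $G^T$-edges. Your explicit two-pass schedule (bottom-up, then top-down) is in fact more careful than the paper's single bottom-up pass plus unordered cleanup --- it rigorously handles the ``upward'' edge to the parent, which the paper's write-up glosses over.
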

\begin{proof}
    Fix any $u \in \mathcal{P}$ and root $T$ in $u$. Consider any subtree \(T' \in below(u)\). Recall that, by \Cref{lem:tree-subtree-edges}, agent $u$ needs to build an edge $(u,v)$, where $v \in T'$, in order to enable greedy routing from $u$ to the set of nodes in $T'$. Note that in case that in a strategy-profile $\s$ all edges of $T'$ are created in both directions, then the best response of agent~$u$ in $\s$ is to create the edge $(u, u')$, where $u'$ is the root of subtree $T'$, since $u' = \argmin_{v \in V(T')}d_T(u,v)$.

    We can now use this observation to get from any strategy-profile~$\s$ to the strategy-profile~\(\s^T\) by a finite sequence of best responses. For this, we root \(T\) in \(u\) and activate the agents of $T$ in a bottom-up fashion, i.e., starting with the leaves and then moving upwards.
    This ensures that all edges of \(G^T\) are eventually created.

    After all edges of $G^T$ are created, we activate all agents that still have edges that do not belong to $G^T$ in any order. Any such edge will be removed since the edges of $G^T$ already suffice to achieve optimal stretchcosts. Thus, since $G^T$ is a NE, our game is weakly acyclic under best responses. 
\end{proof}

\subsubsection*{\textbf{Computational Complexity}}
Here, we investigate the computational complexity of computing a best response in tree metrics. We show, that this is NP-hard. However, since by \Cref{lem:tree-dir-sum-pne} $G^T$ is the unique GE, deciding NE existence and computing a NE is tractable.

\begin{restatable}{theorem}{thmthreesix}
\label{the:tree-np}
	In tree metrics, computing a best response is NP-hard.
\end{restatable}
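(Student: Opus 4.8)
The plan is to reduce from \textsc{Minimum Dominating Set} on undirected graphs, which is NP-hard. The leverage is the decomposition of a best response provided by \Cref{lem:tree-subtree-paths} and \Cref{lem:tree-subtree-edges}: after rooting $T$ at the deviating agent~$u$, the edges that $u$ builds into one subtree $T' \in below(u)$ are irrelevant for reaching nodes in the other subtrees, and any greedy $u$--$z$ path stays entirely inside the subtree of $below(u)$ containing $z$. I would therefore construct an instance in which $below(u)$ consists of a single subtree whose internal, fixed structure encodes an arbitrary graph, so that computing $u$'s best response inside it amounts to finding a minimum dominating set.

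Given a graph $H=(V_H,E_H)$, I would take $T$ on vertex set $\{u,u'\}\cup\{v_i : i\in V_H\}$ with one heavy edge $\{u,u'\}$ of weight $\ell$ (to be chosen below) and, for each $i\in V_H$, a unit-weight edge $\{u',v_i\}$. Thus every $v_i$ is a leaf with $d_T(u',v_i)=1$, $d_T(u,v_i)=\ell+1$, and $d_T(v_i,v_j)=2$ for $i\neq j$, and $u$ has the single subtree $subtree(u')$ below it. For the fixed strategies $\s_{-u}$ I would let agent~$u'$ build no edge to any leaf and let each leaf $v_i$ build exactly the edges to its $H$-neighbours, so that the directed graph induced on the leaves is $H$ with both orientations of each edge.

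Then I would analyse $u$'s best response $S_u$ in $G(S_u,\s_{-u})$. Since no leaf has an edge to $u'$, the only greedy $u$--$u'$ path is the direct edge, so $u'\in S_u$ in every best response, contributing a fixed additive $\alpha+1$. For a leaf $v_i$, taking $\ell\ge 2$ so that $d_T(u,v_i)=\ell+1>2$, the metric distance is strictly decreasing toward $v_i$ only if the first hop is $v_i$ itself or an $H$-neighbour $v_j$ of $v_i$ that $u$ connects to; in particular no greedy path to a leaf passes through $u'$ (which has no outgoing leaf-edge), and among leaves no greedy path has more than one hop since all pairwise leaf-distances equal $2$. Hence, writing $D := S_u\cap\{v_i : i\in V_H\}$, agent~$u$ reaches every leaf via a finite greedy path exactly when $D$ is a dominating set of $H$, in which case $\gd{G}(u,v_i)=\ell+1$ (stretch $1$) for $v_i\in D$ and $\gd{G}(u,v_i)=\ell+3$ (stretch $\tfrac{\ell+3}{\ell+1}$) otherwise, so that
\[
c_u(S_u,\s_{-u}) \;=\; (\alpha+1) \;+\; \left(\alpha+1-\tfrac{\ell+3}{\ell+1}\right)|D| \;+\; |V_H|\cdot\tfrac{\ell+3}{\ell+1}.
\]
Choosing $\ell$ to be any integer with $\ell\ge 2$ and $\ell>\tfrac{2-\alpha}{\alpha}$ (which exists for every fixed $\alpha>0$) makes the coefficient of $|D|$ strictly positive, and since $Z$ is arbitrarily large no best response can leave a leaf undominated. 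Therefore a best response of $u$ minimizes $|D|$ over all dominating sets of $H$, so from it we read off a minimum dominating set, giving NP-hardness for every fixed $\alpha>0$.

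I expect the main obstacle to be the careful case analysis that rules out all ``unintended'' greedy paths in the gadget --- that $u'$ never serves as a useful intermediate hop toward a leaf, that no leaf--leaf shortcut of length two or more is greedy, and that no best response can profitably trade leaf-edges for $u'$-adjacent structure --- together with pinning down the relation between $\ell$ and $\alpha$ needed to make $c_u$ a strictly increasing affine function of $|D|$ across the whole range $\alpha>0$.
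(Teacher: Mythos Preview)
Your proposal is correct and takes a genuinely different route from the paper. The paper reduces from \textsc{Set Cover} using a tree in which a central node $c$ is adjacent to $u$, to a helper $v$, to set-nodes $Q_j$ (at distance $1$), and to element-nodes $x_i$ (at distance $2$); the fixed strategies route through $v$ and the $Q_j$'s, and the argument needs a fairly lengthy case analysis comparing the best response to a hypothetical smaller cover, and only goes through for $\alpha>4$. Your reduction from \textsc{Minimum Dominating Set} is structurally simpler: a single star below $u'$ with all leaf--leaf tree-distances equal to $2$ makes the greedy-path analysis almost immediate (direct edge or one hop through an $H$-neighbour, nothing else), and the cost collapses to an affine function of $|D|$. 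Because you can tune the heavy edge $\ell$ to satisfy $\alpha>\tfrac{2}{\ell+1}$, your argument in fact yields NP-hardness for every fixed $\alpha>0$, which is stronger than what the paper's construction gives. The price is that you rely on \textsc{Dominating Set} rather than \textsc{Set Cover}, but these are equivalent up to a standard transformation, so nothing is lost.

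One small point worth making explicit in a write-up: you should state what $u'$ does besides not building leaf-edges (e.g., $S_{u'}=\{u\}$ or $S_{u'}=\emptyset$), and note that even though $u\to u'$ is a valid greedy first hop toward any leaf, the path dies there; you say this, but it is the one place a reader might pause.
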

\begin{proof}
The main argument of the proof is  that computing a best response boils down to solving \textsc{Set Cover} \cite{karpReducibility1972}.
 To this end, we reduce from  \textsc{Set Cover} and show that if the number of edges build in a best response strategy of an agent in our constructed instance could be computed in polynomial time, then the size of a minimum set cover could also be computed in polynomial time.

\begin{figure}[htb]
\begin{subfigure}[c]{0.5\textwidth}
    \centering
    \resizebox{0.4\linewidth}{!}{
    \begin{tikzpicture}[bend angle=60,
        help/.style={circle, minimum size=20mm}],
    \node[vertex]   (c)												{\(c\)};
    \node[vertex]   (u)			[left=1.4cm of c]						{\(u\)}
        edge []		node[auto,font=\LARGE] {2} 				(c);
    \node[vertex]   (v)			[right=1.4cm of c]					{\(v\)}
        edge []		node[auto,swap,font=\LARGE] {1} 				(c);
        
    \node[vertex]   (Q1)        [above left=1.4cm and 1cm of c]		{\(Q_1\)}
        edge []		node[auto,font=\LARGE] {1} 				(c);
    \node[fill=white]   (Se)    [above=2cm of c]					{...}
        edge []		node[auto,font=\LARGE] {1} 				(c);
    \node[vertex]   (Qm)        [above right=1.4cm and 1cm of c]		{\(Q_m\)}
        edge []		node[auto, font=\LARGE] {1} 				(c);    
        
    \node[vertex]   (x1)        [below left=1.4cm and 1cm of c]		{\(x_1\)}
        edge []		node[auto,font=\LARGE] {2} 				(c);
    \node[fill=white]   (xe)    [below=1.5cm of c]					{...}
        edge []		node[auto,swap,font=\LARGE] {2} 				(c);
    \node[vertex]   (xn)        [below right=1.4cm and 1cm of c]		{\(x_n\)}
        edge []		node[auto,swap,font=\LARGE] {2} 				(c);
        
    \end{tikzpicture}}
        	   \caption{} 
	    \end{subfigure}
         \hfill
 \begin{subfigure}[c]{0.49\textwidth}
    \centering
    \resizebox{0.6\linewidth}{!}{
    \begin{tikzpicture}[bend angle=60,
        help/.style={circle, minimum size=20mm}],
    \node[fill=white]   (c)												{};
    \node[vertex]   (v)			[right=3.4cm of c]					{\(v\)};

    \node[vertex]   (u)			[right=1cm of v]						{\(u\)};
    \node[vertex]   (c1)			[above=0.8cm of v]					{\(c\)}
             edge [<-]		node[auto, font=\LARGE] {} 				(v);

    \node[vertex]   (x1)        [above left=1.4cm and 0.8cm of c]		{\(x_1\)};
    \node[fill=white]   (Se)    [right=1cm of c]					{\dots};

    \node[vertex]   (Q1)        [above right=1.4cm and 1cm of c]		{\(Q_1\)}
        edge [->]		node[auto, font=\LARGE] {} 				(x1)   
        edge [<-]		node[auto, font=\LARGE] {} 				(v);

    \node[vertex]   (xn)        [below left=1.4cm and 0.8cm of c]		{\(x_n\)}
        edge [<-]		node[auto,font=\LARGE] {} 				(Q1);
    \node[fill=white]   (xe)    [left=0.8cm of c]					{\dots};
    \node[vertex]   (Qm)        [below right=1.4cm and 1cm of c]		{\(Q_m\)}
        edge [->]		node[auto,swap,font=\LARGE] {} 				(xn)
        edge [<-]		node[auto, font=\LARGE] {} 				(v); 
        
    \end{tikzpicture}}
            \caption{}	    
	    \end{subfigure}
    \caption{(a) tree $T$ that defines the tree metric; (b) our construction of the \textsc{Set Cover}-instance.}
    \label{fig:NPTree}
\end{figure}
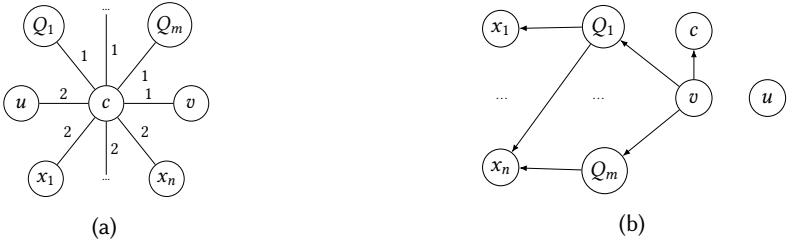

Let \(P = (\{x_1, ... x_n\},\{Q_1, ..., Q_m\})\) be an instance of \textsc{Set Cover}, where $\{x_1, ... x_n\}$ is the set of elements that need to be covered and $\{Q_1, ..., Q_m\}$ is a collection of subsets of the elements that can be selected. 
Given a \textsc{Set Cover}-instance, we construct a corresponding bipartite graph $G(V,E)$ with a node for every set \(Q_i\) and for every element \(x_i\), and edges between every set and its contained elements. We add nodes \(u,v\) and \(c\) and connect node~\(v\) to node~\(c\) as well as every node \(Q_i\). The tree metric then is defined as in \Cref{fig:NPTree}a and we fix \(\alpha > 4\). 

Let \(C\) be a minimum set cover of \(P\). We show that the number of edges that agent~\(u\) builds in a best response in the corresponding game instance is \(1 + |C|\). First, observe that node~$v$ only inherits outgoing edges. Thus, agent~\(u\) has to create an edge to 
node~\(v\) to enable a greedy path to \(v\). In contrast, consider the potential edge $(u,c)$. Note that $\frac{d(u,v)+d(v,c)}{d(u,c)} = 2 < 4 < \alpha$. This, in conjunction with the observation that a potential edge $(u,c)$ would not affect the stretch to any other node, leads to the conclusion that the creation of an edge from $u$ to $c$ is not an improving move.

Let $B$ be the best response of agent $u$. Given $B$, where we do not necessarily know which nodes it contains, except for $v\in B$ and $c \not\in B$, we construct another strategy~$B' = \psi(B)$. The mapping $\psi: X \rightarrow Y$, for $X,Y \subseteq V$, is defined as follows: For $Q_j \in B$ or $v \in B$ we simply map $\psi(Q_j) = Q_j$ and $\psi(v) = v$ respectively. In contrast, for any $x_i \in B$ we have $\psi(x_i) = Q_j$, where $(Q_j, x_i) \in E(G)$. Note that for any $x_i$ such a node $Q_j$ exists by the construction of $G$. Moreover, we observe that whenever $x_i \in B$, then this implies that $Q_j \not\in B$. To see this, consider any such pair $(Q_j, x_i)$ and assume that $Q_j, x_i \in B$, i.e., in agent~$u$'s best response it builds an edge to both $x_i$ and $Q_j$. 

We show that dropping the edge $(u, x_i)$ would be a better response, hence, a contradiction to $B$ being the best response of agent~$u$. To that end, consider the stretchcosts without $(u, x_i)$. As $u$ has a greedy path to $x_i$ via $Q_j$, we get $\text{stretch}_G(u,x_i) = \frac{d(u, Q_j) + d(Q_j, x_i)}{d(u, x_i)} = \frac{3}{2}$ by our metric in \Cref{fig:NPTree}~a. 

In contrast, including the edge $(u, x_i)$ entails an additional cost of $\alpha$ with optimal stretch $\text{stretch}_G(u,x_i) = 1$. Since $\alpha > 4$, this yields a larger cost for $u$ compared to dropping the edge. The contradiction is completed by noticing that the edge $(u, x_i)$ does not enable any other greedy path for $u$. We conclude that our defined mapping $\psi: X \rightarrow Y$ is valid for our set $B$ and that $|B| = |B'|$ for $B' = \psi(B)$.

Next, we observe that $B' \setminus \{v\}$ constitutes a set cover $P$. To see that, recall that greedy routing is enabled for agent $u$ via strategy $B$. Thus, strategy $B'$ also enables greedy routing for agent $u$ which entails that for all $x_i$ there exists an $Q_j$ such that $(u, Q_j)$ and $(Q_j, x_i)$. 

Moreover, the set \(B' \setminus \{v\}\) must constitute a minimum set cover: Suppose that it were otherwise and a set cover \(C\) with \(|C| < |B'| - 1\) exists. We show that \(D = C \cup \{v\}\) is a better response than \(B\), which yields a contradiction, since $B$ is the best response for agent $u$. In both strategies, agent \(u\)'s distances to \(c\) and \(v\) are \(d(u,c) = 4\)  and \(d(u,v) = 3\)  respectively, the distances to all \(Q_j\) are either \(d(u,Q_j) = 3\) or \(d(u,v) + d(v,Q_j) = 5\) depending on whether there exists an edge $(u, Q_j)$. Thus, the difference in stretch of a network including $(u, Q_j)$, in contrast to the same network but without edge~$(u, Q_j)$, is $\Delta = \frac{d(u,v) + d(v, Q_j)}{d(u, Q_j)} - \frac{d(u, Q_j)}{d(u,Q_j)} = \frac{2}{3}$. In a similar fashion, recalling that if $(u,x_i)$ is not included then agent $u$ creates the edge~$(u, Q_j)$, we have that the distance to \(x_i\) is either \(d(u,x_i) = 4\) or \(d(u,Q_j) + d(Q_j,x_i) = 6\) depending on whether there is an edge $(u, x_i)$. Moreover, the difference in stretch is then given by $\delta = \frac{d(u,Q_j) + d(Q_j, x_i)}{d(u,x_i)} - \frac{d(x_i)}{d(x_i)} = \frac{1}{2}$. Then, setting $X = \{x_1, \cdots, x_n\}$ and $Q = \{Q_1, \cdots, Q_m\}$, the stretchcosts of \(B\) are at least
\begin{align*}
    &\frac{d(u,v)+d(v,c)}{d(u,c)} + \frac{d(u,v)}{d(u,v)} + \sum_{Q_j \in Q \setminus B}^m\frac{d(u,v)+d(v,Q_j)}{d(u,Q_j)} + \sum_{x_i \in X \setminus B}^n\frac{d(u,Q_{i,j})+d(Q_{i,j},x_i)}{d(u,x_i)} + |B| - 1\\
    &\geq \frac{d(u,v)+d(v,c)}{d(u,c)} + \frac{d(u,v)}{d(u,v)} + \sum_{j=1}^m\frac{d(u,v)+d(v,Q_j)}{d(u,Q_j)} + \sum_{i=1}^n\frac{d(u,Q_{i,j})+d(Q_{i,j},x_i)}{d(u,x_i)} - max(\delta, \Delta)(|B|-1),   
\end{align*}
where $Q_{i,j}$ is $Q_j$ such that $(Q_j, x_i) \in E(G)$. The last line follows as we overcount the number of edges by a factor of $|B| - 1$ and we assume the discount factor of the stretches to be maximal $\Delta = max(\delta, \Delta)$ in order to achieve a lower bound. Plugging in the values now yields
\begin{equation}\label{eq:lowerbound-stretch}
    \text{stretchcost}_u(B) \geq \frac{4}{2} + \frac{3}{3} + \frac{5}{3}m + \frac{6}{4}n - \frac{2}{3}(|B|-1).
\end{equation}

Analogously, we derive an upper bound on the stretchcost of agent $u$ applying strategy $D$, giving
\begin{align}\label{eq:upperbound-stretch}
\begin{split}
    &\frac{d(u,v)+d(v,c)}{d(u,c)} + \frac{d(u,v)}{d(u,v)} + \sum_{Q_j \in Q \setminus D}^m\frac{d(u,v)+d(v,Q_j)}{d(u,Q_j)} + \sum_{x_i \in X \setminus D}^n\frac{d(u,Q_{i,j})+d(Q_{i,j},x_i)}{d(u,x_i)} + |D| - 1\\
    &\leq \frac{d(u,v)+d(v,c)}{d(u,c)} + \frac{d(u,v)}{d(u,v)} + \sum_{j=1}^m\frac{d(u,v)+d(v,Q_j)}{d(u,Q_j)} + \sum_{i=1}^n\frac{d(u,Q_{i,j})+d(Q_{i,j},x_i)}{d(u,x_i)} - min(\delta, \Delta)(|D|-1)\\
    &= \frac{4}{2} + \frac{3}{3} + \frac{5}{3}m + \frac{6}{4}n - \frac{1}{2}(|D|-1).
\end{split} 
\end{align}

Thus, combining \Cref{eq:lowerbound-stretch} with \Cref{eq:upperbound-stretch}, the increase in stretch cost from strategy $B$ in comparison to strategy $D$ is at most 
\begin{align*}
    \text{stretchcost}_u(D) - \text{stretchcost}_u(B) \leq \frac{2}{3}|B| - \frac{1}{2}|D| - \frac{1}{6} < \frac{2}{3}|B| - \frac{1}{2}|D|.
\end{align*}

This now stays in contrast to the reduction in edgecosts given by $\text{edgecost}_u(B) - \text{edgecost}_u(D) = (|B| - |D|)\alpha$. We now recall that $\alpha > 4$ and by previous discussion it is finally revealed that 
\begin{align*}
    \text{edgecost}_u(B) - \text{edgecost}_u(D) &= (|B| - |D|)\alpha\\ 
    &> (|B| - |D|)4 \\
    &> \frac{2}{3}|B| - \frac{1}{2}|D|\\
    &> \text{stretchcost}_u(D) - \text{stretchcost}_u(B),
\end{align*}
whereby the penultimate line follows since $|D| \leq |B| - 1$. Notice that $\text{edgecost}_u(B) - \text{edgecost}_u(D) > \text{stretchcost}_u(D) - \text{stretchcost}_u(B)$ entails that $c_u(B) > c_u(D)$. Hence, if a best response of size $k$ could be computed in polynomial time, then the size of a set cover of size $k-1$ could be found in polynomial time, since our reduction is computable in polynomial time as well.
\end{proof}
\section{Euclidean Metrics}\label{sec:eumetrics}
We study Euclidean metrics, which are metrics where there is a function that maps agents to points in a \(d\)-dimensional Euclidean space, such that the distances in the metric between agents correspond to the distances of their points in the Euclidean space, measured via the 2-norm. We focus on 2D-Euclidean metrics but all results regarding the existence of equilibria and computational complexity directly apply to higher dimensional spaces as well. Given three nodes \(u,v,w \in \points\) in a Euclidean metric, let \(\angle vuw\) be the angle of \(u\) formed by rays \(\overrightarrow{uv}\) and \(\overrightarrow{uw}\). We always consider the positive angle, which is at most~\(\pi\).

\subsubsection*{\textbf{Equilibrium Existence}}
First, we show that GE may not exist, modifying the proof of Theorem 5.1 from Moscibroda, Schmid and Wattenhofer~\cite{moscibrodaTopologies2006}.

\begin{restatable}{theorem}{thmfourone}
    \label{the:existenceDirectedEuclid}
    In a 2D Euclidean metric there are instances of our game that do not have a GE.
\end{restatable}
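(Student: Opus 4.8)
The plan is to adapt the constant-size point configuration that Moscibroda, Schmid, and Wattenhofer use in the proof of their Theorem~5.1~\cite{moscibrodaTopologies2006} and to show that on this instance \emph{no} strategy-profile is in GE, i.e., in every network $G(\s)$ on these points some agent can strictly decrease its cost by a single add, delete, or swap of an incident outgoing edge. Since every NE is a GE, this simultaneously re-establishes that NE need not exist in 2D Euclidean metrics. Following~\cite{moscibrodaTopologies2006}, I would fix $\alpha$ to lie in a suitable constant range and, if necessary, mildly rescale/perturb the coordinates to accommodate the switch from shortest-path distances to greedy-routing-distances.

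First I would establish that it suffices to reason about networks in which greedy routing is enabled. If some agent~$u$ has no greedy path to some~$v$ in $G(\s)$, then $u$'s stretchcost contains the penalty term $Z$ for the pair $(u,v)$, and adding the single edge $(u,v)$ replaces this by a stretch of exactly $1$ -- the one-hop path $u,v$ is trivially greedy since $d_\M(u,v)>d_\M(v,v)=0$ -- at an additional edgecost of only $\alpha$. As $Z$ is an arbitrarily large penalty, this is an improving response, so such a $G(\s)$ cannot be in GE. Hence in every candidate GE all pairs have finite greedy-routing-distance, every stretch is bounded by a concrete constant determined by the fixed point set, and $Z$ never enters the analysis again.

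Next I would cut the a~priori infinite space of strategy-profiles down to a finite case analysis. In a GE, agent~$u$ never builds an edge $(u,w)$ whose removal fails to strictly increase at least one of $u$'s greedy-routing-distances, since then deleting it would save $\alpha>0$ without raising any stretch; combined with the planar angle constraints (only points lying in an appropriate cone toward the target can be intermediate hops of a greedy path, the planar analogue of the subtree restriction behind \Cref{lem:tree-subtree-paths}), this pins down a short list of ``relevant'' out-edges per agent on the instance. For each of the finitely many resulting configurations I would exhibit one explicit improving add/delete/swap move, and organize these moves so that they chain cyclically in the style of \Cref{the:12-dir-irc}: building the critical edge of one agent renders another agent's edge redundant, deleting that edge destroys the benefit of the first agent's critical edge so it now wants to swap or delete it, and so on, so that no configuration is stable. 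The verification reduces to a finite collection of strict inequalities among explicit Euclidean distances and $\alpha$, exactly as in~\cite{moscibrodaTopologies2006}.

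The main obstacle is that greedy-routing-distances can strictly exceed shortest-path distances -- the shortest $u$-$v$-path in the network need not be a greedy path, as \Cref{fig:greedy-stretch} shows -- so a move that is improving in the shortest-path model of~\cite{moscibrodaTopologies2006} at some configuration may no longer be improving here, and a detour that their model permits may simply be unavailable to us because it violates the strict monotonicity $d_\M(x_i,v)>d_\M(x_{i+1},v)$. Making the construction go through therefore requires re-checking every move of the original case analysis against greedy-routing-distances and, most likely, perturbing the coordinates so that (i) the direct-edge fallback of the second paragraph is always \emph{strictly} improving whenever greedy routing fails, (ii) the intended cheap greedy detours are genuinely greedy, and (iii) all the strict cost inequalities still hold for the chosen $\alpha$. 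I expect this coordinate tuning, rather than any conceptual step, to be where the real work lies.
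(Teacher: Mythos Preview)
Your plan is essentially the paper's own proof: it too adapts the five-point Moscibroda--Schmid--Wattenhofer configuration with $\alpha=0.6$, perturbs one distance (namely $d(b,y)$) so that the intended detours become genuine greedy paths, first pins down the forced edges in any GE, and then eliminates the four remaining candidate strategies for $y$ and $z$ by exhibiting a single improving add/delete/swap in each case. Your anticipated ``coordinate tuning'' is exactly the one modification the paper needs; the only minor mismatch is that you phrase the final case analysis as producing an improving-response \emph{cycle} in the style of \Cref{the:12-dir-irc}, whereas what is logically required (and what the paper does) is simply to show that \emph{each} surviving candidate admits some improving single-edge move---the cyclic structure among the four cases is incidental, not the mechanism of the argument.
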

\begin{proof}
\begin{figure}[htb]
    \centering
	    \centering
		    \resizebox{0.35\linewidth}{!}{
	    \begin{tikzpicture}[bend angle=90],
	    \clip(-1.25cm,-5.75cm) rectangle (8.5cm,2.25cm);
	    \node[vertex]   (a)                                     {\(a\)};
	    \node[vertex]   (b)        [right=3cm of a.center]                 {\(b\)}
	        edge []                          node[auto,swap, font=\LARGE] {1.14} (a);
	    \node[vertex]   (c)        [right=3cm of b.center]                 {\(c\)}
	        edge [bend right=45]			 node[auto,swap, font=\LARGE] {\(\approx 2.14\)} (a)
	        edge []                          node[auto,swap, font=\LARGE] {1} (b);
	        
	    \node[vertex]   (y)        [below left=3cm and 1.5cm of b.center]                 {\(y\)}
	        edge []                          node[auto, font=\LARGE] {1.96} (a)
	        edge []                          node[auto, font=\LARGE] {\(2-\epsilon\)} (b)
	        edge [looseness=2,bend right]   node[auto, font=\LARGE] {\(\approx 2.45\)} (c);
	    \node[vertex]   (z)        [right=3cm of y.center]                 {\(z\)}
	        edge [looseness=2,bend left]    node[auto,swap, font=\LARGE] {\(\approx 2.47\)} (a)
	        edge []                          node[auto,swap, font=\LARGE] {2} (b)
	        edge []                          node[auto,swap, font=\LARGE] {\(2+\epsilon\)} (c)
	        edge []                          node[auto,swap, font=\LARGE] {\(1-2\epsilon\)} (y);
	    
	    \end{tikzpicture}}
	    \caption{2D Euclidean metric that does not have a GE.}
     	    \label{fig:existenceDirectedEuclid}

\end{figure}
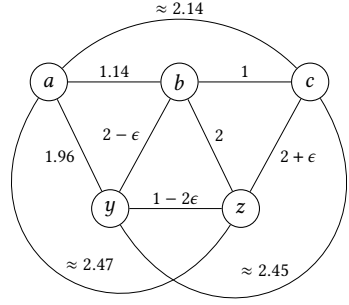
Modifying the proof of Theorem 5.1 from Moscibroda, Schmid and Wattenhofer~\cite{moscibrodaTopologies2006} for \(k = 1\), where \(\epsilon>0\) is an arbitrarily small constant, we show that there are no GE in the network in \Cref{fig:existenceDirectedEuclid} for \(\alpha = 0.6\). Differing from the construction in~\cite{moscibrodaTopologies2006}, we changed \(d(b,y)\) to enable greedy routing. To that end, we first establish general statements about the strategies of agents in NE to show that only the strategies for agents \(y\) and \(z\) shown in \Cref{fig:candidateEuclid} could possibly be part of a NE. Then, we show that none of these strategies is in NE. 

First, we note that the edges \((b,a),(b,c),(c,b),(y,z)\) and \((z,y)\) have to be built to enable greedy routing, because there is no other node closer to the head of each edge than the tail of that edge. Furthermore, agent \(a\) also builds an edge to \(b\) because the only other possible greedy path via \(c\) would have a stretch of \(\frac{d(a,c)+d(c,b)}{d(a,b)} = \frac{d(a,c)+1}{1.14} > 1 + \alpha\) and thus adding the edge to \(b\) is an improving move for \(a\). Additionally, agent \(a\) does not build an edge to \(c\) because that would improve the stretch to \(c\) by \(\frac{d(a,b) + d(b,c)}{d(a,c)} - 1 = \frac{2.14}{d(a,c)} - 1 < \alpha\) without changing the stretch to any other agent. Analogously, agent \(c\) does not build an edge to \(a\) because that would decrease stretches by at most \(\frac{d(c,b) + d(b,a)}{d(c,a)} - 1 + \frac{d(c,b) + d(b,a) + d(a,y) - (d(c,a) + d(a,y))}{d(c,y)} = \frac{2.14}{d(a,c)} - 1 + \frac{4.1 - (d(a,c) + 1.96)}{d(c,y)} < \alpha\).

Moreover, agent \(y\) also builds an edge to \(a\) because the only other possible greedy path via \(b\) has a stretch of \(\frac{d(y,b) + d(b,a)}{d(y,a)} = \frac{3.14 - \epsilon}{1.96} > 1 + \alpha\). Analogously, agent \(y\) does not build edges to both \(b\) and \(c\) because if it would, it could simply remove the edge to \(c\) and increase stretches by \(\frac{d(y,b) + d(b,c)}{d(y,c)} = \frac{3-\epsilon}{d(c,y)} - 1 < \alpha\). In fact, agent \(y\) does not build an edge to \(c\) because if they would, they could swap their edge from \(c\) to \(b\), changing the sum of stretches by at most \(\frac{d(y,b) + d(b,c)}{d(y,c)} - \frac{d(y,c) + d(c,b)}{d(y,b)} = \frac{3-\epsilon}{d(c,y)} - \frac{d(c,y) + 1}{2-\epsilon} < 0\).

Additionally, agent \(z\) has to build an edge to \(b\) or \(c\) because otherwise, there is no greedy path to \(c\). Agent \(z\) does not build both edges because if it did, it could remove its edge to \(c\), increasing stretches by \(\frac{d(z,b) + d(b,c)}{d(z,c)} - 1  = \frac{3}{2+\epsilon} - 1 < \alpha\). Finally, agent \(z\) does also not build an edge to \(a\) because the stretch to \(a\) via \(y\) is already \(\frac{d(z,y) + d(y,a)}{d(z,a)} = \frac{2.96-2\epsilon}{d(z,a)} < 1 + \alpha\) and the edge to \(a\) would not provide shorter greedy paths to \(b\) or \(c\)  because \(z\) needs an edge to \(b\) or \(c\) in any case.

This leaves the potential strategies for nodes \(y\) and \(z\) shown in \Cref{fig:candidateEuclid}. The strategies of \(a, b\) and \(c\) beyond the edges already discussed do not matter for the strategies for \(y\) and \(z\) because any additional edges could not be part of a greedy path starting from them. We examine all of these possible cases and show that they cannot be part of a GE.

\begin{figure}
    \centering
    \begin{subfigure}[b]{0.24\textwidth}
         \centering
         
        \resizebox{0.7\linewidth}{!}{
        \begin{tikzpicture}[bend angle=90],
        \node[vertex]   (a)                                     {\(a\)};
        \node[vertex]   (b)        [right=1cm of a.center]                 {\(b\)}
            edge [<->]                          (a);
        \node[vertex]   (c)        [right=1cm of b.center]                 {\(c\)}
            edge [<->]                          (b);
            
        \node[vertex]   (1)        [below left=1cm and 0.5cm of b.center]                 {\(y\)}
            edge [->]                          (a);
        \node[vertex]   (2)        [right=1cm of 1.center]                 {\(z\)}
            edge [->]                          (b)
            edge [<->]                          (1);
        \end{tikzpicture}}
        \caption{}
        \label{fig:candidateaEuclid}
    \end{subfigure}
    \hfill
    \begin{subfigure}[b]{0.24\textwidth}
         \centering
        \resizebox{0.7\linewidth}{!}{
        \begin{tikzpicture}[bend angle=90],
        \node[vertex]   (a)                                     {\(a\)};
        \node[vertex]   (b)        [right=1cm of a.center]                 {\(b\)}
            edge [<->]                          (a);
        \node[vertex]   (c)        [right=1cm of b.center]                 {\(c\)}
            edge [<->]                          (b);
            
        \node[vertex]   (1)        [below left=1cm and 0.5cm of b.center]                 {\(y\)}
            edge [->]                          (a)
            edge [->]                          (b);
        \node[vertex]   (2)        [right=1cm of 1.center]                 {\(z\)}
            edge [->]                          (b)
            edge [<->]                          (1);
        \end{tikzpicture}}
        \caption{}
        \label{fig:candidatecEuclid}
    \end{subfigure}
    \hfill
    \begin{subfigure}[b]{0.24\textwidth}
         \centering
        \resizebox{0.7\linewidth}{!}{
        \begin{tikzpicture}[bend angle=90],
        \node[vertex]   (a)                                     {\(a\)};
        \node[vertex]   (b)        [right=1cm of a.center]                 {\(b\)}
            edge [<->]                          (a);
        \node[vertex]   (c)        [right=1cm of b.center]                 {\(c\)}
            edge [<->]                          (b);
            
        \node[vertex]   (1)        [below left=1cm and 0.5cm of b.center]                 {\(y\)}
            edge [->]                          (a)
            edge [->]                          (b);
        \node[vertex]   (2)        [right=1cm of 1.center]                 {\(z\)}
            edge [->]                          (c)
            edge [<->]                          (1);
        \end{tikzpicture}}
        \caption{}
        \label{fig:candidatedEuclid}
    \end{subfigure}
    \hfill
    \begin{subfigure}[b]{0.24\textwidth}
         \centering
        \resizebox{0.7\linewidth}{!}{
        \begin{tikzpicture}[bend angle=90],
        \node[vertex]   (a)                                     {\(a\)};
        \node[vertex]   (b)        [right=1cm of a.center]                 {\(b\)}
            edge [<->]                          (a);
        \node[vertex]   (c)        [right=1cm of b.center]                 {\(c\)}
            edge [<->]                          (b);
            
        \node[vertex]   (1)        [below left=1cm and 0.5cm of b.center]                 {\(y\)}
            edge [->]                          (a);
        \node[vertex]   (2)        [right=1cm of 1.center]                 {\(z\)}
            edge [->]                          (c)
            edge [<->]                          (1);
        \end{tikzpicture}}
        \caption{}
        \label{fig:candidatebEuclid}
    \end{subfigure}
    \caption{Candidates for equilibrium strategies for vertices \(y\) and \(z\).}
    \label{fig:candidateEuclid}
\end{figure}
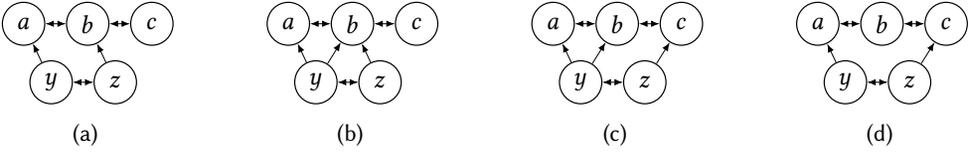

\textbf{Case 1:} In \Cref{fig:candidateaEuclid}, if \(y\) adds an edge to \(b\), the stretch to \(b\) decreases from \(\frac{d(y,a) + d(a,b)}{d(y,b)} = \frac{3.1}{2 - \epsilon}\) to \(1\) and the stretch to \(c\) from \(\frac{d(y,z) + d(z,b) + d(b,c)}{d(y,c)} = \frac{4-2\epsilon}{d(y,c)}\) to \(\frac{d(y,b) + d(b,c)}{d(y,c)} = \frac{3-\epsilon}{d(y,c)}\) which is a total improvement of more than \(\alpha\).

\textbf{Case 2:} In \Cref{fig:candidatecEuclid}, agent \(z\) can swap its edge from \(b\) to \(c\), changing the sum of stretches by \(\frac{d(z,y) + d(y,b)}{d(z,b)} - \frac{d(z,b) + d(b,c)}{d(z,c)} = \frac{3-3\epsilon}{2} - \frac{3}{2+\epsilon} < 0\).

\textbf{Case 3:} In \Cref{fig:candidatedEuclid}, if agent \(y\) removes its edge to \(b\), the stretch to \(c\) stays unchanged at \(\frac{d(y,z) + d(z,c)}{d(y,c)} = \frac{d(y,b) + d(b,c)}{d(y,c)}\), while the stretch to \(b\) increases from \(1\) to \(\frac{d(y,a) + d(a,b)}{d(y,b)} = \frac{3.1}{2 - \epsilon}\) which is an increase of less than \(\alpha\).

\textbf{Case 4:} In \Cref{fig:candidatebEuclid}, agent \(z\) can swap their edge from \(c\) to \(b\), changing the sum of stretches by \(\frac{d(z,b) + d(b,c)}{d(z,c)} - \frac{d(z,c) + d(c,b)}{d(z,b)} = \frac{3}{2 + \epsilon} - \frac{3+\epsilon}{2} < 0\).

Thus, there can be no GE in this instance. 
\end{proof}
\subsubsection*{\textbf{Computational Complexity}}
Next, we show that finding a best response is computationally hard.

\begin{restatable}{theorem}{thmfourtwo}
\label{the:euclid-np}
	In a 2D Euclidean metric, computing a best response is NP-hard.
\end{restatable}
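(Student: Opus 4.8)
The plan is to reduce from \textsc{Set Cover}, following the same high-level strategy as the tree-metric reduction of \Cref{the:tree-np}, but now realising all required distances by an explicit point configuration in the plane. Given a \textsc{Set Cover} instance with elements $x_1,\dots,x_n$ (each contained in at least one set, without loss of generality) and sets $Q_1,\dots,Q_m$, I would place the $n$ element nodes in a tiny cluster of radius $\varepsilon<\tfrac12$ around the origin, put a hub node $v$ at the origin, place one node per set $Q_j$ on the unit circle at distinct angles, and place the agent $u$ whose best response we analyse at distance $R$ from the origin, where $R$ is a polynomially large value. In the fixed strategy profile $\s_{-u}$, node~$v$ builds an edge to every $Q_j$, each $Q_j$ builds an edge to exactly the element nodes it contains, and every other agent builds nothing; in particular no agent points to $v$ or to $u$. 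I fix $\alpha$ to be a constant at least $1$ and take $R$ large enough that every greedy path has stretch below $1+\tfrac1n$, so that stretch savings are negligible against edge costs and the best response is determined purely by the \emph{number} of edges $u$ builds.

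I would then establish the following chain of claims, each of which amounts to checking the strict-decrease condition of a greedy hop for the above configuration. (1) Agent~$u$ must build $(u,v)$, since no other agent points to $v$, so $v$ is reachable from $u$ only by a direct edge; and once $(u,v)$ is built, $u$ automatically has the greedy path $u,v,Q_j$ to every set node, because $d(v,Q_j)=1$ while $d(u,Q_j)\geq R-1$. (2) From $v$ there is no greedy path to any element $x_i$, because $d(v,x_i)\leq\varepsilon<1-\varepsilon\leq d(Q_j,x_i)$, so the hop $v\to Q_j$ does not move closer to $x_i$; combined with the fact that there are no edges among the $Q_j$'s, none among the element nodes, and no outgoing edges at the element nodes at all, this shows that the only way for $u$ to greedily reach an element $x_i$ is either the direct edge $(u,x_i)$ or a two-hop path $u,Q_j,x_i$ with $x_i\in Q_j$, which requires $(u,Q_j)\in S_u$. (3) A short exchange argument --- in an optimal response, replace each direct element edge $(u,x_i)$ either by nothing, if $x_i$ is already covered by a chosen set, or by some $(u,Q_j)$ with $x_i\in Q_j$ --- shows that a minimum-size best response of $u$ consists of $(u,v)$ together with edges $(u,Q_j)$ for a minimum collection of sets covering all elements. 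Hence the number of edges built in a best response of $u$ equals $1+\mathrm{OPT}$, where $\mathrm{OPT}$ is the optimum of the \textsc{Set Cover} instance, and since the reduction is polynomial-time, computing a best response is NP-hard.

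The main work, and the point where the Euclidean case genuinely differs from the $1$-$2$-metric and tree reductions, is verifying that the planar configuration admits \emph{no unintended greedy paths} and that the forcing gadget survives the rigidity of Euclidean distances. In particular I would need to rule out every candidate shortcut --- $u,v,Q_j,x_i$, paths of the form $u,Q_j,Q_{j'},\dots$, and $u,Q_j,x_k,x_i$ --- which are all excluded either by the absence of the relevant edge in $\s_{-u}$ or by a failed greedy-hop test; check that clustering the elements within radius $\varepsilon$ keeps every distance $d(Q_j,x_i)$ bounded away from $d(v,x_i)$ so that the crucial non-greedy hop is robust to the $\varepsilon$-slack; and confirm that one value of $R$ simultaneously makes all intended hops greedy while keeping all stretches close enough to $1$ that a constant $\alpha$ outweighs them. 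Once these routine but somewhat delicate geometric checks are in place, the combinatorial heart of the proof is exactly the \textsc{Set Cover} correspondence described above.
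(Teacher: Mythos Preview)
Your approach is correct and follows the same \textsc{Set Cover} reduction as the paper, with the same node roles ($u$, hub $v$, set nodes $Q_j$, element nodes $x_i$) and the same edge pattern in $\s_{-u}$ ($v\to Q_j$, $Q_j\to x_i$). The difference lies in the geometric realisation and the cost argument. The paper places the $Q_j$'s and the $x_i$'s in two separate $\varepsilon$-clusters at carefully chosen fixed distances ($d(u,v)=6$, $d(u,Q)=9$, $d(v,Q)=7$, $d(u,x)=17$, $d(Q,x)=12$, $d(v,x)\approx 11.1$), takes $\alpha>4$, and then compares stretchcosts via explicit constants $\delta=\tfrac{4}{17}$ and $\Delta=\tfrac{4}{9}$, mirroring the tree-metric proof. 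You instead co-locate $v$ with the element cluster, spread the $Q_j$'s on the unit circle, and push $u$ out to polynomial distance $R$ so that \emph{all} stretches are $1+O(1/R)$ and the best response is determined purely by edge count. This buys you a cleaner argument (no stretch arithmetic), at the price of a non-constant-size embedding; the paper's fixed-scale construction is tighter but needs the $\delta,\Delta$ bookkeeping.

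One quantitative slip: your stated bound ``stretch below $1+\tfrac1n$'' is too weak, since the total stretch slack over all $n+m+1$ targets is then $(n+m+1)/n$, which can exceed your constant $\alpha$ when $m\gg n$. You need the per-node slack to be at most $\tfrac{\alpha}{2(n+m+1)}$ or so; this just means taking $R=\Theta(n+m)$ rather than $\Theta(n)$, so the fix is immediate and the argument goes through.
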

\begin{proof}
We reduce from \textsc{Set Cover} and show that, if the number of edges of an agent's best response could be computed in polynomial time, then the size of a minimum set cover could also be found in polynomial time, which would imply P = NP.

\begin{figure}[htb]
\begin{subfigure}[c]{0.49\textwidth}
    \centering
    \resizebox{0.7\linewidth}{!}{
    \begin{tikzpicture}[bend angle=45,
        help/.style={circle, draw=gray, minimum size=20mm}],
    \node[vertex]   (u)										{\(u\)};
    \node[vertex]   (v)			[right=2cm of u]       		{\(v\)}
        edge []			node[auto,swap, font=\LARGE] {6} 				(u);
        
    \node[help]   (Scenter)        [above right=2cm and 1cm of v]      {}
        edge []			node[auto,swap, font=\LARGE] {9}         (u)
        edge []			node[auto,swap, font=\LARGE] {7} 		(v);
        
    \node[help]   (xcenter)        [below right=0.8 cm and 2.5cm of v]      {}
        edge []			node[auto, font=\LARGE]		{17}         (u)
        edge []			node[auto,swap,pos=0.7, font=\LARGE] {\(\approx 11.10\)} 		(v)
        edge []			node[auto,swap, font=\LARGE] {12}         (Scenter);
        
    \node[vertex]   (Qm)        at (Scenter.180)     {\(Q_m\)};
    \node[fill=white]   (Qe)    at (Scenter.270)     {\dots};
    \node[vertex]   (Q2)        at (Scenter.0)     {\(Q_2\)};
    \node[vertex]   (Q1)        at (Scenter.90)     {\(Q_1\)}
        edge [gray]			node[auto,swap, font=\LARGE] {\(\epsilon\)}         (Qe);
        
    \node[vertex]   (xm)        at (xcenter.180)     {\(x_n\)};
    \node[fill=white]   (xe)    at (xcenter.270)     {\dots};
    \node[vertex]   (x2)        at (xcenter.0)     {\(x_2\)};
    \node[vertex]   (x1)        at (xcenter.90)     {\(x_1\)}
        edge [gray]			node[auto,swap, font=\LARGE] {\(\epsilon\)}         (xe);
    
    \end{tikzpicture}}
    \caption{}
    \end{subfigure}
     \begin{subfigure}[c]{0.5\textwidth}
    \centering
    
    \resizebox{0.7\linewidth}{!}{
    \begin{tikzpicture}[bend angle=60,
        help/.style={circle, minimum size=20mm}],
    \node[fill=white]   (c)												{};
    \node[vertex]   (v)			[right=3.4cm of c]					{\(v\)};

    \node[vertex]   (u)			[right=1cm of v]						{\(u\)};

    \node[vertex]   (x1)        [above left=1.4cm and 0.8cm of c]		{\(x_1\)};
    \node[fill=white]   (Se)    [right=1cm of c]					{\dots};

    \node[vertex]   (Q1)        [above right=1.4cm and 1cm of c]		{\(Q_1\)}
        edge [->]		node[auto, font=\LARGE] {} 				(x1) 
        edge [<-]		node[auto, font=\LARGE] {} 				(v);

    \node[vertex]   (xn)        [below left=1.4cm and 0.8cm of c]		{\(x_n\)}
        edge [<-]		node[auto,font=\LARGE] {} 				(Q1);
    \node[fill=white]   (xe)    [left=0.8cm of c]					{\dots};
    \node[vertex]   (Qm)        [below right=1.4cm and 1cm of c]		{\(Q_m\)}
        edge [->]		node[auto,swap,font=\LARGE] {} 				(xn)
        edge [<-]		node[auto, font=\LARGE] {} 				(v); 
    \node[fill=white]   (zero1)	[above= 1cm of Q1] {};
    \node[fill=white]   (zero2)	[below= 1cm of Qm] {};
    \end{tikzpicture}}
            \caption{}	    
	    \end{subfigure}
    \caption{(a) 2D Euclidean metric where \(Q_i\) and \(x_i\) are equally spaced on a circle with diameter \(\epsilon>0\). Distances to these circles are the distance to the closest point on it, (b) our construction for the  \textsc{Set Cover} reduction. }
    \label{fig:NPEuclidean}
\end{figure}
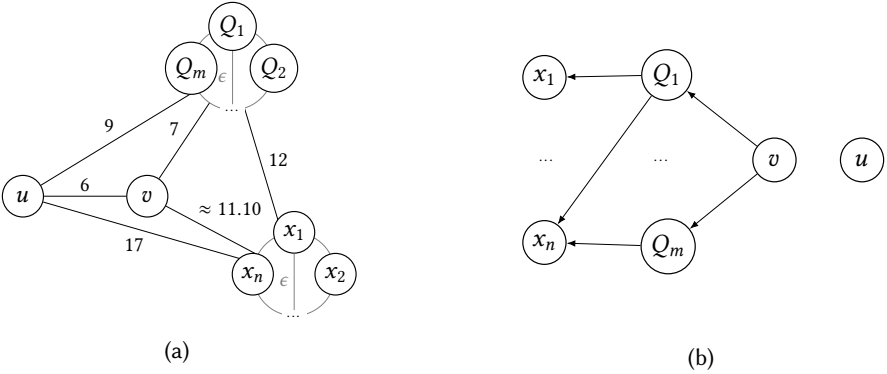

Given an instance of \textsc{Set Cover} \(P = (\{x_1, ... x_n\},\{Q_1, ..., Q_m\})\), we construct a bipartite graph analogously to the proof of \Cref{the:tree-np}. Then, by using the construction as given in \Cref{fig:NPEuclidean} and setting \(\alpha > 4\), the reduction from \textsc{Set Cover} follows by an almost verbatim repeating
of the proof of \Cref{the:tree-np}.

We use the metric as given in \Cref{fig:NPEuclidean}~a. 
Moreover, for the minimal and maximal stretch gain we have $\delta = \frac{d(u,Q_j) + d(Q_j, x_i)}{d(u,x_i)} - \frac{d(x_i)}{d(x_i)} = \frac{4}{17}$ and $\Delta = \frac{d(u,v) + d(v, Q_j)}{d(u, Q_j)} - \frac{d(u, Q_j)}{d(u,Q_j)} = \frac{4}{9}$ respectively.

This in conjunction with similar calculations as carried out in \Cref{eq:lowerbound-stretch} and \Cref{eq:upperbound-stretch} gives for the best response $B$ and the potential minimal set cover $D$  

\begin{align*}
    \text{stretchcost}_u(D) - \text{stretchcost}_u(B) &< \frac{4}{9}|B| - \frac{4}{17}|D|\\
    &<(|B| - |D|)\alpha =\text{edgecost}_u(B) - \text{edgecost}_u(D),
\end{align*}
where we made use of $|D| \leq |B| - 1$ in conjunction with $\alpha > 4$. We conclude that strategy $D$ is an improvement in total. Hence, if a best response could be found in polynomial time, then a minimum size set cover could be found in polynomial time because our reduction is computable in polynomial time as well.
\end{proof}
\subsubsection*{\textbf{Approximate Equilibria}}
Since GE do not always exist and finding a best response is computationally hard, approximate equilibria are the only remaining option to construct almost stable networks for the practically important setting of Euclidean metrics in polynomial time.

To that end, we employ \(\Theta_k\)-graphs for 2D-Euclidean metrics. 
First introduced independently by Clarkson \cite{clarksonApproximation1987} and Keil \cite{keilApproximating1988}, \(\Theta_k\)-graphs are constructed as follows: Each node \(u\) partitions the plane into \(k\) disjoint cones with itself as the apex, each having an aperture of \(\frac{2\pi}{k}\). Then, for each cone, node~\(u\) adds an edge to the node~\(v\) whose projection onto the bisector of the cone is closest to~\(u\).

\(\Theta\)-routing is a way of selecting paths in a \(\Theta_k\)-graph. With this, to get from a node \(u\) to a node \(v\), the edge is used that \(u\) created into the cone that contains node~\(v\). This procedure is repeated for each hop until node~\(v\) is reached. First, we show that if \(k\) is too small, \(\Theta_k\)-graphs are not suited as approximate NEs. This result is well-known~\cite{narasimhanGeometric2007}, but we prove it for the sake of completeness.

\begin{restatable}{theorem}{thmfourthree}
	\label{the:euclid-theta-small-k}
    For every \(k \leq 5\), there exist 2D-Euclidean metrics where the \(\Theta_k\) graph with directed edges does not enable greedy routing.
\end{restatable}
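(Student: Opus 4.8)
The plan is to give, for each $k\le 5$, an instance on only three points where the directed $\Theta_k$-graph contains no greedy path between two designated nodes. The key observation is that greedy routing from a node $u$ to a node $v$ already fails if $u$ has no outgoing edge to a node that is strictly closer to $v$ than $u$ is: in that case a greedy path from $u$ cannot even take its first hop, so $\gd{\Theta_k}(u,v)=\infty$. Accordingly I would construct points $u,v,w$ such that (i) $v$ and $w$ fall into the same cone of $u$'s $\Theta_k$-partition; (ii) inside that cone $u$ picks $w$, i.e.\ the projection of $w$ onto the cone's bisector is closer to $u$ than that of $v$; and (iii) $d(w,v)\ge d(u,v)$. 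Since the other $k-1$ cones of $u$ will be empty, $w$ is then the unique out-neighbour of $u$, and by (iii) it is not closer to $v$ than $u$, so no greedy $u$-$v$-path exists.

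For the construction, place $u$ at the origin, fix one of its cones $C$ (of aperture $2\pi/k$) with bisector $\hat b$, put $v$ at angular offset $+\psi$ from $\hat b$ at distance $1$, and put $w$ at angular offset $-\psi$ at distance $\rho<1$, where $0<\psi<\pi/k$ so that both points lie strictly inside $C$. Then $\angle vuw=2\psi$, the projections of $v$ and $w$ onto $\hat b$ have distances $\cos\psi$ and $\rho\cos\psi$ from $u$, so (ii) holds for every $\rho<1$, and by the law of cosines $d(w,v)^2=1+\rho^2-2\rho\cos(2\psi)$, so (iii) becomes $\rho\ge 2\cos(2\psi)$. The requirements $2\cos(2\psi)\le\rho<1$ are jointly satisfiable exactly when $2\cos(2\psi)<1$, i.e.\ $\psi>\pi/6$; combined with $\psi<\pi/k$ this forces $\pi/6<\pi/k$, i.e.\ $k\le5$. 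Concretely one may take $\psi$ just above $\pi/6$ (which also satisfies $\psi<\pi/k$ for every $k\le5$, since $\pi/k\ge\pi/5>\pi/6$) and $\rho=\frac12\left(2\cos(2\psi)+1\right)\in\left(2\cos(2\psi),1\right)$.

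It then remains to verify the three conditions on this concrete instance and read off the conclusion: $v$ and $w$ lie in the single cone $C$ of $u$ (from $\psi<\pi/k$), $u$ builds the edge $(u,w)$ there (from the projection comparison), and the other $k-1$ cones of $u$ contain no node, so $N(u)=\{w\}$; moreover $d(w,v)\ge d(u,v)$ by the choice of $\rho$. Hence the unique edge leaving $u$ does not strictly decrease the distance to $v$, so no greedy path from $u$ to $v$ exists and greedy routing is not enabled in the directed $\Theta_k$-graph. The argument is elementary plane geometry; the only genuinely delicate point is the simultaneous feasibility of the angular constraint $\psi\in(\pi/6,\pi/k)$ and the radial constraint $\rho\in[2\cos(2\psi),1)$ — this is exactly what confines the construction to $k\le5$ and breaks down at $k=6$, where $\pi/k=\pi/6$. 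A minor care point is to keep all cone-membership and projection inequalities strict, so that no tie-breaking convention in the definition of $\Theta_k$ can rescue greedy routing; this is ensured by taking $\psi$ strictly inside $(\pi/6,\pi/k)$ and $\rho$ strictly below $1$.
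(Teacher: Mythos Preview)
Your proof is correct and uses essentially the same three-point idea as the paper: place two nodes in one cone of $u$ so that $u$'s sole out-neighbour is not strictly closer to the target, whence no greedy path can leave $u$. The paper's construction swaps the roles of the near and far point and, as an extra, argues robustness against arbitrary global rotations of the cone system by planting several rotated copies of the gadget; your version avoids this by positioning the points relative to a fixed cone, and your parametrisation has the merit of making the threshold $k\le 5$ transparent via the feasibility window $\psi\in(\pi/6,\pi/k)$.
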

\begin{proof}
	We construct a 2D-Euclidean metric by placing nodes in a polar coordinate system. For now, we assume that the reference direction of the coordinate system is equal to the bisector of some cone used in the construction of the \(\Theta_k\)-graph. The resulting metric is illustrated in \Cref{fig:theta-small-k}.
	
\begin{figure}[htb]
    \centering
    \resizebox{0.33\linewidth}{!}{
    \begin{tikzpicture}[bend angle=45,
        help/.style={circle, draw=gray, minimum size=20mm}],
    \node[vertex]   (u)										{\(u\)};
    
    \node[]   (b)			[above=6cm of u]				{}
        edge [dashed]	node[pos = 0, auto,swap, font=\LARGE] {cone bisector} (u);
    
    \node[vertex]   (v)			[above left=4cm and 2.80083cm of u]       		{\(v\)}
        edge []		node[auto,swap, font=\LARGE] {1}					(u);
    
    \node[]   (vv)			[above=4cm of u]				{}
        edge [dotted]	 (v);
    
    \node[vertex]   (w)			[above right=4.2cm and 2.94087cm of u]       		{\(w\)}
        edge []		node[auto, font=\LARGE] {\(1 + \epsilon\)}		(u);
    
    \node[]   (ww)			[above=4.2cm of u]				{}
        edge [dotted]	 (w);
        
	\draw pic[draw,angle radius=1.8cm,pic text=\(\frac{7\pi}{36}\), font=\LARGE] {angle=b--u--v};
	\draw pic[draw,angle radius=2cm,pic text=\(\frac{7\pi}{36}\), font=\LARGE] {angle=w--u--b};
	
	\draw pic[draw,dotted,angle radius=0.5cm,pic text=.] {angle=v--vv--u};
	\draw pic[draw,dotted,angle radius=0.5cm,pic text=.] {angle=u--ww--w};
    
    \end{tikzpicture}}
    \caption{2D Euclidean metric where there is no greedy path between \(u\) and \(w\) in the \(\Theta_k\)-graph for \(k \leq 5\).}
    \label{fig:theta-small-k}
\end{figure}
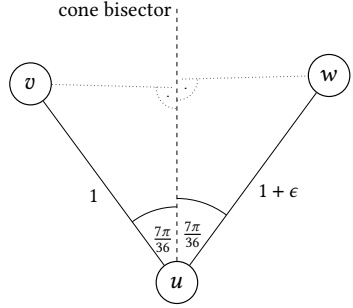
	
	Let \(u\) be located in the origin. Moreover, let \(v\) be at distance \(1\) and angle \(-\frac{7\pi}{36}\), and \(w\) at distance \(1 + \epsilon\) and angle \(\frac{7\pi}{36}\). We call this construction \(C\).
	Since both \(v\) and \(w\) are in the same cone for \(u\) if \(k \leq 5\) (because \(\frac{7\pi}{36} < \frac{\pi}{5}\)) and \(v\)'s projection on the bisector of that cone is closer to \(u\) than that of \(w\), agent \(u\) does not build an edge to \(w\). But since \(d(v,w) > d(u,w)\), agent \(u\) cannot use \(v\) on a greedy path to \(w\) and thus cannot have a greedy path to node~\(w\).
	
	In a slightly modified construction, even a global rotation of the cones (i.e.\ the reference direction of the coordinate system not lining up with the bisector of any cone) cannot achieve better results: First, we note that with cone-rotations of less than \(\frac{\pi}{k} - \frac{7\pi}{36}\) in either direction, the same cones stay occupied by nodes and thus the same argumentation still holds. Additionally, for any \(\beta \in \mathbb{R} \), a rotation by \(\beta\) is equal to a rotation by \(\beta \mod \frac{2\pi}{k}\). Thus, we place \(m = \frac{2\pi}{k} / (\frac{\pi}{k} - \frac{7\pi}{36})\) copies of \(C\), which we call \(C_1,... C_m\). Each \(C_{i+1}\) is rotated by \(\frac{\pi}{k} - \frac{7\pi}{36}\), compared to \(C_{i}\), and the distance between the centers of any \(C_i\) and \(C_{i'}\) is at least \(b = 4\). With this, in any rotation there is a \(u_i\) in some \(C_i\) for which their \(v_i\) and \(w_i\) are in the same cone. Since all nodes from other \(C_{i'}\) have a distance of at least \(b - 2\cdot d(u,w) = 2-2\epsilon\) to \(w_i\), agent \(u_i\) cannot use them on a greedy path \(w_i\) either and thus \(u\) has no greedy path to \(w_i\).
\end{proof}

With this limitation in mind, we give a general upper bound on the approximation ratio; although this leaves the case of \(k = 6\) open for now. Let \(f(k) = \frac{1}{1-2sin(\frac{\pi}{k})}\) be the maximum stretch of the \(\Theta\)-routing in a \(\Theta_k\)-graph with \(6 < k < n\) \cite{ruppertApproximating1991}.

\begin{theorem}
	\label{the:euclid-theta-upper}
    For \(6 < k < n\), every 2D-Euclidean instance of our game has a \(f(k) + \alpha\frac{k}{n-1}\)-NE.
\end{theorem}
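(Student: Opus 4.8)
The plan is to show that the $\Theta_k$-graph on the given point set --- realized by the strategy profile $\s$ in which each agent $u$ builds exactly its $\Theta_k$-edges as outgoing edges --- is itself a $\beta$-NE with $\beta = f(k) + \alpha\frac{k}{n-1}$. The whole argument collapses to two elementary estimates: an upper bound $c_u(\s) \le f(k)(n-1) + \alpha k$ valid for \emph{every} agent $u$, and a lower bound $c_u(S'_u,\s_{-u}) \ge n-1$ valid for every improving deviation; their ratio is exactly $\beta$.

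First I would bound $c_u(\s)$. Since $u$ partitions the plane into $k$ cones and builds at most one edge per cone, $|S_u| \le k$, hence $\text{edgecost}_u(\s) \le \alpha k$. For the stretch I would first show that for $k > 6$ the $\Theta$-routing path from any $u$ to any $v$ is a \emph{greedy} path in $G(\s)$: if along it a node $p$ forwards to the node $q$ it selected in the cone (apex $p$, aperture $2\pi/k$) containing $v$, then the $\Theta_k$ selection rule gives $d_\M(p,q)\cos(\pi/k) \le d_\M(p,v)$ while $\angle qpv \le 2\pi/k$, so the law of cosines yields $d_\M(q,v) < d_\M(p,v)$ whenever $2\cos(\pi/k)\cos(2\pi/k) > 1$ --- an inequality that is monotone increasing in $k$ over this range and already holds at $k = 7$. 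Hence greedy routing is enabled and, since $\Theta$-routing has stretch at most $f(k)$~\cite{ruppertApproximating1991}, $\gd{G(\s)}(u,v) \le f(k)\,d_\M(u,v)$, i.e.\ $\text{stretch}_{G(\s)}(u,v) \le f(k)$ for every ordered pair. Therefore $\text{stretchcost}_u(\s) \le f(k)(n-1)$ and $c_u(\s) \le f(k)(n-1) + \alpha k$.

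Next I would lower-bound deviations. Fix $u$ and any alternative strategy $S'_u$; since only $u$'s outgoing edges change, greedy paths not originating at $u$ are untouched. If $(S'_u,\s_{-u})$ leaves $u$ without a greedy path to some node, then $c_u(S'_u,\s_{-u}) \ge Z > c_u(\s)$ for the large penalty constant $Z$, so this is not an improving move. Otherwise all stretches from $u$ are finite, hence each is at least $1$, so $\text{stretchcost}_u(S'_u,\s_{-u}) \ge n-1$ and thus $c_u(S'_u,\s_{-u}) \ge n-1$. Combining the two bounds,
\[
\frac{c_u(\s)}{c_u(S'_u,\s_{-u})} \;\le\; \frac{f(k)(n-1)+\alpha k}{n-1} \;=\; f(k) + \alpha\frac{k}{n-1} \;=\; \beta,
\]
so no agent can reduce its cost below $\tfrac1\beta c_u(\s)$; thus $G(\s)$ is a $\beta$-NE, and it is moreover polynomial-time constructible.

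I expect the only genuinely nontrivial step to be the claim that $\Theta$-routing follows greedy paths once $k > 6$: this is exactly where the hypothesis $6 < k$ is consumed, and it is why the case $k = 6$ is left open. The remaining ingredients --- the bound $k$ on the out-degree, the Ruppert--Seidel stretch bound, and the one-line ratio computation --- are routine, so the trigonometric verification of $d_\M(q,v) < d_\M(p,v)$ for the worst placement of $v$ and $q$ inside a cone is where I would concentrate the care.
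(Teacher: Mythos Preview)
Your proof is correct and follows essentially the same approach as the paper: upper-bound each agent's cost in the $\Theta_k$-graph by $f(k)(n-1)+\alpha k$, lower-bound any deviation by $n-1$, and take the ratio. The only difference is that the paper asserts in one clause that $\Theta$-routing ``gives a greedy path,'' whereas you supply the law-of-cosines verification that this holds once $2\cos(\pi/k)\cos(2\pi/k)>1$; this added detail is sound and is indeed where the hypothesis $k>6$ is used.
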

\begin{proof}
Consider the \(\Theta_k\)-graph on \(\mathcal{P}\). With \(6 < k < n\), the stretch between any two nodes is at most \(f(k)\) with \(\Theta\)-routing, which gives a greedy path. The best response of an agent has costs of at least \(n-1\), because the stretch to every other node is at least \(1\). Thus, the \(\Theta_k\)-graph constitutes a \(f(k) + \alpha\frac{k}{n-1}\)-approximate NE.
\end{proof}
While this does not yield a constant approximation ratio, we note that, as \(n \to \infty\), the approximation ratio goes to \(f(k)\), which can be arbitrarily close to \(1\); for \(k \geq 15\), it is below \(2\).

Now, we establish our main result: the existence of a constant factor approximation that can be efficiently computed.

\begin{theorem}
	\label{the:euclid-theta8}
    Every 2D-Euclidean instance of the game has a \(5\)-approximate NE. 
\end{theorem}

\begin{proof}
Consider the \(\Theta_8\)-graph on \(\mathcal{P}\). By Bose,  De Carufel, Morin, van Renssen and Verdonschot~\cite{boseTight2016}, \(\Theta\)-routing in any \(\Theta_{4k+4}\)-graph gives stretches of at most \(1 + \frac{2 \sin(\frac{\pi}{4k+4})}{\cos(\frac{\pi}{4k+4}) - \sin(\frac{\pi}{4k+4})}\). Thus, the maximum stretch of \(\Theta\)-routing in the \(\Theta_8\)-graph is \(1 + \frac{2 \sin(\frac{\pi}{8})}{\cos(\frac{\pi}{8}) - \sin(\frac{\pi}{8})} = 1 + \sqrt{2}\), which yields a greedy path. In the best response of an agent, its stretch to any other agent is at least \(1\).

Every agent \(u\) for whom the other agents are not within a cone with angle at most \(\pi\) must build at least 2 edges in its best response. This stems from the fact that an edge to any node~\(v\) can only be part of a greedy path to a node~$i$ with \(\angle vui \leq \frac{\pi}{2}\), because otherwise
\[d(v,i) = \sqrt{d(u,i)^2 + d(u,v)^2 - 2d(u,i)d(u,v)\cos(\angle vui)} > \sqrt{d(u,i)^2} = d(u,i),\]
by the law of cosines and as such the path would not be a greedy path.

For every agent \(u\) for whom the other agents are within a cone with angle at most~\(\pi\), at least three cones of the \(\Theta_8\)-graph are empty and as such agent~\(u\) only builds at most five edges in the \(\Theta_8\)-graph.

Finally, we consider our stretchcost function. Edgecosts are at least \(\alpha\) or \(2\alpha\) in the best response, and at most \(5\alpha\) or \(8\alpha\) in the \(\Theta_8\)-graph, depending on whether all other agents are within a cone with angle at most \(\pi\) or not.

Let \(u \in \mathcal{P}\). Furthermore, let \(\s_\Theta\) be the strategy profiles of the \(\Theta_8\)-graph in a metric, where for agent~\(u\) not all other agents are within a cone with angle at most \(\pi\). Also, let \(\s^\pi_\Theta\) be the strategy profiles of the \(\Theta_8\)-graph where all other agents are within a cone with angle at most \(\pi\). Let \(\s_{br}\) and \(\s^\pi_{br}\) be the corresponding strategy profiles, where agent~\(u\)'s strategy is changed to its best response, while all other strategies stay unchanged.

For our stretchcost function, we get that the \(\Theta_8\)-graph is a
\[\max\left(\frac{c_u(s_\Theta)}{c_u(s_{br})}, \frac{c_u(s^\pi_\Theta)}{c_u(s^\pi_{br})}\right) =
\max\left( \frac{8\alpha + (1 + \sqrt{2})(n-1)}{2\alpha + (n-1)}, \frac{5\alpha + (1 + \sqrt{2})(n-1)}{\alpha + (n-1)}\right) \leq 5\]
-approximate NE, because the stretch to all \(n-1\) other agents is at least \(1\) in a best response and at most \(1 + \sqrt{2}\) in the \(\Theta_8\)-graph.
\end{proof}

Note that \Cref{the:euclid-theta8} does not imply a constant bound for 1-2 metrics, since there are instances in 1-2 metrics that cannot be embedded into the Euclidean plane.

The following theorem gives a lower bound on the approximation ratio of \(\Theta_k\)-graphs. For the \(\Theta_8\)-graphs we used in the last theorem, this bound is tight. 
\begin{theorem}
	\label{the:euclid-theta-lower}
    There are 2D-Euclidean instances of our game, where the \(\Theta_k\)-graph is not a \((\ceil{\frac{k}{2}} + 1 - \epsilon)\)-approximate NE.
\end{theorem}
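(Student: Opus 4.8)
The plan is: given the desired slack $\epsilon>0$, construct a 2D-Euclidean instance containing one agent $u$ that in the $\Theta_k$-graph is forced to buy exactly $\ceil{k/2}+1$ outgoing edges, whereas a single suitably placed edge already lets $u$ greedily reach every other node; then take the edge price $\alpha$ large so that edgecosts dominate and the approximation ratio for $u$ approaches $\ceil{k/2}+1$. Concretely, place $u$ at the origin, fix an angular sector $S$ with apex $u$ and opening angle $L\in\bigl(\pi-\tfrac{\pi}{k},\,\pi\bigr)$, and pick a \emph{generic} rotation of the $k$ cones of $u$'s $\Theta_k$-partition so that each of the two bounding rays of $S$ lies in the interior of a cone. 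Since $\frac{Lk}{2\pi}\in(\ceil{k/2}-1,\ceil{k/2})$ for such $L$, the sector $S$ then meets exactly $\ceil{k/2}+1$ of the $k$ cones. Inside $S$ we place $n-2$ ``target'' points in generic position at distance $\Theta(1)$ from $u$, at least one in each of these $\ceil{k/2}+1$ cones, and one further point $v^\ast$ on the bisector of $S$ at a small distance $\lambda>0$ from $u$ (so $v^\ast$ lies in an already-occupied cone); we keep $n$ arbitrary (with $n>k$) and fix $\alpha$ at the end.

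On the $\Theta_k$-side, all $n-1$ other points lie in $S$, which meets exactly $\ceil{k/2}+1$ cones of $u$ and each such cone is nonempty, so in $G(\s_\Theta)$ agent $u$ builds exactly $\ceil{k/2}+1$ edges, hence $c_u(\s_\Theta)\ge(\ceil{k/2}+1)\alpha$. For the cheap alternative $S_u'=\{v^\ast\}$: every target $w$ satisfies $\angle v^\ast uw\le L/2<\pi/2$, so for $\lambda$ small enough $d(v^\ast,w)<d(u,w)$, and concatenating the edge $(u,v^\ast)$ with a $\Theta$-routing path from $v^\ast$ to $w$ (which exists, is distance-monotone towards $w$, has stretch at most $f(k)$ for $6<k<n$, and cannot revisit $u$ since $d(u,w)>d(v^\ast,w)$) gives a greedy $u$--$w$ path of length at most $\lambda+f(k)\bigl(d(u,w)+\lambda\bigr)$. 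Hence $c_u(S_u',\s_{-u})\le\alpha+\bigl(f(k)+O(\lambda)\bigr)(n-2)+1$, which also upper-bounds the best-response cost $c_u(\s_{br})$.

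Putting the two bounds together,
\[
\frac{c_u(\s_\Theta)}{c_u(\s_{br})}\ \ge\ \frac{(\ceil{k/2}+1)\,\alpha}{\alpha+\bigl(f(k)+O(\lambda)\bigr)(n-2)+1}\ \xrightarrow[\ \alpha\to\infty\ ]{}\ \ceil{k/2}+1,
\]
so choosing $n$, then $\lambda$ small, then $\alpha$ large enough forces this ratio above $\ceil{k/2}+1-\epsilon$; that is, the $\Theta_k$-graph is not a $(\ceil{k/2}+1-\epsilon)$-approximate NE. (For the edge cases $k\le6$, where $\Theta_k$-routing is not guaranteed to produce greedy paths, one can instead take an instance on which $\Theta_k$ fails to enable greedy routing at all---for $k\le5$ by \Cref{the:euclid-theta-small-k}---so some agent has cost at least $Z$ while its best response is finite and the claim is immediate; we regard $k\ge7$ as the relevant regime.)

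The main obstacle is the first, purely geometric step: one must simultaneously (i) force $S$ to meet as many as $\ceil{k/2}+1$ cones---which requires $S$ to cut a cone boundary near \emph{each} of its two ends, hence the tight window $L\in\bigl(\pi-\tfrac{\pi}{k},\pi\bigr)$ together with the generic rotation---while (ii) still keeping $L<\pi$, so that a single edge to the bisector point $v^\ast$ keeps every angle $\angle v^\ast uw$ strictly below $\pi/2$ and the one-edge response really does realize greedy routing. Two routine checks remain: that adding $v^\ast$ does not raise the occupied-cone count, and that in the $\Theta_k$-graph $u$'s edge in $v^\ast$'s cone goes to $v^\ast$ (its bisector-projection is $O(\lambda)$ versus $\Theta(1)$ for the targets)---although all we actually use is that $u$ has \emph{some} edge into each of the $\ceil{k/2}+1$ cones.
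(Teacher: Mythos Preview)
Your proof is correct and follows the same skeleton as the paper's: pin one agent $u$ so that in the $\Theta_k$-graph it must own $\ceil{k/2}+1$ outgoing edges, exhibit a one-edge alternative strategy that still enables greedy routing for $u$, and let $\alpha\to\infty$ so edgecosts dominate. The implementation details differ in two places worth noting. First, the paper gives an explicit construction (a node $v$ at distance~$1$ and the $w_i$ at radii $1/\cos(\angle vuw_i)$ so that $d(v,w_i)<d(u,w_i)$ by a direct cosine-law identity), whereas you use a small parameter $\lambda$ and a generic-position sector; both work, but the paper's choice avoids the extra limit in $\lambda$. Second, and more interestingly, the paper does \emph{not} invoke $\Theta$-routing from the relay node; instead it argues by dichotomy---either $v$ already has greedy paths to every $w_i$ (and $u$ can piggyback on them), or $v$ itself has cost $\ge Z$ in the $\Theta_k$-graph and the approximation claim fails for $v$. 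Your direct use of $\Theta$-routing from $v^\ast$ (leaning on the $k\ge 7$ monotonicity that the paper states just before \Cref{the:euclid-theta-upper}) is arguably cleaner and gives an explicit stretch bound for the alternative. Finally, the paper also shows robustness to a global rotation of the cones by placing two rotated copies of the gadget; you simply fix a rotation, which is enough for the existential statement as phrased.
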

\begin{proof}
	We construct a 2D-Euclidean metric by placing nodes in a polar coordinate system. For now, we assume that the reference direction of the coordinate system is equal to either the bisector of some cone used in the construction of the \(\Theta_k\)-graph if \(\ceil{\frac{k}{2}} + 1\) is odd or the edge of a cone otherwise.
	Let \(u\) be located at the origin and let \(v\) be at distance \(1\) and angle \(0\). Let the nodes \(w_1, ..., w_{\ceil{\frac{k}{2}} + 1}\) be at angles evenly spaced out over the interval \([-\frac{\pi}{2} + \frac{2\pi}{5k}; \frac{\pi}{2} - \frac{2\pi}{5k}]\) and at distances \(\frac{1}{\cos(\angle vuw_i)}\) (this is possible, as \(\angle vuw_i < \frac{\pi}{2}\) and thus \(\cos(\angle vuw_i) > 0\)). We call this construction \(C\).
	
	Agent \(u\) builds edges to \(\ceil{\frac{k}{2}} + 1\) nodes in the \(\Theta_k\)-graph because each \(w_i\) is in a different cone. For all \(i\), we have, by the law of cosines, that
	\[d(v, w_i)
	= \sqrt{1^2 + d(u, w_i)^2 - 2 \cdot 1 \cdot d(u, w_i) \cos(\angle vuw_i)}
	= \sqrt{d(u, w_i)^2 - 1}
	< d(u,w_i).\]
	As such, agent \(v\) cannot use \(u\) on greedy paths to any \(w_i\), whereas \(u\) can use \(v\) for all of them. If \(v\) would not have a greedy path to some \(w_i\), this could not be a \((\ceil{\frac{k}{2}} + 1 - \epsilon)\)-approximate NE because adding edges to all \(w_i\) without a greedy path would improve \(v\)'s costs from at least \(Z\) to less than \(Z\), which by definition of \(Z\) is a large improvement, also by more than a factor of \((\ceil{\frac{k}{2}} + 1 - \epsilon)\). Thus, agent~\(u\) could replace all of its edges with an edge to \(v\) and still retain greedy paths to all nodes. Let \(sc\) and \(sc'\) be agent~\(u\)'s stretchcosts before and after this move. Agent~\(u\)'s total costs would improve by a factor of \(\frac{(\ceil{\frac{k}{2}} + 1)\alpha + sc'}{\alpha + sc}\) which is at least \(\ceil{\frac{k}{2}} + 1 - \epsilon\) for \(\alpha \geq \frac{(\ceil{\frac{k}{2}} + 1)sc - sc' - \epsilon}{\epsilon}\).
	
	In a slightly modified construction, even a global rotation of the cones (i.e.\ the reference direction of the coordinate system not lining up with the bisector/edge of any cone) cannot achieve better results: First, we note that with cone-rotations of less than \(\frac{4\pi}{5k}\) in either direction, the same cones stay occupied by nodes and as such, this construction still holds. Also, a rotation by \(\beta\) is equal to a rotation by \(\beta \mod \frac{2\pi}{k}\). Thus, we place two copies of \(C\), which we call \(C_1\) and \(C_2\). The copy \(C_2\) is rotated by \(i\frac{2\pi}{k} + \frac{\pi}{k}\), for some \(i \in \mathbb{N}\), compared to \(C_1\), and its center is displaced. With this, in any rotation, one of \(u_1\) or \(u_2\) still has to build  \(\ceil{\frac{k}{2}} + 1\) many edges. Additionally, by placing the \(C_1\) and \(C_2\) far enough apart and choosing \(i\) for the rotation to ensure that they are each contained in cones of \(u\) of the other that are occupied by some \(w_i\), these do not impact the number of edges agent~\(u\) builds in the \(\Theta_k\)-graph or needs to build in its best response.
\end{proof}
Thus, we cannot get a better bound on the approximation ratio of \(\Theta_8\)-graphs.
However, the approximation factor does not improve by choosing a smaller \(k\): With \(k = 7\) being an odd number, \Cref{the:euclid-theta-lower} still gives the same lower bound of \(5 - \epsilon\). With \(k = 6\) the known bound on the stretch goes up to at least \(7\) (and that is not along necessarily greedy paths; to get greedy paths the stretch might be as large as the bound on \(\Theta\)-routing of \(12\sqrt{3}\))~\cite{akitayaSpanning2022}. For smaller values of \(k\), by \Cref{the:euclid-theta-small-k}, there might be no greedy paths between some nodes.

\section{General Metrics} \label{sec:gemetrics}
Now we consider general metric spaces. Naturally, all negative results from the preceding sections, like hardness and non-convergence results, immediately carry over. Specifically, the results that are not implicit in the proofs in this section is that computing best responses is NP-hard, that GE are not \(\Omega(\frac{\alpha n}{\alpha + n})\) -approximate NE and that GE may not exist.

\subsubsection*{\textbf{Hardness of the Existence of Equilibria}} 
Here, we show that deciding whether GE and NE exist in a given instance with an arbitrary metric is NP-hard.

\begin{restatable}{theorem}{thmfiveone}
    \label{the:existenceNP}
    In general metric spaces, is both NP-hard to decide, whether an instance admits a NE and whether it admits a GE.
\end{restatable}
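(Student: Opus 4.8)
The plan is to reduce from \textsc{3-SAT}, following the scheme of Moscibroda, Schmid and Wattenhofer~\cite{moscibrodaTopologies2006} and reusing the cyclic no-GE instance from the proof of \Cref{the:existenceDirectedEuclid} as a building block. Given a formula $\phi$ with variables $x_1,\dots,x_n$ and clauses $C_1,\dots,C_m$, we build a metric space $\M_\phi$ on polynomially many points, assembled from three kinds of gadgets, with all inter-gadget distances chosen so large (and triangle-inequality-consistent, e.g.\ by placing the gadgets into far-apart clusters) that no edge between two different gadgets can ever be part of a greedy path, except for a constant number of designated \emph{interface} edges inside the clause gadgets. First I would design a \emph{variable gadget} $V_i$: a constant-size point set admitting exactly two locally stable strategy profiles $T_i$ and $F_i$ (representing $x_i$ being true or false, respectively), such that in any other configuration some agent \emph{of $V_i$ itself} has an improving single-edge move, and such that each of $T_i,F_i$ makes one designated short greedy connection available precisely to the clause gadgets of the clauses containing the satisfied literal. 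Second, I would design a \emph{clause gadget} $K_j$ containing a distinguished agent $q_j$ that, if none of the interface connections of the three literals of $C_j$ is present, behaves like the instance of \Cref{the:existenceDirectedEuclid} and admits no GE, but which can be completed to a locally stable configuration as soon as at least one such interface connection is available.

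For correctness, suppose first that $\phi$ is satisfiable; fix a satisfying assignment, put each $V_i$ into the corresponding state $T_i$ or $F_i$ (each locally stable), so that every clause gadget receives at least one interface connection and can be completed to a stable configuration, and let every remaining (``filler'') agent build a cheapest edge set that enables greedy routing inside its own gadget. Since every gadget has constant size and is prohibitively far from all others, each of these choices is in fact a best response, so $\M_\phi$ admits a NE. Conversely, suppose $\phi$ is unsatisfiable and consider any strategy profile $\s$. If some $V_i$ is not in state $T_i$ or $F_i$, then an agent of $V_i$ has an improving single-edge move, so $\s$ is not a GE. Otherwise $\s$ induces a Boolean assignment; since $\phi$ is unsatisfiable, some clause $C_j$ is unsatisfied, $K_j$ gets no interface connection, and by \Cref{the:existenceDirectedEuclid} agent $q_j$ has an improving single-edge move, so $\s$ is again not a GE. Hence $\M_\phi$ has no GE, and, since every NE is a GE, no NE either. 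Thus $\phi$ is satisfiable if and only if $\M_\phi$ admits a NE, which holds if and only if $\M_\phi$ admits a GE; as the construction is polynomial-time computable, both decision problems are NP-hard.

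The main obstacle is engineering the two core gadgets so that all of the following hold at once: (i) the variable gadget has \emph{exactly} two equilibria, both robust against the rest of the instance, with every instability attributable to a variable-gadget agent --- the classical difficulty, here further complicated by the fragility of greedy paths, since deleting a single edge can destroy all greedy paths to a node, which both helps force certain edges and makes the stability analysis more delicate; (ii) the clause gadget genuinely admits no GE when unhelped yet becomes stabilizable by a single short interface connection, which requires re-running the case analysis of \Cref{the:existenceDirectedEuclid} in the presence of the interface point and of the large distances to all other gadgets; and (iii) the whole instance keeps greedy routing globally enabled, so that no agent's cost ever reaches the penalty $Z$, while satisfying the triangle inequality. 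A further necessary check is that, in the satisfiable case, the exhibited profile is a \emph{full} NE, i.e.\ best-response optimality must be verified for every agent, including the filler agents, not only for the gadget agents.
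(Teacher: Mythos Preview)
Your high-level plan (reduce from a SAT variant, use the cyclic no-GE instance of \Cref{the:existenceDirectedEuclid} as the ``unsatisfied clause'' mechanism, and show the resulting instance has a NE iff the formula is satisfiable) matches the paper's. However, the modular gadget decomposition you sketch has a structural problem that the paper sidesteps by a substantially different, non-modular construction.

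The issue is the interaction between global greedy routing and local stability. Every agent must have finite greedy-routing distance to \emph{every} other agent, including those in other gadgets. With far-apart clusters, each agent's stretchcost and edgecost to reach the many nodes outside its own gadget will dominate its cost; hence an agent's best response, and even its improving single-edge moves, are governed primarily by the global structure, not by the constant-size local gadget. In particular, the claim that a variable gadget has ``exactly two locally stable states'' independent of the rest of the instance does not survive: adding or swapping an edge inside $V_i$ changes the agent's stretch to the $\Theta(n+m)$ far-away nodes as well, and there is no reason these effects cancel. Your obstacle (iii) is thus not a detail to be filled in later but the crux of the argument, and the ``far-apart clusters'' idea works against you here rather than for you.

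The paper avoids this by an integrated construction with no separate variable gadgets at all. A \emph{single} agent $z$ encodes the whole assignment: for each variable $x_u$ there are two nodes $t_u,f_u$, and one shows (crucially using the \textsc{3-SAT} variant where each variable occurs in at most three clauses) that in any GE $z$ builds exactly one of the edges $(z,t_u),(z,f_u)$. Each clause $C_i$ contributes a triple $a_i,b_i,c_i$ with the distances of \Cref{the:existenceDirectedEuclid} to the shared pair $y,z$; if no literal of $C_i$ is chosen by $z$, the cycling argument of \Cref{the:existenceDirectedEuclid} applies to $y$ and $z$ restricted to $a_i,b_i,c_i$. A single hub node $d$ at distance $1.3$ from all $a_i,b_i,c_i,t_u,f_u$ supplies the remaining greedy connectivity without creating shortcuts that would break the cycling argument. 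Because the cycling pair $(y,z)$ and the assignment-carrying agent are shared across all clauses, there is no inter-gadget coupling to control; the price is a long explicit case analysis for the NE in the satisfiable case.
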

\begin{proof}
We modify the proof for Theorem 6.1 from Moscibroda, Schmid and Wattenhofer~\cite{moscibrodaTopologies2006} for \(k = 1\) by changing the distances but retaining the same general structure of the space.

We reduce from the NP-complete \textsc{3-SAT} variant, where each variable occurs in at most three clauses (with clauses with fewer than three literals being allowed)~\cite{toveySimplified1984}.

Let \(\mathcal{I}\) be such a \textsc{3-SAT} instance with variables \(\mathcal{X}\) and clauses \(\mathcal{C}\).
We use the same reduction function for both NE and GE and first show that the resulting instance has no GE, if there is no satisfying assignment of \(\mathcal{I}\) and then, that it has a NE otherwise. This suffices, as every NE is a GE.

We construct a metric space with nodes \(\{y, z, d\} \cup \{a_i, b_i, c_i\ |\ i \leq |\mathcal{C}|\} \cup \{t_u, f_u\ |\ u \leq |\mathcal{X}|\}\) with distances as shown in \Cref{fig:existenceNP}, such that \(d(c_i, t_u) = 1.6 - \epsilon\) for all positive literals \(x_u\) in \(C_i\) and \(d(c_i, f_u) = 1.6 - \epsilon\) for all negative literals \(\overline{x_u}\) in \(C_i\). Furthermore, let \(\alpha = 0.6\).

\begin{figure}[htb]
    \centering
    \resizebox{0.58\linewidth}{!}{
    \begin{tikzpicture}[trim left=-0.5cm,bend angle=60],
    \node[vertex]   (a)                                           {\(a_1\)};
    \node[vertex]   (b)        [right=3cm of a.center]                 {\(b_1\)}
        edge []                          node[auto,swap] {1.14} (a);
    \node[vertex]   (c)        [right=3cm of b.center]                 {\(c_1\)}
        edge []                          node[auto,swap] {1} (b);
        
    \node[vertex]   (a2)        [above=1cm of a.center]                             {\(a_2\)};
    \node[vertex]   (b2)        [right=3cm of a2.center]                 {\(b_2\)}
        edge []                          node[auto,swap] {1.14} (a2);
    \node[vertex]   (c2)        [right=3cm of b2.center]                 {\(c_2\)}
        edge []                          node[auto,swap] {1} (b2);
        
    \node[vertex]   (a3)        [above=1cm of a2.center]                             {\(a_3\)};
    \node[vertex]   (b3)        [right=3cm of a3.center]                 {\(b_3\)}
        edge []                          node[auto,swap] {1.14} (a3);
    \node[vertex]   (c3)        [right=3cm of b3.center]                 {\(c_3\)}
        edge []                          node[auto,swap] {1} (b3);
        
    \node[vertex]   (1)        [below left=3cm and 1.5cm of b.center]                 {\(y\)}
        edge [dotted] (b2)
        edge [dotted,bend left=10] (b3)
        edge [dotted] (c2)
        edge [dotted] (c3)
        edge []                          node[auto] {1.96} (a)
        edge [dotted,bend left] (a2)
        edge [dotted,bend left] (a3)
        edge []                          node[pos=0.2,fill=white] {\(2-\epsilon\)} (b)
        edge []                          node[pos=0.8,fill=white] {\(2+2\epsilon\)} (c);
    \node[vertex]   (2)        [right=3cm of 1.center]                 {\(z\)}
        edge [dotted] (a2)
        edge [dotted] (a3)
        edge [dotted] (b2)
        edge [dotted,bend right=10] (b3)
        edge []                          node[pos=0.8,swap,fill=white] {\(2+2\epsilon\)} (a)
        edge []                          node[pos=0.2,fill=white,swap] {2} (b)
        edge []                          node[auto,swap] {\(2+\epsilon\)} (c)
        edge [dotted,bend right] (c2)
        edge [dotted,bend right] (c3)
        edge []                          node[auto,swap] {\(1-2\epsilon\)} (1);
    
    \node[vertex]   (t1)        [above right=1 cm and 4cm of 2.center]                 {\(t_1\)}
        edge [ bend left=20]            node[auto] {\(1.6\)} (2)
        edge [ bend left=20]            node[auto,swap] {\(1.6 - \epsilon\)} (c);
    \node[vertex]   (f1)        [right=1.2cm of t1.center]                 {\(f_1\)}
        edge [dotted] (c2)
        edge [dotted] (c3)
        edge [ bend left=30]            node[auto] {\(1.6\)} (2)
        edge []                          node[auto] {\(0.85\)} (t1);
    
    \node[vertex]   (t2)        [right=1.2cm of f1.center]                 {\(t_2\)}
        edge [dotted] (c2)
        edge [ bend left=40]            node[auto] {\(1.6\)} (2);
    \node[vertex]   (f2)        [right=1.2cm of t2.center]                 {\(f_2\)}
        edge [ bend left=50]            node[auto] {\(1.6\)} (2)
        edge []                          node[auto] {\(0.85\)} (t2)
        edge [ bend right=10]           node[below,pos=0.8] {\(1.6 - \epsilon\)} (c);
    
    \node[vertex]   (t3)        [right=1.2cm of f2.center]                 {\(t_3\)}
        edge [dotted] (c3)
        edge [ bend left=60]            node[auto] {\(1.6\)} (2);
    \node[vertex]   (f3)        [right=1.2cm of t3.center]                 {\(f_3\)}
        edge [ bend left=70]            node[auto] {\(1.6\)} (2)
        edge []                          node[auto] {\(0.85\)} (t3)
        edge [ bend right=20]           node[above,pos=0.85] {\(1.6 - \epsilon\)} (c);
        
    \node[vertex]   (d)        [above right=5cm of c3.center]                 {\(d\)}
        edge [dotted] (a)
        edge [dotted,bend right=5] (b)
        edge [dotted] (c)
        edge [dotted,bend right=5] (a2)
        edge [dotted] (b2)
        edge [dotted] (c2)
        edge []                          node[auto] {\(1.3\)} (a3)
        edge []                          node[auto] {\(1.3\)} (b3)
        edge []                          node[auto] {\(1.3\)} (c3)
        edge []                          node[pos=0.6,right] {\(1.3\)} (t1)
        edge []                          node[pos=0.6,right] {\(1.3\)} (f1)
        edge []                          node[pos=0.6,right] {\(1.3\)} (t2)
        edge []                          node[pos=0.6,right] {\(1.3\)} (f2)
        edge []                          node[pos=0.6,right] {\(1.3\)} (t3)
        edge []                          node[pos=0.6,right] {\(1.3\)} (f3);

    \end{tikzpicture}}
    \caption{Metric for 3-SAT formula
    \((x_1 \vee \overline{x_2} \vee \overline{x_3}) \wedge
    (\overline{x_1} \vee x_2) \wedge
    (\overline{x_1} \vee x_3)\).
    Only the distances between \(c_i\) and \(t_u\)/\(f_u\) depend on the formula; all other distances given are the same for all \(i, u\) for \(a_i, b_i, c_i, t_u\) and \(f_u\). Dotted lines represent distances analogous to those between similar nodes. All other distances not given in this figure are as the upper bounds given by the space being metric.}
    \label{fig:existenceNP}
\end{figure}
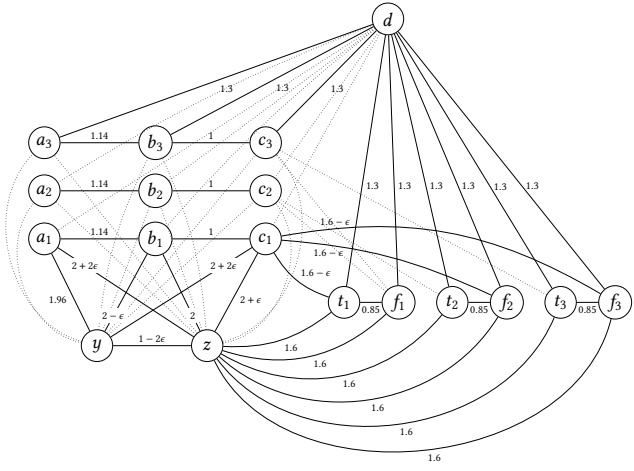

First, we note some general properties of NE and GE in this instance.
We show that for every \(u \leq |\mathcal{X}|\), agent \(z\) builds exactly one edge to either \(t_u\) or \(f_u\) in any GE.
Assume at first, that \(z\) does not build an edge to either of these nodes. In this case, agent \(z\)'s stretch to both nodes is at least \(\frac{d(z,c_i) + d(c_i,t_u)}{d(z,t_u)} = \frac{3.6}{1.6} > 1 + \alpha\) (each via some \(c_i\) that is at distance \(1.6 - \epsilon\) to it) and adding one of the edges would be an improving move.
Now, assume that \(z\) builds an edge to both nodes. Since every variable appears in at most three clauses, one of \(t_u\) or \(f_u\) has at most one \(c_i\) at distance \(1.6 - \epsilon\). If one of them has no \(c_i\) at distance \(1.6 - \epsilon\), removing the edge to it would only increase the stretch to it by \(\frac{d(z,f_u) + d(f_u,t_u)}{d(z,t_u)} - 1 = \frac{2.45}{1.6} - 1 < \alpha\). Let, w.l.o.g., \(t_u\) only have \(c_i\) at distance \(1.6 - \epsilon\). If \(z\) has a path of length at most \(3.2 - \epsilon\) to \(c_i\) that does not include \(t_u\), then agent~\(z\) could remove the edge to \(t_u\) and only the stretch to \(t_u\) would increase from \(1\) to \(\frac{d(z,f_u) + d(f_u,t_u)}{d(z,t_u)} = \frac{2.45}{1.6}\) which is an increase of less than \(\alpha\) while the stretch to \(c_i\) would not increase. If \(z\) does not have a path of length at most \(3.2 - \epsilon\) to \(c_i\), that does not include \(t_u\), then agent~\(z\) could swap the edge from \(t_u\) to \(c_i\) to increase the stretch to \(t_u\) from \(1\) to \(\frac{d(z,f_u) + d(f_u,t_u)}{d(z,t_u)} = \frac{2.45}{1.6}\) and decrease the stretch to \(c_i\) from at least \(\frac{3.2 - \epsilon}{d(z,c_i)} = \frac{3.2 - \epsilon}{2+\epsilon}\) to \(1\) which is an improvement in total.

Thus, we have that for every \(u \leq |\mathcal{X}|\), agent \(z\) builds exactly one edge to either \(t_u\) or \(f_u\) in any GE.

We also note that in any GE, for all \(i, u\), the edges \((b_i, c_i)\), \((c_i, b_i)\), \((b_i, a_i)\), \((y, z)\), \((z,y)\), \((t_u, f_u)\), \((f_u, t_u)\), \((a_i, d)\), \((b_i, d)\), \((c_i, d)\), \((t_u, d)\) and \((f_u, d)\) have to be built to enable greedy routing because there are, respectively, no other nodes closer to the second node than the first node itself.
Furthermore, every \(a_i\) also builds an edge to \(b_i\): If there was no edge to \(b_i\), agent \(a_i\) would need to build an edge to \(c_i\) to have a greedy path to \(b_i\). Then, swapping that edge to \(b_i\) would decrease the stretch to \(b_i\) by \(\frac{d(a_i,c_i) + d(c_i,b_i)}{d(a_i,b_i)} - 1 = \frac{3.14}{1.14} - 1 > 0\) while not changing the stretch to \(c_i\) and thus be an improving move.

Agent \(y\) also builds an edge to every \(a_i\) because the only other possible greedy paths via \(b_i\) and \(d\) have stretches of \(\frac{d(y,b_i) + d(b_i, a_i)}{d(y,a_i)} = \frac{3.14 - \epsilon}{1.96} > 1 + \alpha\) and \(\frac{d(y,d) + d(d, a_i)}{d(y,a_i)} = \frac{4.56}{1.96} > 1 +\alpha\). Agent \(y\) does not build an edge to both \(b_i\) and \(c_i\), for any \(i\), because if it would do so, it could simply remove the edge to \(c_i\) and increase stretches by at most \(\frac{d(y,b_i) + d(b_i, c_i)}{d(y,c_i)} - 1 = \frac{3-\epsilon}{2+2\epsilon} - 1 < \alpha\), because the shortest paths to any \(t_u\) and \(f_u\) use \(z\) in either case. In fact, agent \(y\) does not build an edge to any \(c_i\), because if it would do so, it could swap its edge from \(c_i\) to \(b_i\), changing the sum of stretches by at most \[\frac{d(y,b_i) + d(b_i, x_i)}{d(y,x_i)} - \frac{d(y,c_i) + d(c_i, b_i)}{d(y,b_i)} = \frac{3-\epsilon}{2+2\epsilon} - \frac{3+2\epsilon}{2-\epsilon} < 0.\]

Now, let there be no satisfying assignment for \(\mathcal{I}\). We show that there is no GE in our instance. Assume for the sake of contradiction that there is a GE.
Since for every \(u \leq |\mathcal{X}|\), agent \(z\) builds exactly one edge to either \(t_u\) or \(f_u\) in any GE and there is no satisfying assignment, there has to be some \(c_i\), where \(z\) does not have an edge to a \(t_u\) or \(f_u\) at distance \(1.6 - \epsilon\). Let \(i\) be such that this is true for \(c_i\). 
For there to be a greedy path to \(c_i\), agent \(z\) needs to build an edge to \(b_i\), \(c_i\) or \(d\). In fact, if \(z\) does not build an edge to \(b_i\) or \(c_i\) but only to \(d\), the stretch to \(c_i\) would be \(\frac{d(z,d) + d(d,c_i)}{d(z,c_i)} = \frac{4.1}{2+\epsilon} > 1 + \alpha\). As such, agent \(z\) builds an edge to \(b_i\) or \(c_i\). Agent \(z\) does not build both edges because if it did, it could remove its edge to \(c_i\) and increase stretches by \[\frac{d(z,b_i) + d(b_i,c_i)}{d(z,c_i)} - 1 = \frac{3}{2+\epsilon} - 1 < \alpha.\] Agent \(z\) also does not build an edge to \(a_i\), because the stretch to \(a_i\) via \(y\) is already \(\frac{d(z,y) + d(y,a_i)}{d(z,a_i)} = \frac{2.96-2\epsilon}{2+2\epsilon} < 1 + \alpha\) and the edge to \(a_i\) would not allow for any shorter greedy paths to \(b_i\) or \(c_i\), because \(z\) needs an edge to \(b_i\) or \(c_i\) either way.

When looking at the strategies of \(y\) and \(z\) pertaining to \(a_i, b_i\) and \(c_i\), we again only get the four cases illustrated in \Cref{fig:candidate}. 
We examine all of these possible cases and show that they cannot be part of a GE.

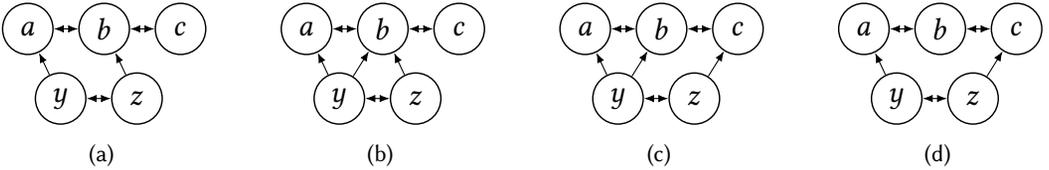
\begin{figure}[h]
    \centering
    \begin{subfigure}[b]{0.2\textwidth}
         \centering
         
        \resizebox{\linewidth}{!}{
        \begin{tikzpicture}[bend angle=90],
        \node[vertex]   (a)                                     {\(a\)};
        \node[vertex]   (b)        [right=1cm of a.center]                 {\(b\)}
            edge [<->]                          (a);
        \node[vertex]   (c)        [right=1cm of b.center]                 {\(c\)}
            edge [<->]                          (b);
            
        \node[vertex]   (1)        [below left=1cm and 0.5cm of b.center]                 {\(y\)}
            edge [->]                          (a);
        \node[vertex]   (2)        [right=1cm of 1.center]                 {\(z\)}
            edge [->]                          (b)
            edge [<->]                          (1);
        \end{tikzpicture}}
        \caption{}
        \label{fig:candidatea}
    \end{subfigure}
    \hfill
    \begin{subfigure}[b]{0.2\textwidth}
         \centering
        \resizebox{\linewidth}{!}{
        \begin{tikzpicture}[bend angle=90],
        \node[vertex]   (a)                                     {\(a\)};
        \node[vertex]   (b)        [right=1cm of a.center]                 {\(b\)}
            edge [<->]                          (a);
        \node[vertex]   (c)        [right=1cm of b.center]                 {\(c\)}
            edge [<->]                          (b);
            
        \node[vertex]   (1)        [below left=1cm and 0.5cm of b.center]                 {\(y\)}
            edge [->]                          (a)
            edge [->]                          (b);
        \node[vertex]   (2)        [right=1cm of 1.center]                 {\(z\)}
            edge [->]                          (b)
            edge [<->]                          (1);
        \end{tikzpicture}}
        \caption{}
        \label{fig:candidatec}
    \end{subfigure}
    \hfill
    \begin{subfigure}[b]{0.2\textwidth}
         \centering
        \resizebox{\linewidth}{!}{
        \begin{tikzpicture}[bend angle=90],
        \node[vertex]   (a)                                     {\(a\)};
        \node[vertex]   (b)        [right=1cm of a.center]                 {\(b\)}
            edge [<->]                          (a);
        \node[vertex]   (c)        [right=1cm of b.center]                 {\(c\)}
            edge [<->]                          (b);
            
        \node[vertex]   (1)        [below left=1cm and 0.5cm of b.center]                 {\(y\)}
            edge [->]                          (a)
            edge [->]                          (b);
        \node[vertex]   (2)        [right=1cm of 1.center]                 {\(z\)}
            edge [->]                          (c)
            edge [<->]                          (1);
        \end{tikzpicture}}
        \caption{}
        \label{fig:candidated}
    \end{subfigure}
    \hfill
    \begin{subfigure}[b]{0.2\textwidth}
         \centering
        \resizebox{\linewidth}{!}{
        \begin{tikzpicture}[bend angle=90],
        \node[vertex]   (a)                                     {\(a\)};
        \node[vertex]   (b)        [right=1cm of a.center]                 {\(b\)}
            edge [<->]                          (a);
        \node[vertex]   (c)        [right=1cm of b.center]                 {\(c\)}
            edge [<->]                          (b);
            
        \node[vertex]   (1)        [below left=1cm and 0.5cm of b.center]                 {\(y\)}
            edge [->]                          (a);
        \node[vertex]   (2)        [right=1cm of 1.center]                 {\(z\)}
            edge [->]                          (c)
            edge [<->]                          (1);
        \end{tikzpicture}}
        \caption{}
        \label{fig:candidateb}
    \end{subfigure}
    \caption{Candidates for equilibrium strategies for vertices \(y\) and \(z\)}
    \label{fig:candidate}
\end{figure}

\textbf{Case 1:} In \Cref{fig:candidatea}, if \(y\) adds an edge to \(b\), the stretch to \(b\) decreases from \(\frac{d(y,a) + d(a,b)}{d(y,b)} = \frac{3.1}{2 - \epsilon}\) to \(1\) and the stretch to \(c\) from \(\frac{d(y,z) + d(z,b) + d(b,c)}{d(y,c)} = \frac{4-2\epsilon}{2+2\epsilon}\) to \(\frac{d(y,b) + d(b,c)}{d(y,c)} = \frac{3-\epsilon}{2+2\epsilon}\), which is a total improvement of more than \(\alpha\).

\textbf{Case 2:} In \Cref{fig:candidatec}, agent \(z\) can swap its edge from \(b\) to \(c\), changing the sum of stretches by \(\frac{d(z,y) + d(y,b)}{d(z,b)} - \frac{d(z,b) + d(b,c)}{d(z,c)} = \frac{3-3\epsilon}{2} - \frac{3}{2+\epsilon} < 0\).

\textbf{Case 3:} In \Cref{fig:candidated}, if \(y\) removes its edge to \(b\), the stretch to \(c\) stays unchanged at \(\frac{d(y,z) + d(z,c)}{d(y,c)} = \frac{d(y,b) + d(b,c)}{d(y,c)}\), while the stretch to \(b\) increases from \(1\) to \(\frac{d(y,a) + d(a,b)}{d(y,b)} = \frac{3.1}{2  - \epsilon}\) which is an increase of less than \(\alpha\).

\textbf{Case 4:} In \Cref{fig:candidateb}, agent \(z\) can swap its edge from \(c\) to \(b\), changing the sum of stretches by \(\frac{d(z,b) + d(b,c)}{d(z,c)} - \frac{d(z,c) + d(c,b)}{d(z,b)} = \frac{3}{2 + \epsilon} - \frac{3+\epsilon}{2} < 0\).

Thus, there can be no GE in this instance if there is no satisfying assignment for \(\mathcal{I}\).

Now, let \(A_{\mathcal{I}}\) be a assignment satisfying \(\mathcal{I}\). We construct a NE, as illustrated in \Cref{fig:existenceNPEquilibrium}.
\begin{itemize}
\item Agent \(y\) builds edges to \(z\) and every \(a_i\) and \(b_i\).

\item Agent \(z\) builds edges to \(y\) and every \(t_u\) where \(x_u\) is true in \(A_{\mathcal{I}}\) and every \(f_u\) where \(x_u\) is false in \(A_{\mathcal{I}}\).

\item Agent \(d\) builds edges to every  \(a_i, b_i, c_i, t_u\) and \(f_u\).

\item Every \(a_i\) build edges to \(d, y\) and its corresponding \(b_i\).

\item Every \(b_i\) build edges to \(d\) and its corresponding \(a_i\) and \(c_i\).

\item Every \(c_i\) build edges to \(d, z\), its corresponding \(b_i\) and every \(t_u\) and \(f_u\) at distance \(1.6 - \epsilon\).

\item Every \(t_u\) and \(f_u\) builds edges to \(d, z\), its corresponding \(f_u\) or \(t_u\) and every \(c_i\) at distance \(1.6 - \epsilon\).
\end{itemize}

\begin{figure}[htb]
    \centering
    
    \resizebox{0.58\linewidth}{!}{
    \begin{tikzpicture}[trim left=-0.5cm,bend angle=60],
    \node[vertex]   (a)                                           {\(a_1\)};
    \node[vertex]   (b)        [right=3cm of a.center]                 {\(b_1\)}
        edge [<->] (a);
    \node[vertex]   (c)        [right=3cm of b.center]                 {\(c_1\)}
        edge [<->] (b);
        
    \node[vertex]   (a2)        [above=1cm of a.center]                             {\(a_2\)};
    \node[vertex]   (b2)        [right=3cm of a2.center]                 {\(b_2\)}
        edge [<->] (a2);
    \node[vertex]   (c2)        [right=3cm of b2.center]                 {\(c_2\)}
        edge [<->] (b2);
        
    \node[vertex]   (a3)        [above=1cm of a2.center]                             {\(a_3\)};
    \node[vertex]   (b3)        [right=3cm of a3.center]                 {\(b_3\)}
        edge [<->] (a3);
    \node[vertex]   (c3)        [right=3cm of b3.center]                 {\(c_3\)}
        edge [<->] (b3);
        
    \node[vertex]   (1)        [below left=3cm and 1.5cm of b.center]                 {\(y\)}
        edge [<->] (a)
        edge [<->,dotted,bend left] (a2)
        edge [<->,dotted,bend left] (a3)
        edge [->] (b)
        edge [->,dotted,bend left=5] (b2)
        edge [->,dotted,bend left=10] (b3);
    \node[vertex]   (2)        [right=3cm of 1.center]                 {\(z\)}
        edge [<->]                  (1)
        edge [<-]                   (c)
        edge [<-,dotted,bend left=5] (c2)
        edge [<-,dotted,bend left=10] (c3);
    
    \node[vertex]   (t1)        [above right=1 cm and 4cm of 2.center]                 {\(t_1\)}
        edge [<->, bend left=20] (c)
        edge [->, bend left=20] (2);
    \node[vertex]   (f1)        [right=1cm of t1.center]                 {\(f_1\)}
        edge [<->,dotted] (c2)
        edge [<->,dotted] (c3)
        edge [<->, bend left=30] (2)
        edge [<->]               (t1);
    
    \node[vertex]   (t2)        [right=1.5cm of f1.center]                 {\(t_2\)}
        edge [<->,dotted] (c2)
        edge [<->, bend left=40]  (2);
    \node[vertex]   (f2)        [right=1cm of t2.center]                 {\(f_2\)}
        edge [<->]                (t2)
        edge [->, bend left=50]   (2)
        edge [<->, bend right=10] (c);
    
    \node[vertex]   (t3)        [right=1.5cm of f2.center]                 {\(t_3\)}
        edge [<->,dotted] (c3)
        edge [->, bend left=60]  (2);
    \node[vertex]   (f3)        [right=1cm of t3.center]                 {\(f_3\)}
        edge [<->, bend left=70]  (2)
        edge [<->]                (t3)
        edge [<->, bend right=20] (c);
        
    \node[vertex]   (d)        [above right=5cm of c3.center]                 {\(d\)}
        edge [<->,dotted] (a)
        edge [<->,dotted,bend right=5] (b)
        edge [<->,dotted] (c)
        edge [<->,dotted,bend right=5] (a2)
        edge [<->,dotted] (b2)
        edge [<->,dotted] (c2)
        edge [<->]                          (a3)
        edge [<->]                          (b3)
        edge [<->]                          (c3)
        edge [<->]                          (t1)
        edge [<->]                          (f1)
        edge [<->]                          (t2)
        edge [<->]                          (f2)
        edge [<->]                          (t3)
        edge [<->]                          (f3);

    \end{tikzpicture}}
    \caption{NE for 3-SAT formula
    \((x_1 \vee \overline{x_2} \vee \overline{x_3}) \wedge
    (\overline{x_1} \vee x_2) \wedge
    (\overline{x_1} \vee x_3)\), assigning \(x_2\) as true and \(x_1\) and \(x_3\) as false. Dotted lines also represent edges.}
    \label{fig:existenceNPEquilibrium}
\end{figure}
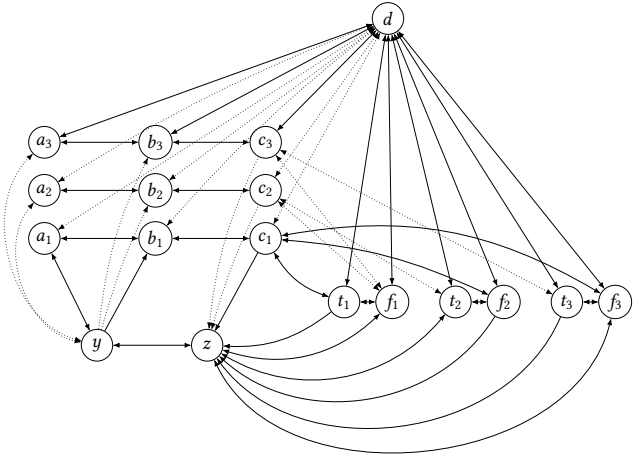

We show that no agent has an incentive to change their strategy:

\textbf{Agent \(y\):} As noted above, the edges to \(z\) and any \(a_i\) cannot be removed and the edges to any \(c_i\) cannot be added. Thus, the only possible improving changes are dropping the edge to some \(b_i\), adding an edge to \(d\) or adding an edge to some \(t_u\) or \(f_u\) that \(z\) does not have an edge to. Dropping an edge to a \(b_i\) would increase stretches by \(\frac{d(y,a) + d(a,b_i)}{d(y,b_i)} - 1 + \frac{d(y,a) + d(a,b_i) + d(b_i,c_i)}{d(y,c_i)} - \frac{d(y,b_i) + d(b_i,c_i)}{d(y,c_i)} = \frac{3.1}{2-\epsilon} - 1 + \frac{4.1}{2 + 2\epsilon} - \frac{3-\epsilon}{2 + 2\epsilon} < \alpha\). Adding an edge to a \(t_u\) or \(f_u\) that \(z\) does not have an edge to would decrease stretches by \(\frac{d(y,z) + d(z,f_u) + d(f_u,t_u)}{d(y,t_u)} - 1 = \frac{3.45 - 2\epsilon}{2.6 - 2\epsilon} - 1 < \alpha\). Since both of these effects are independent, doing both would not be an improving move either. Adding an edge to \(d\) would not decrease the stretch to any node even in combination with another change.

\textbf{Agent \(z\):} As noted above, the edge to \(y\) cannot be removed. For every \(c_i\), \(z\) has an edge to a \(t_u\) or \(f_u\) at distance \(1.6 - \epsilon\) because \(A_{\mathcal{I}}\) is a satisfying assignment. Since \(z\) has to have an edge to exactly one of \(t_u\) and \(f_u\) for each \(u\), adding or removing edges to any \(t_u\) or \(f_u\) cannot be an improving move either. Swapping an edge between a corresponding \(t_u\) and \(f_u\) only swaps their stretches and can not decrease any other stretches. The stretch to \(d\) is already \(1\) and an edge to it cannot be part of any other shorter greedy paths either. Thus, the only possible improving changes are adding an edge to some \(a_i, b_i\) or \(c_i\). Since all of these moves just add edges, if none of these is an improving move individually, a combination of them cannot be an improving move either. Adding an edge to some \(a_i\) would decrease stretches by \(\frac{d(z,y) + d(y,a_i)}{d(z,a_i)} - 1 = \frac{2.96-2\epsilon}{2 + 2\epsilon} - 1 < \alpha\). Adding an edge to some \(b_i\) would decrease stretches by \(\frac{d(z,y) + d(y,b)}{d(z,b)} - 1 + \frac{d(z, t_u) + d(t_u, c_i)}{d(z, c_i)} - \frac{d(z, b_i) + d(b_i, c_i)}{d(z, c_i)} = \frac{3-3\epsilon}{2} - 1 + \frac{3.2 - \epsilon}{2 + \epsilon} - \frac{3}{2 + \epsilon} < \alpha\). Adding an edge to some \(c_i\) would decrease stretches by \(\frac{d(z, t_u) + d(t_u, c_i)}{d(z, c_i)} - 1 = \frac{3.2 - \epsilon}{2 + \epsilon} - 1 < \alpha\).

\textbf{Agent \(d\):} Dropping any edge would increase the stretch to that node by more than \(\alpha\) because the distance between any two nodes and thus the increase in distance if \(d\) would drop an edge is more than \(1.3 \alpha = 0.78\). The stretch to \(y\) and \(z\) is already \(1\) via another node and an edge to them cannot be part of any other shorter greedy paths either. Thus, building any additional edges cannot improve costs either.

\textbf{Agent \(a_i\):} As noted above, the edges to \(d\) and any \(b_i\) cannot be removed and the stretch to any node other than \(z\) is already \(1\) and an edge to them cannot be part of any other shorter greedy paths either. Thus, the only possible improving changes are removing the edge to \(y\) and adding one to \(z\). Doing both would change stretches by \(\frac{d(a_i,y) + d(y,z)}{d(a_i,z)} - \frac{d(a_i,z) + d(z,y)}{d(a_i,y)} = \frac{2.96 - 2\epsilon}{2 + 2\epsilon} - \frac{3}{1.96} < 0\). Dropping the edge to \(y\) would remove the only greedy path to \(y\). Adding the edge to \(z\) would decrease only the stretch to it by \(\frac{d(a_i,y) + d(y,z)}{d(a_i,z)} - 1 = \frac{2.96 - 2\epsilon}{2 + 2\epsilon} - 1 < \alpha\).

\textbf{Agent \(b_i\):} As noted above, the edges to \(d\) and any \(a_i\) and \(c_i\) cannot be removed and the stretch to any node other than \(y\) and \(z\) is already \(1\) and an edge to them cannot be part of any other shorter greedy paths either. Thus, the only possible improving changes are adding edges to \(y\) or \(z\). Adding the edge to \(y\) would decrease stretches by \(\frac{d(b_i,a_i) + d(a_i,y)}{d(b_i,y)} - 1 + \frac{d(b_i,c_i) + d(c_i,z)}{d(b_i,z)} - \frac{d(b_i,y) + d(y,z)}{d(b_i,z)} = \frac{3.1}{2 - \epsilon} - 1 + \frac{3 + \epsilon}{2} - \frac{3 - 3\epsilon}{2} < \alpha\). Adding the edge to \(z\) would decrease stretches by \(\frac{d(b_i,a_i) + d(a_i,y)}{d(b_i,y)} - \frac{d(b_i,z) + d(z,y)}{d(b_i,y)} + \frac{d(b_i,c_i) + d(c_i,z)}{d(b_i,z)} - 1 = \frac{3.1}{2 - \epsilon} - \frac{3 - 2\epsilon}{2 - \epsilon} + \frac{3 + \epsilon}{2} - 1 < \alpha\). Since both of these moves just add edges and neither of them is an improving move individually, a combination of them cannot be an improving move either.

\textbf{Agent \(c_i\):} As noted above, the edge to \(b_i\) and \(d\) cannot be removed and the stretch to any node other than \(y\) is already \(1\) and an edge to them cannot be part of any other shorter greedy paths either. Furthermore, if \(c_i\) were to remove any of the edges to some \(t_u\) or \(f_u\), the lowest possible stretch to them even with additional edges elsewhere, would be \(\frac{d(c_i, d) + d(d,t_u)}{d(c_i, t_u)} = \frac{2.6}{1.6-\epsilon} > 1 + \alpha\) and thus including these edges would always be a better response. Thus, the only possible improving changes are dropping the edge to \(z\) and adding one to \(y\). Doing both would change stretches by \(\frac{d(c_i,z) + d(z,y)}{d(c_i,y)} - \frac{d(c_i,y) + d(y,z)}{d(c_i,z)} = \frac{3 - \epsilon}{2 + 2\epsilon} - \frac{3}{2 + \epsilon} < 0\). Adding the edge to \(y\) would decrease stretches by \(\frac{d(c_i,z) + d(z,y)}{d(c_i,y)} - 1 = \frac{3 - \epsilon}{2 + 2\epsilon} - 1 < \alpha\). Dropping the edge to \(z\) would increase stretches by \(\frac{d(c_i,t_u) + d(t_u,z)}{d(c_i,z)} - 1 + \frac{d(c_i,b_i) + d(b_i, a_i) + d(a_i,y)}{d(c_i, y)} - \frac{d(c_i,z) + d(z,y)}{d(c_i,y)} = \frac{3.2 - \epsilon}{2 + \epsilon} - 1 + \frac{4.1}{2 + 2\epsilon} - \frac{3 - \epsilon}{2 + 2\epsilon} > \alpha\).

\textbf{Agents \(t_u\) and \(f_u\):} As noted above, the edge to \(f_u\)/\(t_u\) and \(d\) cannot be removed and the stretch to any node is already \(1\) and an edge to them cannot be part of any other shorter greedy paths either. Furthermore, analogously to the edges from some \(c_i\) to some \(t_{u'}\) or \(f_{u'}\), the edges to any \(c_i\) cannot be removed either. Thus, the only possible improving change is dropping the edge to \(z\). Doing so would remove their only greedy path to \(z\) however.

Thus, there exists a NE if and only if there is a satisfying assignment for \(\mathcal{I}\) and there also exists a GE if and only if there is a satisfying assignment for \(\mathcal{I}\). Hence, \textsc{3-SAT} is reducible to our problems in polynomial time. Thus, they are NP-hard.
\end{proof}
\subsubsection*{\textbf{Approximate Equilibria}} We show the existence of very general approximate equilibria, where the approximation factor does not depend linearly on \(n\) or on the distances given by the metric but on the edge price parameter $\alpha$. Depending on the setting, this can be seen as a positive result. 

\begin{theorem}
\label{the:general-const-complete}
    Every instance of our game has a \((\alpha+1)\)-approximate NE.
\end{theorem}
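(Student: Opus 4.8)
The plan is to exhibit a single, instance-independent candidate and verify it directly: the complete graph $\s^K$, i.e.\ the strategy profile where every agent $u$ plays $S_u = \points \setminus \{u\}$.

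First I would pin down $c_u(\s^K)$ exactly. Observe that for any $v \neq u$ the single edge $(u,v)$ is always a greedy path from $u$ to $v$, since $d_\M(u,v) > 0 = d_\M(v,v)$; hence every ordered pair has stretch exactly $1$ in $G(\s^K)$. This gives $\text{stretchcost}_u(\s^K) = n-1$ and $\text{edgecost}_u(\s^K) = \alpha(n-1)$, so $c_u(\s^K) = (\alpha+1)(n-1)$.

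Second, I would prove a universal lower bound: for any deviation $S'_u$, writing $\s' = (S'_u,\s^K_{-u})$, we have $c_u(\s') \ge n-1$. Indeed, each of the $n-1$ summands of $\text{stretchcost}_u(\s')$ is either $\gd{G(\s')}(u,v)/d_\M(u,v)$, which is at least $1$ because any greedy path $x_1,\dots,x_j$ from $u$ to $v$ has length $\sum_{i=1}^{j-1} d_\M(x_i,x_{i+1}) \ge d_\M(u,v)$ by the triangle inequality, or the (arbitrarily large) penalty $Z \ge 1$; and $\text{edgecost}_u(\s') \ge 0$. Combining the two bounds, for every agent $u$ and every deviation $S'_u$,
\[
\frac{c_u(\s^K)}{c_u(S'_u,\s^K_{-u})} \le \frac{(\alpha+1)(n-1)}{n-1} = \alpha+1 ,
\]
so no agent can lower its cost below $\tfrac{1}{\alpha+1}c_u(\s^K)$ by a unilateral deviation, and hence $G(\s^K)$ is an $(\alpha+1)$-approximate NE.

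I do not anticipate any real obstacle: the argument is a one-line cost computation plus a one-line lower bound. The only subtlety worth flagging is that the lower bound $c_u(\cdot)\ge n-1$ must hold even for deviations that destroy greedy routing from $u$ (so that the ratio is bounded against \emph{all} deviations, not just cost-finite ones), which is precisely why the bound is argued via $Z \ge 1$ rather than by assuming the best response keeps greedy routing enabled.
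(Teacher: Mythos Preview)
Your proposal is correct and follows exactly the paper's approach: exhibit the complete graph, compute $c_u(\s^K)=(\alpha+1)(n-1)$, and lower-bound any deviation's cost by $n-1$ via stretch $\ge 1$. Your write-up is in fact slightly more careful than the paper's, since you explicitly justify the lower bound also for deviations that destroy greedy routing (via $Z\ge 1$) and invoke the triangle inequality for the greedy-path length, but the argument is otherwise identical.
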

\begin{proof}
Consider the complete graph on \(\mathcal{P}\). The cost of every agent is \((\alpha+1)(n-1)\) because every agent builds \(n-1\) edges and the stretch to every other node is \(1\). Since the stretch to any other agent cannot be less than \(1\), the best response of an agent has costs of at least \(n-1\). Thus, the complete graph constitutes a \((\alpha+1)\)-approximate NE.
\end{proof}

\section{Conclusion}
We give the first game-theoretic network formation model that focuses on the creation of networks where all-pairs greedy routing is enabled and where the agents optimize their connection quality, i.e., their stretches. 

We believe that this is only the first step to further models that guarantee even more favorable properties to hold in the created networks. For example, a guaranteed maximum stretch and robustness aspects like coping with edge or node failures. Another avenue for further research is to consider different edge price functions. Our goal was to first provide insights for the simplest setting, i.e., where every edge has a uniform cost of $\alpha$, that will then be useful for investigating more complex variants. And this actually works: based on the presented results we have preliminary results that hold for the version where edge costs are proportional to the edge length, as considered in~\cite{biloGeometric2019}.  Finally, it is interesting to explore other techniques for constructing approximate Nash equilibria in the Euclidean setting.

\bibliographystyle{ACM-Reference-Format}
\bibliography{gncg_references}


\begin{thebibliography}{43}


\ifx \showCODEN    \undefined \def \showCODEN     #1{\unskip}     \fi
\ifx \showDOI      \undefined \def \showDOI       #1{#1}\fi
\ifx \showISBNx    \undefined \def \showISBNx     #1{\unskip}     \fi
\ifx \showISBNxiii \undefined \def \showISBNxiii  #1{\unskip}     \fi
\ifx \showISSN     \undefined \def \showISSN      #1{\unskip}     \fi
\ifx \showLCCN     \undefined \def \showLCCN      #1{\unskip}     \fi
\ifx \shownote     \undefined \def \shownote      #1{#1}          \fi
\ifx \showarticletitle \undefined \def \showarticletitle #1{#1}   \fi
\ifx \showURL      \undefined \def \showURL       {\relax}        \fi
\providecommand\bibfield[2]{#2}
\providecommand\bibinfo[2]{#2}
\providecommand\natexlab[1]{#1}
\providecommand\showeprint[2][]{arXiv:#2}

\bibitem[Abam and Qafari(2019)]%
        {abamGeometric2019}
\bibfield{author}{\bibinfo{person}{Mohammad Abam} {and} \bibinfo{person}{Mahnaz Qafari}.} \bibinfo{year}{2019}\natexlab{}.
\newblock \showarticletitle{Geometric Spanner Games}.
\newblock \bibinfo{journal}{\emph{Theoretical Computer Science}}  \bibinfo{volume}{795} (\bibinfo{year}{2019}).
\newblock
\urldef\tempurl%
\url{https://doi.org/10.1016/j.tcs.2019.07.020}
\showDOI{\tempurl}


\bibitem[Adamaszek et~al\mbox{.}(2018)]%
        {adamaszekNew2018}
\bibfield{author}{\bibinfo{person}{Anna Adamaszek}, \bibinfo{person}{Matthias Mnich}, {and} \bibinfo{person}{Katarzyna Paluch}.} \bibinfo{year}{2018}\natexlab{}.
\newblock \showarticletitle{New {{Approximation Algorithms}} for (1,2)-{{TSP}}}. In \bibinfo{booktitle}{\emph{ICALP'18}}. \bibinfo{pages}{9:1--9:14}.
\newblock
\urldef\tempurl%
\url{https://doi.org/10.4230/LIPIcs.ICALP.2018.9}
\showDOI{\tempurl}


\bibitem[Akitaya et~al\mbox{.}(2022)]%
        {akitayaSpanning2022}
\bibfield{author}{\bibinfo{person}{Hugo~A. Akitaya}, \bibinfo{person}{Ahmad Biniaz}, {and} \bibinfo{person}{Prosenjit Bose}.} \bibinfo{year}{2022}\natexlab{}.
\newblock \showarticletitle{On the Spanning and Routing Ratios of the Directed {{$\Theta$6-graph}}}.
\newblock \bibinfo{journal}{\emph{Computational Geometry}}  \bibinfo{volume}{105--106} (\bibinfo{year}{2022}), \bibinfo{pages}{101881}.
\newblock
\urldef\tempurl%
\url{https://doi.org/10.1016/j.comgeo.2022.101881}
\showDOI{\tempurl}


\bibitem[Bil{\`{o}} et~al\mbox{.}(2023)]%
        {tempNCG}
\bibfield{author}{\bibinfo{person}{Davide Bil{\`{o}}}, \bibinfo{person}{Sarel Cohen}, \bibinfo{person}{Tobias Friedrich}, \bibinfo{person}{Hans Gawendowicz}, \bibinfo{person}{Nicolas Klodt}, \bibinfo{person}{Pascal Lenzner}, {and} \bibinfo{person}{George Skretas}.} \bibinfo{year}{2023}\natexlab{}.
\newblock \showarticletitle{Temporal Network Creation Games}. In \bibinfo{booktitle}{\emph{{IJCAI} 2023}}. \bibinfo{pages}{2511--2519}.
\newblock
\urldef\tempurl%
\url{https://doi.org/10.24963/IJCAI.2023/279}
\showDOI{\tempurl}


\bibitem[Bil{\`{o}} et~al\mbox{.}(2021)]%
        {BiloFLLM21}
\bibfield{author}{\bibinfo{person}{Davide Bil{\`{o}}}, \bibinfo{person}{Tobias Friedrich}, \bibinfo{person}{Pascal Lenzner}, \bibinfo{person}{Stefanie Lowski}, {and} \bibinfo{person}{Anna Melnichenko}.} \bibinfo{year}{2021}\natexlab{}.
\newblock \showarticletitle{Selfish Creation of Social Networks}. In \bibinfo{booktitle}{\emph{AAAI'21}}. \bibinfo{pages}{5185--5193}.
\newblock
\urldef\tempurl%
\url{https://doi.org/10.1609/AAAI.V35I6.16655}
\showDOI{\tempurl}


\bibitem[Bil{\`o} et~al\mbox{.}(2019)]%
        {biloGeometric2019}
\bibfield{author}{\bibinfo{person}{Davide Bil{\`o}}, \bibinfo{person}{Tobias Friedrich}, \bibinfo{person}{Pascal Lenzner}, {and} \bibinfo{person}{Anna Melnichenko}.} \bibinfo{year}{2019}\natexlab{}.
\newblock \showarticletitle{Geometric {{Network Creation Games}}}. In \bibinfo{booktitle}{\emph{{{SPAA}} '19}}. \bibinfo{pages}{323--332}.
\newblock
\urldef\tempurl%
\url{https://doi.org/10.1145/3323165.3323199}
\showDOI{\tempurl}


\bibitem[Bil{\`{o}} et~al\mbox{.}(2016)]%
        {BiloGLP16}
\bibfield{author}{\bibinfo{person}{Davide Bil{\`{o}}}, \bibinfo{person}{Luciano Gual{\`{a}}}, \bibinfo{person}{Stefano Leucci}, {and} \bibinfo{person}{Guido Proietti}.} \bibinfo{year}{2016}\natexlab{}.
\newblock \showarticletitle{Locality-Based Network Creation Games}.
\newblock \bibinfo{journal}{\emph{{ACM} Trans. Parallel Comput.}} \bibinfo{volume}{3}, \bibinfo{number}{1} (\bibinfo{year}{2016}), \bibinfo{pages}{6:1--6:26}.
\newblock
\urldef\tempurl%
\url{https://doi.org/10.1145/2938426}
\showDOI{\tempurl}


\bibitem[Bl{\"{a}}sius et~al\mbox{.}(2020)]%
        {Blasius0KK20}
\bibfield{author}{\bibinfo{person}{Thomas Bl{\"{a}}sius}, \bibinfo{person}{Tobias Friedrich}, \bibinfo{person}{Maximilian Katzmann}, {and} \bibinfo{person}{Anton Krohmer}.} \bibinfo{year}{2020}\natexlab{}.
\newblock \showarticletitle{Hyperbolic Embeddings for Near-Optimal Greedy Routing}.
\newblock \bibinfo{journal}{\emph{{ACM} Journal on Experimental Algorithmics}}  \bibinfo{volume}{25} (\bibinfo{year}{2020}), \bibinfo{pages}{1--18}.
\newblock
\urldef\tempurl%
\url{https://doi.org/10.1145/3381751}
\showDOI{\tempurl}


\bibitem[Bogun{\'a} et~al\mbox{.}(2010)]%
        {boguna2010sustaining}
\bibfield{author}{\bibinfo{person}{Mari{\'a}n Bogun{\'a}}, \bibinfo{person}{Fragkiskos Papadopoulos}, {and} \bibinfo{person}{Dmitri Krioukov}.} \bibinfo{year}{2010}\natexlab{}.
\newblock \showarticletitle{Sustaining the internet with hyperbolic mapping}.
\newblock \bibinfo{journal}{\emph{Nature communications}} \bibinfo{volume}{1}, \bibinfo{number}{1} (\bibinfo{year}{2010}), \bibinfo{pages}{62}.
\newblock


\bibitem[Bose et~al\mbox{.}(2016)]%
        {boseTight2016}
\bibfield{author}{\bibinfo{person}{Prosenjit Bose}, \bibinfo{person}{Jean-Lou De~Carufel}, \bibinfo{person}{Pat Morin}, \bibinfo{person}{Andr{\'e} {van Renssen}}, {and} \bibinfo{person}{Sander Verdonschot}.} \bibinfo{year}{2016}\natexlab{}.
\newblock \showarticletitle{Towards Tight Bounds on Theta-Graphs: {{More}} Is Not Always Better}.
\newblock \bibinfo{journal}{\emph{Theoretical Computer Science}}  \bibinfo{volume}{616} (\bibinfo{year}{2016}), \bibinfo{pages}{70--93}.
\newblock
\urldef\tempurl%
\url{https://doi.org/10.1016/j.tcs.2015.12.017}
\showDOI{\tempurl}


\bibitem[Bullinger et~al\mbox{.}(2022)]%
        {BullingerLM22}
\bibfield{author}{\bibinfo{person}{Martin Bullinger}, \bibinfo{person}{Pascal Lenzner}, {and} \bibinfo{person}{Anna Melnichenko}.} \bibinfo{year}{2022}\natexlab{}.
\newblock \showarticletitle{Network Creation with Homophilic Agents}. In \bibinfo{booktitle}{\emph{{IJCAI} 2022}}. \bibinfo{pages}{151--157}.
\newblock
\urldef\tempurl%
\url{https://doi.org/10.24963/IJCAI.2022/22}
\showDOI{\tempurl}


\bibitem[Chauhan et~al\mbox{.}(2017)]%
        {CLMM17}
\bibfield{author}{\bibinfo{person}{Ankit Chauhan}, \bibinfo{person}{Pascal Lenzner}, \bibinfo{person}{Anna Melnichenko}, {and} \bibinfo{person}{Louise Molitor}.} \bibinfo{year}{2017}\natexlab{}.
\newblock \showarticletitle{Selfish Network Creation with Non-uniform Edge Cost}. In \bibinfo{booktitle}{\emph{SAGT'17}}. \bibinfo{pages}{160--172}.
\newblock
\urldef\tempurl%
\url{https://doi.org/10.1007/978-3-319-66700-3\_13}
\showDOI{\tempurl}


\bibitem[Chauhan et~al\mbox{.}(2016)]%
        {CLMM16}
\bibfield{author}{\bibinfo{person}{Ankit Chauhan}, \bibinfo{person}{Pascal Lenzner}, \bibinfo{person}{Anna Melnichenko}, {and} \bibinfo{person}{Martin M{\"{u}}nn}.} \bibinfo{year}{2016}\natexlab{}.
\newblock \showarticletitle{On Selfish Creation of Robust Networks}. In \bibinfo{booktitle}{\emph{SAGT'16}}. \bibinfo{pages}{141--152}.
\newblock
\urldef\tempurl%
\url{https://doi.org/10.1007/978-3-662-53354-3\_12}
\showDOI{\tempurl}


\bibitem[Chv{\'{a}}tal(1979)]%
        {Chvatal79}
\bibfield{author}{\bibinfo{person}{Vasek Chv{\'{a}}tal}.} \bibinfo{year}{1979}\natexlab{}.
\newblock \showarticletitle{A Greedy Heuristic for the Set-Covering Problem}.
\newblock \bibinfo{journal}{\emph{Math. Oper. Res.}} \bibinfo{volume}{4}, \bibinfo{number}{3} (\bibinfo{year}{1979}), \bibinfo{pages}{233--235}.
\newblock
\urldef\tempurl%
\url{https://doi.org/10.1287/MOOR.4.3.233}
\showDOI{\tempurl}


\bibitem[Clarkson(1987)]%
        {clarksonApproximation1987}
\bibfield{author}{\bibinfo{person}{K. Clarkson}.} \bibinfo{year}{1987}\natexlab{}.
\newblock \showarticletitle{Approximation Algorithms for Shortest Path Motion Planning}. In \bibinfo{booktitle}{\emph{{{STOC}} '87}}. \bibinfo{pages}{56--65}.
\newblock
\urldef\tempurl%
\url{https://doi.org/10.1145/28395.28402}
\showDOI{\tempurl}


\bibitem[Corbo and Parkes(2005)]%
        {CP05}
\bibfield{author}{\bibinfo{person}{Jacomo Corbo} {and} \bibinfo{person}{David~C. Parkes}.} \bibinfo{year}{2005}\natexlab{}.
\newblock \showarticletitle{The price of selfish behavior in bilateral network formation}. In \bibinfo{booktitle}{\emph{{PODC'05}}}. \bibinfo{pages}{99--107}.
\newblock
\urldef\tempurl%
\url{https://doi.org/10.1145/1073814.1073833}
\showDOI{\tempurl}


\bibitem[Cord{-}Landwehr and Lenzner(2015)]%
        {CL15}
\bibfield{author}{\bibinfo{person}{Andreas Cord{-}Landwehr} {and} \bibinfo{person}{Pascal Lenzner}.} \bibinfo{year}{2015}\natexlab{}.
\newblock \showarticletitle{Network Creation Games: Think Global - Act Local}. In \bibinfo{booktitle}{\emph{MFCS'15}}. \bibinfo{pages}{248--260}.
\newblock
\urldef\tempurl%
\url{https://doi.org/10.1007/978-3-662-48054-0\_21}
\showDOI{\tempurl}


\bibitem[Echzell et~al\mbox{.}(2020)]%
        {EFLM20}
\bibfield{author}{\bibinfo{person}{Hagen Echzell}, \bibinfo{person}{Tobias Friedrich}, \bibinfo{person}{Pascal Lenzner}, {and} \bibinfo{person}{Anna Melnichenko}.} \bibinfo{year}{2020}\natexlab{}.
\newblock \showarticletitle{Flow-Based Network Creation Games}. In \bibinfo{booktitle}{\emph{IJCAI'20}}. \bibinfo{pages}{139--145}.
\newblock
\urldef\tempurl%
\url{https://doi.org/10.24963/IJCAI.2020/20}
\showDOI{\tempurl}


\bibitem[Ehsani et~al\mbox{.}(2011)]%
        {EhsaniFMSSSS11}
\bibfield{author}{\bibinfo{person}{Shayan Ehsani}, \bibinfo{person}{MohammadAmin Fazli}, \bibinfo{person}{Abbas Mehrabian}, \bibinfo{person}{Sina~Sadeghian Sadeghabad}, \bibinfo{person}{MohammadAli Safari}, \bibinfo{person}{Morteza Saghafian}, {and} \bibinfo{person}{Saber ShokatFadaee}.} \bibinfo{year}{2011}\natexlab{}.
\newblock \showarticletitle{On a bounded budget network creation game}. In \bibinfo{booktitle}{\emph{SPAA'11}}. \bibinfo{pages}{207--214}.
\newblock
\urldef\tempurl%
\url{https://doi.org/10.1145/1989493.1989523}
\showDOI{\tempurl}


\bibitem[Eidenbenz et~al\mbox{.}(2006)]%
        {eidenbenzEquilibria2006}
\bibfield{author}{\bibinfo{person}{Stephan~J. Eidenbenz}, \bibinfo{person}{V.~S.~Anil Kumar}, {and} \bibinfo{person}{Sibylle Zust}.} \bibinfo{year}{2006}\natexlab{}.
\newblock \showarticletitle{Equilibria in {{Topology Control Games}} for {{Ad Hoc Networks}}}.
\newblock \bibinfo{journal}{\emph{Mobile Networks and Applications}} \bibinfo{volume}{11}, \bibinfo{number}{2} (\bibinfo{year}{2006}), \bibinfo{pages}{143--159}.
\newblock
\urldef\tempurl%
\url{https://doi.org/10.1007/s11036-005-4468-y}
\showDOI{\tempurl}


\bibitem[Fabrikant et~al\mbox{.}(2003)]%
        {fabrikantNetwork2003}
\bibfield{author}{\bibinfo{person}{Alex Fabrikant}, \bibinfo{person}{Ankur Luthra}, \bibinfo{person}{Elitza Maneva}, \bibinfo{person}{Christos Papadimitriou}, {and} \bibinfo{person}{Scott Shenker}.} \bibinfo{year}{2003}\natexlab{}.
\newblock \showarticletitle{On a {{Network Creation Game}}}. In \bibinfo{booktitle}{\emph{PODC'03}}. \bibinfo{pages}{347--351}.
\newblock
\urldef\tempurl%
\url{https://doi.org/10.1145/872035.872088}
\showDOI{\tempurl}


\bibitem[Fakcharoenphol et~al\mbox{.}(2004)]%
        {FakcharoenpholRT04}
\bibfield{author}{\bibinfo{person}{Jittat Fakcharoenphol}, \bibinfo{person}{Satish Rao}, {and} \bibinfo{person}{Kunal Talwar}.} \bibinfo{year}{2004}\natexlab{}.
\newblock \showarticletitle{A tight bound on approximating arbitrary metrics by tree metrics}.
\newblock \bibinfo{journal}{\emph{J. Comput. Syst. Sci.}} \bibinfo{volume}{69}, \bibinfo{number}{3} (\bibinfo{year}{2004}), \bibinfo{pages}{485--497}.
\newblock
\urldef\tempurl%
\url{https://doi.org/10.1016/J.JCSS.2004.04.011}
\showDOI{\tempurl}


\bibitem[Friedemann et~al\mbox{.}(2021)]%
        {friedemannEfficiency2021}
\bibfield{author}{\bibinfo{person}{Wilhelm Friedemann}, \bibinfo{person}{Tobias Friedrich}, \bibinfo{person}{Hans Gawendowicz}, \bibinfo{person}{Pascal Lenzner}, \bibinfo{person}{Anna Melnichenko}, \bibinfo{person}{Jannik Peters}, \bibinfo{person}{Daniel Stephan}, {and} \bibinfo{person}{Michael Vaichenker}.} \bibinfo{year}{2021}\natexlab{}.
\newblock \showarticletitle{Efficiency and {{Stability}} in {{Euclidean Network Design}}}. In \bibinfo{booktitle}{\emph{{{SPAA}} '21}}. \bibinfo{pages}{232--242}.
\newblock
\urldef\tempurl%
\url{https://doi.org/10.1145/3409964.3461807}
\showDOI{\tempurl}


\bibitem[Friedrich et~al\mbox{.}(2023)]%
        {FGLZ23}
\bibfield{author}{\bibinfo{person}{Tobias Friedrich}, \bibinfo{person}{Hans Gawendowicz}, \bibinfo{person}{Pascal Lenzner}, {and} \bibinfo{person}{Arthur Zahn}.} \bibinfo{year}{2023}\natexlab{}.
\newblock \showarticletitle{The Impact of Cooperation in Bilateral Network Creation}. In \bibinfo{booktitle}{\emph{PODC'23}}. \bibinfo{pages}{321--331}.
\newblock
\urldef\tempurl%
\url{https://doi.org/10.1145/3583668.3594588}
\showDOI{\tempurl}


\bibitem[Fu(1968)]%
        {fu1968dominating}
\bibfield{author}{\bibinfo{person}{Yumin Fu}.} \bibinfo{year}{1968}\natexlab{}.
\newblock \showarticletitle{Dominating set and converse dominating set of a directed graph}.
\newblock \bibinfo{journal}{\emph{The American Mathematical Monthly}} \bibinfo{volume}{75}, \bibinfo{number}{8} (\bibinfo{year}{1968}), \bibinfo{pages}{861--863}.
\newblock


\bibitem[Garey and Johnson(1979)]%
        {garey1979computers}
\bibfield{author}{\bibinfo{person}{Michael~R. Garey} {and} \bibinfo{person}{David~S. Johnson}.} \bibinfo{year}{1979}\natexlab{}.
\newblock \bibinfo{booktitle}{\emph{Computers and Intractability: A Guide to the Theory of NP-Completeness}}.
\newblock


\bibitem[Karp(1972)]%
        {karpReducibility1972}
\bibfield{author}{\bibinfo{person}{Richard~M. Karp}.} \bibinfo{year}{1972}\natexlab{}.
\newblock \bibinfo{booktitle}{\emph{Reducibility among Combinatorial Problems}}.
\newblock \bibinfo{pages}{85–103}.
\newblock
\urldef\tempurl%
\url{https://doi.org/10.1007/978-1-4684-2001-2_9}
\showDOI{\tempurl}


\bibitem[Kawald and Lenzner(2013)]%
        {KL13}
\bibfield{author}{\bibinfo{person}{Bernd Kawald} {and} \bibinfo{person}{Pascal Lenzner}.} \bibinfo{year}{2013}\natexlab{}.
\newblock \showarticletitle{On dynamics in selfish network creation}. In \bibinfo{booktitle}{\emph{{SPAA} '13}}. \bibinfo{pages}{83--92}.
\newblock
\urldef\tempurl%
\url{https://doi.org/10.1145/2486159.2486185}
\showDOI{\tempurl}


\bibitem[Keil(1988)]%
        {keilApproximating1988}
\bibfield{author}{\bibinfo{person}{J.~Mark Keil}.} \bibinfo{year}{1988}\natexlab{}.
\newblock \showarticletitle{Approximating the Complete {{Euclidean}} Graph}. In \bibinfo{booktitle}{\emph{{{SWAT}} 88}}. \bibinfo{pages}{208--213}.
\newblock
\urldef\tempurl%
\url{https://doi.org/10.1007/3-540-19487-8_23}
\showDOI{\tempurl}


\bibitem[Laoutaris et~al\mbox{.}(2008)]%
        {LaoutarisPRST08}
\bibfield{author}{\bibinfo{person}{Nikolaos Laoutaris}, \bibinfo{person}{Laura~J. Poplawski}, \bibinfo{person}{Rajmohan Rajaraman}, \bibinfo{person}{Ravi Sundaram}, {and} \bibinfo{person}{Shang{-}Hua Teng}.} \bibinfo{year}{2008}\natexlab{}.
\newblock \showarticletitle{Bounded budget connection {(BBC)} games or how to make friends and influence people, on a budget}. In \bibinfo{booktitle}{\emph{PODC'08}}. \bibinfo{pages}{165--174}.
\newblock
\urldef\tempurl%
\url{https://doi.org/10.1145/1400751.1400774}
\showDOI{\tempurl}


\bibitem[Lenzner(2012)]%
        {lenznerGreedy2012}
\bibfield{author}{\bibinfo{person}{Pascal Lenzner}.} \bibinfo{year}{2012}\natexlab{}.
\newblock \showarticletitle{Greedy {{Selfish Network Creation}}}. In \bibinfo{booktitle}{\emph{WINE'12}}. \bibinfo{pages}{142--155}.
\newblock
\urldef\tempurl%
\url{https://doi.org/10.1007/978-3-642-35311-6_11}
\showDOI{\tempurl}


\bibitem[Mauve et~al\mbox{.}(2001)]%
        {MauveWH01}
\bibfield{author}{\bibinfo{person}{Martin Mauve}, \bibinfo{person}{Joerg Widmer}, {and} \bibinfo{person}{Hannes Hartenstein}.} \bibinfo{year}{2001}\natexlab{}.
\newblock \showarticletitle{A survey on position-based routing in mobile ad hoc networks}.
\newblock \bibinfo{journal}{\emph{{IEEE} Netw.}} \bibinfo{volume}{15}, \bibinfo{number}{6} (\bibinfo{year}{2001}), \bibinfo{pages}{30--39}.
\newblock
\urldef\tempurl%
\url{https://doi.org/10.1109/65.967595}
\showDOI{\tempurl}


\bibitem[Meirom et~al\mbox{.}(2014)]%
        {MeiromMO14}
\bibfield{author}{\bibinfo{person}{Eli~A. Meirom}, \bibinfo{person}{Shie Mannor}, {and} \bibinfo{person}{Ariel Orda}.} \bibinfo{year}{2014}\natexlab{}.
\newblock \showarticletitle{Network formation games with heterogeneous players and the internet structure}. In \bibinfo{booktitle}{\emph{{EC}'14}}. \bibinfo{pages}{735--752}.
\newblock
\urldef\tempurl%
\url{https://doi.org/10.1145/2600057.2602862}
\showDOI{\tempurl}


\bibitem[Meirom et~al\mbox{.}(2015)]%
        {MeiromMO15}
\bibfield{author}{\bibinfo{person}{Eli~A. Meirom}, \bibinfo{person}{Shie Mannor}, {and} \bibinfo{person}{Ariel Orda}.} \bibinfo{year}{2015}\natexlab{}.
\newblock \showarticletitle{Formation games of reliable networks}. In \bibinfo{booktitle}{\emph{INFOCOM'15}}. \bibinfo{pages}{1760--1768}.
\newblock
\urldef\tempurl%
\url{https://doi.org/10.1109/INFOCOM.2015.7218557}
\showDOI{\tempurl}


\bibitem[Monderer and Shapley(1996)]%
        {monderer1996potential}
\bibfield{author}{\bibinfo{person}{Dov Monderer} {and} \bibinfo{person}{Lloyd~S Shapley}.} \bibinfo{year}{1996}\natexlab{}.
\newblock \showarticletitle{Potential games}.
\newblock \bibinfo{journal}{\emph{Games and economic behavior}} \bibinfo{volume}{14}, \bibinfo{number}{1} (\bibinfo{year}{1996}), \bibinfo{pages}{124--143}.
\newblock


\bibitem[Moscibroda et~al\mbox{.}(2006)]%
        {moscibrodaTopologies2006}
\bibfield{author}{\bibinfo{person}{Thomas Moscibroda}, \bibinfo{person}{Stefan Schmid}, {and} \bibinfo{person}{Rogert Wattenhofer}.} \bibinfo{year}{2006}\natexlab{}.
\newblock \showarticletitle{On the Topologies Formed by Selfish Peers}. In \bibinfo{booktitle}{\emph{{{PODC}} '06}}. \bibinfo{pages}{133--142}.
\newblock
\urldef\tempurl%
\url{https://doi.org/10.1145/1146381.1146403}
\showDOI{\tempurl}


\bibitem[Narasimhan and Smid(2007)]%
        {narasimhanGeometric2007}
\bibfield{author}{\bibinfo{person}{Giri Narasimhan} {and} \bibinfo{person}{Michiel Smid}.} \bibinfo{year}{2007}\natexlab{}.
\newblock \bibinfo{booktitle}{\emph{Geometric Spanner Networks}}.
\newblock \bibinfo{publisher}{Cambridge University Press}.
\newblock
\urldef\tempurl%
\url{https://doi.org/10.1017/cbo9780511546884}
\showDOI{\tempurl}


\bibitem[Nisan et~al\mbox{.}(2007)]%
        {AGT-book}
\bibfield{editor}{\bibinfo{person}{Noam Nisan}, \bibinfo{person}{Tim Roughgarden}, \bibinfo{person}{{\'{E}}va Tardos}, {and} \bibinfo{person}{Vijay~V. Vazirani}} (Eds.). \bibinfo{year}{2007}\natexlab{}.
\newblock \bibinfo{booktitle}{\emph{Algorithmic Game Theory}}.
\newblock \bibinfo{publisher}{Cambridge University Press}.
\newblock
\showISBNx{9780511800481}
\urldef\tempurl%
\url{https://doi.org/10.1017/CBO9780511800481}
\showDOI{\tempurl}


\bibitem[Papadimitriou(2001)]%
        {Papadimitriou01}
\bibfield{author}{\bibinfo{person}{Christos~H. Papadimitriou}.} \bibinfo{year}{2001}\natexlab{}.
\newblock \showarticletitle{Algorithms, games, and the internet}. In \bibinfo{booktitle}{\emph{STOC'01}}. \bibinfo{pages}{749--753}.
\newblock
\urldef\tempurl%
\url{https://doi.org/10.1145/380752.380883}
\showDOI{\tempurl}


\bibitem[Ramanathan and Redi(2002)]%
        {RamanathanR02}
\bibfield{author}{\bibinfo{person}{Ram Ramanathan} {and} \bibinfo{person}{Jason Redi}.} \bibinfo{year}{2002}\natexlab{}.
\newblock \showarticletitle{A brief overview of ad hoc networks: challenges and directions}.
\newblock \bibinfo{journal}{\emph{{IEEE} Commun. Mag.}} \bibinfo{volume}{40}, \bibinfo{number}{5} (\bibinfo{year}{2002}), \bibinfo{pages}{20--22}.
\newblock
\urldef\tempurl%
\url{https://doi.org/10.1109/MCOM.2002.1006968}
\showDOI{\tempurl}


\bibitem[Ruppert and Seidel(1991)]%
        {ruppertApproximating1991}
\bibfield{author}{\bibinfo{person}{Jim Ruppert} {and} \bibinfo{person}{Raimung Seidel}.} \bibinfo{year}{1991}\natexlab{}.
\newblock \showarticletitle{Approximating the D-Dimensional Complete {{Euclidean}} Graph}. In \bibinfo{booktitle}{\emph{3rd {{Canadian Conference}} on {{Computational Geometry}}}}.
\newblock


\bibitem[Shoham and Leyton{-}Brown(2009)]%
        {MAS_book}
\bibfield{author}{\bibinfo{person}{Yoav Shoham} {and} \bibinfo{person}{Kevin Leyton{-}Brown}.} \bibinfo{year}{2009}\natexlab{}.
\newblock \bibinfo{booktitle}{\emph{Multiagent Systems - Algorithmic, Game-Theoretic, and Logical Foundations}}.
\newblock \bibinfo{publisher}{Cambridge University Press}.
\newblock
\showISBNx{978-0-521-89943-7}


\bibitem[Tovey(1984)]%
        {toveySimplified1984}
\bibfield{author}{\bibinfo{person}{Craig~A. Tovey}.} \bibinfo{year}{1984}\natexlab{}.
\newblock \showarticletitle{A Simplified {{NP-complete}} Satisfiability Problem}.
\newblock \bibinfo{journal}{\emph{Discrete Applied Mathematics}} \bibinfo{volume}{8}, \bibinfo{number}{1} (\bibinfo{year}{1984}), \bibinfo{pages}{85--89}.
\newblock
\urldef\tempurl%
\url{https://doi.org/10.1016/0166-218X(84)90081-7}
\showDOI{\tempurl}


\end{thebibliography}

\end{document}